\date{November 24, 2020}
\numberwithin{equation}{section}
\newtheorem{theorem}{Theorem}[section]
\newtheorem{lemma}[theorem]{Lemma}
\newtheorem{prop}[theorem] {Proposition}
\newtheorem{cor}[theorem]  {Corollary}
\theoremstyle{definition}
\newtheorem{assumption}{Assumption}
\theoremstyle{remark}
\newtheorem{remark}{Remark}
\newcommand{\N}{\mathbb{N}}
\newcommand{\R}{\mathbb{R}}
\newcommand{\Z}{\mathbb{Z}}
\newcommand{\C}{\mathbb{C}}
\newcommand{\Q}{\mathbb{Q}}
\renewcommand{\P}{\mathbb{P}}
\newcommand{\E}{\mathbb{E}}
\newcommand{\dd}{\,\mathrm{d}} %integration d
\newcommand{\eps}{\varepsilon}
\newcommand{\smfrac}[2]{\textstyle{\frac {#1}{#2}}}
\newcommand{\vect}[1]{\boldsymbol{#1}}
\DeclareMathOperator{\supp}{supp} % support 
\DeclareMathOperator{\var}{var}  %variance
\newcommand{\be}{\begin{equation}}
\newcommand{\ee}{\end{equation}}
\newcommand{\ba}{\begin{equation} \begin{aligned}}
\newcommand{\ea}{\end{aligned}\end{equation}}
\newcommand{\bes}{\begin{equation*}}
\newcommand{\ees}{\end{equation*}}
\newcommand{\cl}{\ssup{R}}
\newcommand{\ssup}[1] {{\scriptscriptstyle{({#1}})}}
\def\1{{\mathchoice {1\mskip-4mu\mathrm l}      % Blackboard bold 1
{1\mskip-4mu\mathrm l}
{1\mskip-4.5mu\mathrm l} {1\mskip-5mu\mathrm l}}}
\newcommand{\e}   {{\operatorname e }}
\newcommand{\heap}[2]{\genfrac{}{}{0pt}{}{#1}{#2}}
\begin{document}

%%%%%%%%%%%%%%
% Front matter %% 
\title[Distribution of cracks in a chain of atoms at low temperature]{\Large Distribution of cracks \\\medskip in a chain of atoms at low temperature}

\author{Sabine Jansen}
\address{Mathematisches Institut, Ludwig-Maximilians-Universit\"at M\"unchen, Theresienstra\ss e 39, 80333 M\"unchen, Germany}
\email{jansen@math.lmu.de}
\author{Wolfgang K{\"o}nig}
\address{Weierstrass Institute Berlin, Mohrenstr. 39, 10117 Berlin and Technische Universit{\"a}t Berlin, Str. des 17. Juni, 10623 Berlin, Germany}
\email{koenig@wias-berlin.de}
\author{Bernd Schmidt}
\address{Institut f{\"u}r Mathematik, Universit{\"a}t Augsburg, Universit{\"a}tsstr. 14, 86159 Augsburg, Germany}
\email{bernd.schmidt@math.uni-augsburg.de}
\author{Florian Theil}
\address{Mathematics Institute, University of Warwick, Coventry, CV4 7AL, UK}
\email{F.Theil@warwick.ac.uk}
%

%%%%%%%%
\begin{abstract}
	We consider a one-dimensional classical many-body system with interaction potential of Lennard-Jones type in the thermodynamic limit at low temperature $1/\beta\in(0,\infty)$.  
	The ground state is a periodic lattice.  We show that when 
	the density is strictly smaller than the density of the ground state lattice, the  system with $N$ particles fills space by alternating approximately crystalline domains (clusters) with
	empty domains (voids) due to cracked bonds.  The number of domains is of the order of $N\exp(- \beta e_\mathrm{surf}/2)$ with $e_\mathrm{surf}>0$ a surface energy. 
	
	For the proof, the system is mapped to an effective model, which is a low-density lattice gas of defects. The results require conditions on the interactions between defects. We succeed in verifying these conditions for next-nearest neighbor interactions, applying recently derived uniform estimates of correlations. \\

	\noindent \emph{Keywords}: Equilibrium statistical mechanics; atomistic models of elasticity; fracture; lattice gas of defects. \\
	
	\noindent \emph{Mathematics Subject Classification (2010)}:
	 82B21, %[Stat. Mech: Continuum models]
	  74B20, %[nonlinear elasticity]
	   74G65, %[Energy minimization],
	  60F10. %large deviations
\end{abstract}

\maketitle

\section{Introduction}\label{sec:intro} 

A fundamental problem in statistical and solid mechanics is to gain insight into the structure of matter and to derive material properties from basic atomistic interaction models. A complete  theoretic understanding of why atoms at low energy arrange in (almost) periodic patterns and how defects form appears to be out of reach in full generality to date. In view of this state of affairs, recent years have witnessed remarkable progress under simplifying assumptions and shed light on a number of important model cases. 

A basic, yet non-trivial model problem is given by one-dimensional chains of atoms. Assuming that particles interact via a classical pair interaction potential such as the Lennard-Jones potential, their crystallization in ground states at zero temperature has been well understood since the pioneering contributions \cite{Ventevogel:78,gardner-radin79,Radin:83,RadinSchulman:83}. Even results in a purely quantum mechanical framework have been obtained more recently in \cite{BlancLeBris:02}. Allowing for configurations whose energy is slightly larger while keeping the temperature to be zero, one is led to considering chains of atoms that may undergo fracture. Also this regime is well examined by now, in particular for systems with nearest neighbor (NN) and next-to-nearest neighbor (NNN) interactions, see \cite{braides-cicalese07,scardia-schloem-zanini11,hudson2013}. In contrast to the case of pure nearest neighbor interactions such as \cite{truskinovsky96,braides-dalmaso-garroni99,BraidesGelli:02}, such models show a non-trivial competition between NN bonds lying in the convex region and NNN bonds occupying the concave region of the interaction potential. This, in particular, leads to nontrivial surface corrections and boundary layers. Extension to more general finite range interactions are considered in \cite{BraidesLewOrtiz:06,schaeffner-schloemerkemper2018}, a detailed analysis of boundary layers and surface energies is provided in \cite{jkst19}. 

We also mention that, by way of contrast, much less is known in higher dimensions, still within the zero temperature regime. We refer to \cite{HeitmannRadin:80,Radin:81,Sueto:05,Theil:06,FlatleyTheil:15} for crystallization results for specific pair potentials and to \cite{AlicandroFocardiGelli:00,FriedrichSchmidt:14,FriedrichSchmidt:15a,FriedrichSchmidt:15b} for partial results relating atomistic models to a corresponding variational continuum Griffith functional. 

Still in a one-dimensional setting, in our recent contribution \cite{jkst19} we proved that at small but non-zero temperature a chain of atoms under positive pressure is well approximated by the corresponding ground state configuration. In the present article we show that the picture is rather different for a chain of atoms in thermal equilibrium at small non-zero temperature and vanishing pressure. Indeed, at positive temperature, no matter how small, there is no global crystallization in one-dimensional systems for typical interaction potentials. Yet we will see that an alternating pattern of crystalline clusters and cracks emerges whose statistics can be precisely described in terms of an associated surface/defect energy. 

We assume that atoms interact via a Lennard-Jones type potential, energy minimizers have their interatomic spacing (in the bulk) equal to the minimizer $a>0$ of a Cauchy-Born energy density. Thermal equilibrium is investigated within the  framework of classical equilibrium statistical mechanics \cite{ruelle-book69, presutti-book}. This means that we study families of probability measures indexed by the number $N$ of atoms, the length $L$ of the chain, and a positive parameter $\beta>0$ called inverse temperature. Each configuration has  probability weight proportional to $\exp( - \beta U)$, with $U$ the energy of the configuration. Our results are formulated as asymptotic estimates for finite $N,L,\beta$ but they should be read with the following limits in mind: thermodynamic limit $L,N\to \infty$ at fixed $\ell = L/N$ first, low-temperature limit $\beta \to \infty$ second. 

Our main result (Theorem~\ref{thm:lowdens}) roughly says that for elongated chains i.e. $L/N=\ell>a$, and large $\beta$, the chain of atoms typically fills space by alternating approximately crystalline domains with empty regions of space. By approximately crystalline domain we mean a cluster of consecutive atoms with spacing approximately equal to the optimum ground state spacing $a$, except at both ends of the cluster where boundary layers may form. Empty regions of space occur when two consecutive atoms $x_j \leq x_{j+1}$ are separated by a large spacing $z_j = x_{j+1} - x_j$, which we call \emph{gaps}, \emph{voids}, \emph{cracks}, or \emph{broken bonds}.   We identify a surface free energy $e_\mathrm{surf}^\ssup{R}(\beta)$ for a threshold parameter $R$ which corresponds to a critical length beyond which atomic bonds are considered broken and whose $R$-dependence vanishes for $\beta \to \infty$ and show that: (i) The number of cracks (hence also the number of crystalline clusters) is of  the order of $N\sqrt{\ell-a}\, \exp( - \beta e_\mathrm{surf}^\ssup{R}(\beta) /2)$, (ii) the typical length of a crack is of the order of $\sqrt{\ell-a}\, \exp( \beta e_\mathrm{surf}^\ssup{R}(\beta)/2)$, and (iii) the typical number of atoms in a crystalline cluster is of the order of $\exp( \beta e_\mathrm{surf}^\ssup{R}(\beta)/2)/\sqrt{\ell-a}$. A posteriori we will see that this asymptotic behavior is in fact essentially independent of the choice of $R$ for large $\beta$. 

In particular, the number of cracks is not bounded but instead proportional to the number of atoms in the chain, moreover each crack is of microscopic length even though the length is exponentially large in $\beta$. This behavior is similar to one-dimensional Ising chains with nearest neighbor interaction at low temperature \cite{schonmann-tanaka1990} or with Kac interactions and small Kac parameter \cite{cassandro-orlandi-presutti1993,cassandro-merola-presutti2018}. For the Ising model intervals of positive magnetization and negative magnetization play a role analogous to our crystalline clusters and empty domains. 

Let us briefly explain what makes our result demanding. One-dimensional Gibbs measures with finite-range interactions are easily treated with transfer operators and infinite-dimensional versions of Perron-Frobenius theory; absence of phase transitions, analyticity of thermodynamic potentials, and decay of correlations follow right away \cite{ruelle-book69}. The challenge taken up here is to characterize how the objects whose existence is trivially guaranteed by Perron-Frobenius theory depend on the inverse temperature $\beta$. In principle it should be possible to do so by studying the $\beta$-dependency of the transfer operator. However, as pointed out by Cassandro, Merola, and Presutti already in the context of the Ising model, ``to carry out the whole program along these lines looks maybe possible but not easy at all''  \cite{cassandro-merola-presutti2018}. 

Therefore we follow a different route and instead map the chain of atoms to an effective model which is a weakly interacting lattice gas of defects, see Sections~\ref{sec:heuristics2} and~\ref{sec:effective-model}.  Sites $j$ of the lattice correspond to labels of nearest-neighbor spacings. A particle or defect is present in the effective model if the original bond is broken, i.e., the gap is large. The presence of a defect comes with a small weight $q_\beta>0$, the {\em effective activity}, which is related to the free energy of formation of the defect (see Remark~\ref{rem:defect-eof}).  Defects separated by $k$ particles have an effective interaction $V_\beta(k)$ which can be written in terms of the energy of the particles enclosed between cracks (Eqs.~\eqref{eq:interact} and~\eqref{eq:veff}). 

A principal difficulty is to show that the effective interaction between defects can be neglected. This is achieved with Theorem~\ref{thm:codec}, which is our main technical result, for chains of atoms with NN and NNN interactions. It crucially relies on bounds on the decay of correlations. Such estimates are highly non-trivial for interactions beyond nearest neighbors since, as alluded to above, boundary layers will form and give rise to non-trivial surface energy contributions. In the case of NN and NNN interactions, a sufficiently strong result has recently been established in \cite[Theorem 2.11]{jkst19}. With a view to the interesting question if our conclusions extend to more general finite range interactions, we formulate our results in a way that allows for such an adaption subject to sufficiently good decay of correlations estimates becoming available. 

The article is organized as follows. Section~\ref{sec:model-results} describes the model and main results. Section~\ref{sec:heuristics} explains key proof ingredients, namely equivalence of ensembles and, on a heuristic level, the effective lattice gas of defects. In addition it proposes an alternative scenario replacing Theorem~\ref{thm:lowdens} when some of our assumptions fail.  Section~\ref{sec:aux} analyzes in detail the corresponding class of effective models. These general results are applied in Section~\ref{sec:pressure} to the constant-pressure ensemble, the results for the canonical ensemble are deduced in Section~\ref{sec:canonical}.

\section{Model and main results}\label{sec:model-results}

\subsection{Model} 
Consider $N$ particles on a line, with positions $0=x_1<x_2<\cdots <x_N=L$ and spacings $z_j=x_{j+1} - x_j$. Neighboring particles and next-to-nearest neighbors interact via a pair potential 
$v\colon[0,\infty) \to \R\cup \{\infty\}$ which is repulsive for short distances (in fact we shall assume the existence of a hard core) and attractive for spacings larger than a unique energy minimizing bond length. The precise assumptions are collected in Section~\ref{sec:ass} below. The total {\em energy of a configuration} $(x_1, \ldots, x_N)$ is 
\[
	\sum_{i=1}^{N-1} v(x_{i+1}-x_i) + \sum_{i=1}^{N-2} v(x_{i+2}-x_i). 
\] 
Since our analysis extends in a straightforward way to more general interactions involving a finite number of $m \in \N$ particles subject to improved estimates on correlations being available, cf.\ the discussion in Section~\ref{sec:intro}, we more generally consider 
\[
	U_N^{\ssup m}(z_1,\ldots,z_{N-1})= \sum_{\heap{1\leq i<j \leq N}{|j-i|\leq m}} v(x_j-x_i)= \sum_{\heap{1\leq i<j \leq N}{|j-i|\leq m}} v(z_i+\cdots + z_{j-1}). 
\]
We allow for $m \in \N \cup\{\infty\}$ and specify explicitly whenever $m < \infty$ or  $m=2$ is exploited, but will sometimes drop the superscript $m$ so as to lighten notation. The canonical partition function at inverse temperature $\beta>0$, with one particle pinned at $x_1=0$ and another at $x_N=L$ is 
\[
	Z_N^{\ssup m}(\beta,L):= \int_{\R_+^{N-1}} \e^{-\beta U_N^{\ssup m}(z_1,\ldots,z_{N-1})}\1_{\{z_1+\cdots + z_{N-2}\leq L\}} \dd z_1\cdots \dd z_{N-2}
\]
where we put $z_{N-1}= L-\sum_{j=1}^{N-2} z_j$. The canonical Gibbs measure is the probability measure $\P_{N,L}^\ssup{m,\beta}$ on $\Omega_{N,L}:= \{\boldsymbol{z}\in \R_+^{N-1}\mid z_1+\cdots + z_{N-1}= L\}$  defined by 
\[
	 \P_{N,L}^\ssup{m,\beta}(A) = 
	\frac{1}{Z_N^{\ssup m}(\beta,L)}\int_{\Delta_{N,L}} \e^{-\beta U_N^{\ssup m}(z_1,\ldots,z_{N-1})}\1_A (z_1,\ldots,z_{N-1}) \dd z_1\cdots \dd z_{N-2}
\]
where $\Delta_{N,L}\subset [0,L]^{N-2}$ is the simplex $z_1+\cdots + z_{N-2}\leq L$ and $z_{N-1}:= L- \sum_{j=1}^{N-2} z_j$. 
Fix an average spacing $\ell>0$. The Helmholtz free energy per particle is
\[ \label{eq:fmlimit}
	f^{\ssup m}(\beta,\ell):= -  \lim_{N\to \infty} \frac{1}{\beta N}\log Z^{\ssup m}_{N}(\beta,\ell N). 
\]
 
The existence of the limit~\eqref{eq:fmlimit} and some basic properties are well-known \cite[Chapter 3]{ruelle-book69}. Moreover $\ell\mapsto f^{\ssup m}(\beta,\ell)$ is convex and continuously differentiable~\cite{dobrushin-minlos67}, see also~\cite[Chapter 3.4.8]{ruelle-book69}. For one-dimensional systems and the pair potentials under consideration, there is no phase transition, and $\ell\mapsto f^{\ssup m}(\beta,\ell)$ is strictly convex and analytic~\cite{gallavotti-miraclesole70,dobrushin73,dobrushin74, cassandro-olivieri81}. The above references refer to the model with $m=\infty$; for finite $m$, such results are easily proven using transfer operators~\cite[Chapter 5.6]{ruelle-book69}.
The {\em pressure} is defined as 
\be \label{eq:pfrel}
	p^{\ssup m}(\beta,\ell) = - \frac{\partial f^{\ssup m}}{\partial \ell}(\beta,\ell).
\ee
In addition to the free energy and pressure, we provide results on the distribution of interparticle spacings. We investigate the following objects. 
Let $R>0$ be some large truncation parameter. We refer to spacings $z_j\geq R$ as \emph{cracks} and to groups of particles enclosed between consecutive cracks as \emph{clusters}. Let 
\begin{equation}\label{clusternumber}
M_N(z_1,\ldots,z_{N-1}) := \#\{i \in \{1,\ldots,N-1\}\colon z_i \geq R\}+1
\end{equation} 
be the number of clusters. For $M_N = n+1$ let $1\leq i_1<\cdots <i_n\leq N-1$ be the indices $i$ for which $z_i\geq R$. We also set $i_0=0$ and $i_{n+1} = N$. Let 
\be\label{empdist-card-length}
	\nu_N = \frac{1}{M_N}\sum_{k=1}^{M_N}\delta_{i_k - i_{k-1}},\quad   
	\widehat \nu_N = \frac{1}{M_N-1} \sum_{j=1}^{N-1} \1_{[R,\infty)}(z_j) \delta_{z_j- R} 
\ee
be the empirical distributions of the cluster cardinalities and of the crack lengths (minus $R$); note that they are probability measures.

\subsection{Assumptions} \label{sec:ass}

In this section we introduce and discuss the four assumptions on the pair potential under which we will be working throughout this article. Their role is threefold. First of all, they ensure that standard results from statistical mechanics concerning the existence of the thermodynamic limit, continuity of the pressure and absence of phase transitions in dimension one apply. Second, they ensure the periodicity of ground states and allow for a transfer of the low-temperature asymptotics from our previous article~\cite{jkst19}. Third, for average spacings $\ell$ larger than the ground state periodicity $a$, they allow us to estimate interactions across cracks and to show that cracks do not aggregate.

\begin{assumption}[on the interaction potential] \label{ass:v1}
The pair potential $v : (0, \infty) \to \R \cup \{+ \infty\}$ with hard core radius $r_{\rm hc} > 0$ is equal to $+ \infty$ on $(0, r_{\rm hc}]$ and a $C^2$ function on $(r_{\rm hc}, \infty)$. There exist $r_{\rm hc} < z_{\min} < z_{\max} < 2 z_{\min}$ and $\alpha_1, \alpha_2 > 0$, $s > 2$ such that the following holds. 
\begin{itemize}
\item[(i)] \emph{Shape of $v$}: 

$z_{\max}$ is the unique minimizer of $v$ and satisfies $v(z_{\max})<0$. Furthermore, $v$ is decreasing on $(0,z_{\max})$ and  increasing and non-positive on $(z_{\max},\infty)$. 

\item[(ii)] \emph{Growth of $v$}: 

$v(z) \ge - \alpha_1 z^{-s}$ for all $z > 0$ and $v(z) + v(z_{\max}) - 2 \alpha_1 \sum_{n=2}^\infty (n z)^{-s} > 0$ for all $z < z_{\min}$.  

\item[(iii)] \emph{Shape of $v''$}:

 $v''$ is decreasing on $[z_{\min}, z_{\max}]$ and increasing and non-positive on $[2 z_{\min}, \infty)$.

\item[(iv)] \emph{Growth of $v''$}:

 $v''(z) \ge -\alpha_2 z^{-s-2}$ for all $z > r_{\rm hc}$ and $v''(z_{\max}) + \sum_{n=2}^\infty n^2 v''(n z_{\min}) > 0$. 
 
\item[(v)] \emph{Behavior near $r_\mathrm{hc}$}: 

 $\lim_{r\searrow r_\mathrm{hc}} v(r) = \infty$. 
 
\item[(vi)] \emph{Size of $r_\mathrm{hc}$}: 
 
 $v(r)\leq 0$ for all $r\geq 2r_\mathrm{hc}$. 
\end{itemize}
\end{assumption} 

\noindent Assumptions~\ref{ass:v1}(i)--(v) are rather generic conditions on a pair potential with hard core. They are imported from~\cite{jkst19} and we refer to~\cite{jkst19} for a thorough discussion of these assumptions. They also allow us to estimate interactions across cracks: Indeed, under Assumption~\ref{ass:v1}(i)--(v), there exists a constant $C\geq 0$ such that, for  any $N\in\N$, $R\geq z_{\max}$, $m\in\N\cup\{\infty\}$, and $z_1,\ldots,z_{N-1}$ satisfying $z_k \geq R$ for some $k\in \{1,\ldots,N-1\}$,
\be \label{eq:across}
	- \frac{C}{z_k^{s-2}}\leq  \sum_{\heap{i,j\colon 1\leq i \leq k< j \leq N-1}{|j-i|\leq m}}v(z_i+\cdots + z_{j-1}) \leq 0.
\ee
We leave the elementary proof to the reader. Assumption~\ref{ass:v1}(vi), which relates the hard core radius to the full repulsive zone of $v$, is a mild technical assumption which enters in the proof of Theorems~\ref{thm:codec} and \ref{thm:ge} below. In view of the fact that typical next-to-nearest neighbor bonds are attractive it might be achieved upon enlarging $r_\mathrm{hc}$ while keeping the essential properties of the model. For compactly supported potentials, interactions across cracks vanish if $R$ is longer than the interaction range. Our next assumption ensures this, more generally, it ensures that the entropic push for large crack lengths wins over the attractive part of the interaction; it enters in Lemma~\ref{lem:gapintegral} below.
\medskip

\begin{assumption}[on the truncation parameter] \label{ass:R} 
	The truncation parameter $R\geq z_{\max}$ is so large that it satisfies 
	\begin{itemize} 
		\item $R\geq \sup\supp(v)$ if $v$ is compactly supported,
		\item $C/R^{s-2} < e_\mathrm{surf}/2$ otherwise, with $C>0$ as in Eq.~\eqref{eq:across} and $e_\mathrm{surf}>0$ as in Eq.~\eqref{eq:surface} below.
	\end{itemize} 
\end{assumption} 
\noindent In fact, as our analysis in the second case does not make use of an infinite interaction range, the second condition is sufficient for compactly supported potentials as well. We distinguish the case $R\geq \sup\supp(v)$ as it allows for an approximation with an \emph{ideal} lattice gas. 

	We close this subsection commenting on the role of the interaction parameter $m$. A restriction to finite range $m<\infty$ is natural for compactly supported potentials with a hard core. Indeed, if $v(r)=\infty$ for $r\leq r_\mathrm{hc}$ for some $r_\mathrm{hc}>0$ and $v(r) = 0$ for $r\geq R^*$, then any configuration $z=(z_1,\dots,z_{N-1})$ with finite energy satisfies $v(z_i+\cdots + z_{j-i})=0$ whenever $|j-i|r_\mathrm{hc}> R^*$. Hence, $U_N^{\ssup{\lceil R^*/r_\mathrm{hc}\rceil}}(z)=U_N^{\ssup{\infty}}(z)$. For $v$ with unbounded range a restriction to finite (and in fact small $m$) is quite common in atomistic models of solid state physics and indeed less restrictive than a truncation of the potential $v$ itself as leading order contributions to crack energies are still kept. As our main theorems are proven for NN and NNN interactions, we introduce the following assumption. 
	
\begin{assumption}[on the interaction parameter] \label{ass:m} 
	Suppose that $m=2$.  
\end{assumption} 

	While most auxiliary results apply to $m \ge 3$ as well, Assumption~\ref{ass:m} enters in the proof of Theorem~\ref{thm:codec} below, where we need a good control on the $\beta$-dependence of correlations~\cite[Theorem 2.11]{jkst19}. Improved correlation bounds for general interaction range may help get rid of the restriction, but this is beyond this article's scope.

\subsection{Results}\label{sec-results}

In this section, we formulate our two main results on the large-$N$ behavior of the $N$-particle system  at low temperature: Theorem~\ref{thm:fe} on the free energy and the pressure, and  Theorem~\ref{thm:lowdens} on the statistics of cluster sizes, crack lengths, and number of clusters. 

First, we need to recall results from \cite{jkst19}. 
Let 
\[
	E_N = \inf_{\R_+^{N-1}} U_N
\]
be the $N$-particle ground state energy. The following limits exist:
\be \label{eq:surface}
	e_0 = \lim_{N\to \infty}\frac{E_N}{N}\in(-\infty,0)\qquad\mbox{and}\qquad e_\mathrm{surf} =\lim_{N\to \infty} (E_N - N e_0)\in(0,\infty).
\ee
The {\em ground state energy per particle} $e_0$ is characterized with the help of the {\em Cauchy-Born density} 
\[
	W(r):= \sum_{k=1}^m v(kr).
\]
as $e_0 = \inf_{r>0} W(r) = W(a)$ with $a\in (z_\mathrm{min},z_\mathrm{max})$ the unique global minimizer of $W$ \cite[Section 2.1]{jkst19}. The {\em surface energy} $e_{\rm surf}$ accounts for boundary layers at the end of long chains. The reader may also think of $e_\mathrm{surf}$ as the energy of a defect consisting of a large spacing $z_j$, see Remark~\ref{rem:defect-eof} below. 

For positive temperature, analogous quantities and assertions are collected in the following proposition. The truncated partition function appearing on the left-hand side of \eqref{e0Rdef} will play an important role in the present article.

\begin{prop}[\cite{jkst19}]\label{prop-oldresults}
Under Assumptions~\ref{ass:v1}(i)--(v), for every $\beta>0$ and $0 \le p < |v(z_{\max})|/z_{\max}$, there are uniquely defined quantities $g_\mathrm{surf}^\ssup{R}(\beta,p)$, $g^\ssup{R}(\beta,p)$ such that, as $N\to \infty$, 
\begin{equation}\label{e0Rdef}
	- \frac{1}{\beta} \log \Bigl( \int_{[0,R]^{N-1} } \e^{-\beta [U_N(z_1,\ldots,z_{N-1})+p\sum_{j=1}^{N-1}z_j]} \dd z_1\cdots \dd z_{N-1}\Bigr) 
		= N g^\ssup{R}(\beta,p) + g_\mathrm{surf}^\ssup{R}(\beta,p) + o(1). 
\end{equation}
Moreover, writing $e_\mathrm{surf}^\ssup{R}(\beta) = g_\mathrm{surf}^\ssup{R}(\beta,0) $, $e_0^\ssup{R}(\beta) = g^\ssup{R}(\beta,0)$, 
\begin{equation}\label{e0betalimits}
\lim_{\beta\to \infty} e_0^\ssup{R}(\beta) = e_0\qquad\mbox{and}\qquad\lim_{\beta\to \infty} e_\mathrm{surf}^\ssup{R} (\beta) = e_\mathrm{surf}. 
\end{equation}
\end{prop}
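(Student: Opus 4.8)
This proposition is quoted from \cite{jkst19}, where it is proved in detail; I indicate the route I would follow. For the first part --- existence of a bulk term $g^\ssup{R}(\beta,p)$ and a surface term $g_\mathrm{surf}^\ssup{R}(\beta,p)$ with $o(1)$ (in fact exponentially small) error --- I would use a transfer operator. Confining every spacing to the bounded interval $[0,R]$, the integral in \eqref{e0Rdef} can be recast as $\la\varphi_L,\,T^{\,N-m}\varphi_R\ra$ for a transfer operator $T=T_{\beta,p,R}$ built from the weight $\e^{-\beta[U+p\sum z]}$ on the bounded set of admissible block spacings (for finite-range $m$, in particular $m=2$ under Assumption~\ref{ass:m}, $T$ is an integral operator on $L^2([0,R]^{m-1})$, possibly after passing to a fixed power; the case $m=\infty$ requires instead the analyticity/cluster-expansion machinery for one-dimensional systems). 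This weight is bounded and continuous, since $\e^{-\beta v(z)}\to 0$ as $z\searrow r_\mathrm{hc}$ by Assumption~\ref{ass:v1}(v), and the boundary vectors $\varphi_L,\varphi_R\ge 0$ encode the pinned ends. Then $T$ (or a fixed power of it) is Hilbert--Schmidt with strictly positive irreducible kernel, so by Krein--Rutman it has a simple leading eigenvalue $\lambda(\beta,p)>0$ with strictly positive eigenfunction $\psi$ and a spectral gap. Expanding $\varphi_L,\varphi_R$ in the spectral decomposition gives $\la\varphi_L,T^{N-m}\varphi_R\ra=\lambda^{N-m}\la\varphi_L,\psi\ra\la\psi,\varphi_R\ra(1+O(\e^{-cN}))$, whence \eqref{e0Rdef} holds with $g^\ssup{R}(\beta,p)=-\tfrac1\beta\log\lambda(\beta,p)$ and $g_\mathrm{surf}^\ssup{R}(\beta,p)$ equal to $-\tfrac1\beta\log$ of the remaining prefactor; positivity of $\psi$ and $\varphi_L,\varphi_R\ge 0$ force $\la\varphi_L,\psi\ra,\la\psi,\varphi_R\ra>0$, so both quantities are finite and unique.

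For $e_0^\ssup{R}(\beta)=g^\ssup{R}(\beta,0)=-\tfrac1\beta\log\lambda(\beta,0)$ I would establish $\lim_{\beta\to\infty}e_0^\ssup{R}(\beta)=e_0$ by matching bounds. Bounding the integrand in \eqref{e0Rdef} (with $p=0$) by $\e^{-\beta E_N}R^{N-1}$ and using $E_N/N\to e_0$ from \eqref{eq:surface} gives $e_0^\ssup{R}(\beta)\ge e_0-(\log R)/\beta$, hence $\liminf_{\beta\to\infty}e_0^\ssup{R}(\beta)\ge e_0$. For the matching upper bound I would restrict the integration to a fixed-size neighborhood of the periodic configuration with spacing $a$; since $r\mapsto W(r)$ has a nondegenerate minimum at $a$ --- which follows from the convexity and Hessian conditions Assumption~\ref{ass:v1}(iii),(iv), making the Hessian of $U_N$ uniformly positive definite on the relevant set --- a Gaussian/Laplace estimate bounds the restricted integral from below by $\e^{-\beta N e_0-O(N\log\beta/\beta)}$, so $e_0^\ssup{R}(\beta)\le e_0+O((\log\beta)/\beta)$. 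Combining, $e_0^\ssup{R}(\beta)\to e_0$.

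The surface term is the delicate part, and I expect it to be the main obstacle. Subtracting the bulk contribution, $e_\mathrm{surf}^\ssup{R}(\beta)$ equals $-\tfrac1\beta\log$ of a ratio of partition functions that isolates the two boundary layers at the ends of the chain. I would compare this against the zero-temperature picture of \cite{jkst19}: the $T=0$ boundary layer is the unique minimizer of a semi-infinite surface-energy functional and is nondegenerate, with second variation bounded below and an influence that decays exponentially into the bulk. A quantitative Laplace/Varadhan estimate, uniform in $N$, should then show that at positive temperature the boundary layers concentrate around the $T=0$ profiles and contribute $e_\mathrm{surf}+o(1)$; the hard-core boundary $z=r_\mathrm{hc}$ is harmless by Assumption~\ref{ass:v1}(v)--(vi), and the truncation $z\le R$ is harmless because for $R\ge z_{\max}$ the boundary layer lives strictly below $R$. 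The crux is exactly this uniform-in-$N$ Laplace estimate: one must rule out that slow fluctuations spreading from the ends into the interior accumulate an $O(1)$ error in the surface free energy, which is precisely where decay-of-correlations and second-variation bounds (e.g.\ \cite[Theorem 2.11]{jkst19}) are essential.
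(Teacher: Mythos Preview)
The paper does not give its own proof of this proposition; it is quoted from \cite{jkst19}, and the only argument supplied here is the remark immediately following the statement: the results of \cite{jkst19} are proved there for fixed pressure $p>0$, where positivity of $p$ is used only to ensure exponential tightness (\cite[Lemmas~5.1 and~5.3]{jkst19}); restricting spacings to the compact interval $[0,R]$ makes exponential tightness automatic, so the same proofs go through for $p=0$ and for $p=p_\beta\to 0$.

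Your sketch is more detailed than what the paper provides and is broadly correct. The transfer-operator route for the bulk/surface decomposition is standard and in fact underlies the formulas from \cite[Proposition~4.9]{jkst19} that the present paper uses later (the infinite-volume measure $\mu_\beta^\ssup{R}$ is built from the principal eigenfunction). Your Laplace-type argument for $e_0^\ssup{R}(\beta)\to e_0$ is fine, and you correctly identify the surface limit as the delicate part, pointing to exactly the kind of uniform-in-$N$ concentration and decay-of-correlations input (\cite[Theorem~2.11]{jkst19}) that \cite{jkst19} develops. The one point the paper singles out---that compactness of $[0,R]$ replaces the role of $p>0$---is implicit in your approach since you work on $[0,R]$ from the outset, but it is worth stating explicitly, as it is the only thing the present paper actually adds to the cited result.
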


In particular, the $R$-dependence vanishes in the zero-temperature limit. Some technical remarks are in order. Indeed, Sections~2.2 and~2.3 in~\cite{jkst19} assume a fixed positive pressure constant $p>0$. Those results extend to $p=0$ or temperature-dependent pressures $p=p_\beta\to 0$ if the integration is restricted to compact intervals $z_j\in [0,R]$. Indeed the positivity of $p$ is only needed to ensure exponential tightness, see~\cite[Lemma~5.1 and~5.3]{jkst19}. But exponential tightness comes for free in compact spaces, the condition $p>0$ is no longer needed. 

Our first result concerns the asymptotics of the free energy and the pressure as $\beta\to \infty$ at fixed elongation $\ell$. 

\begin{theorem}[Free energy density and pressure for $\beta\to \infty$ at fixed $\ell>z_{\min}$] \label{thm:fe}
Under Assumptions~\ref{ass:v1}--\ref{ass:m}:
\begin{enumerate}
		\item [(a)] There exists $\ell^*<a$ such that for all $\ell\in(\ell^*,a)$, 
		\[
		\lim_{\beta\to \infty} f(\beta,\ell) = W(\ell)>e_0\qquad\mbox{and}\qquad\lim_{\beta\to\infty}p(\beta,\ell)= -W'(\ell)>0.
		\]
		\item [(b)] If $\ell >a$, then, as $\beta\to\infty$, 
			\begin{eqnarray*} 
				 f(\beta,\ell) &= &e_0^\ssup{R}(\beta) - \frac 2 \beta \sqrt{\ell-a}\,\e^{-\beta e_\mathrm{surf}^\ssup{R}(\beta)/2} \, (1+o(1)), \\
				 p(\beta,\ell)& =& \frac1{\beta \sqrt{\ell - a} }\, \e^{- \beta e_\mathrm{surf}^\ssup{R}(\beta)/2}(1+o(1)). %\label{pressasy}
			\end{eqnarray*}  
In particular, $\lim_{\beta\to \infty} f(\beta,\ell) = W(a) = e_0$.
	\end{enumerate} 
\end{theorem}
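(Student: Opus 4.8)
The plan is to reduce both parts to the constant-pressure ensemble --- where the effective lattice gas of defects of the Introduction lives --- and then to pass to the canonical ensemble by convex duality. Write $g(\beta,p)$ for the Gibbs free energy per particle at pressure $p$; by convexity of $f(\beta,\cdot)$ and equivalence of ensembles one has $g(\beta,p)=\inf_{\ell}\bigl(f(\beta,\ell)+p\ell\bigr)$ and $f(\beta,\ell)=\sup_{p>0}\bigl(g(\beta,p)-p\ell\bigr)$, with $p(\beta,\ell)$ the maximizer; this passage is soft, using only convexity and exponential tightness on the compactified configuration space (cf.\ the remark after Proposition~\ref{prop-oldresults}). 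The substantive input is the small-$p$ behaviour of $g$, established in Section~\ref{sec:pressure}: decomposing a configuration according to the positions of its cracks (bonds $\ge R$), bounding each interaction across a crack by~\eqref{eq:across} together with Assumption~\ref{ass:R}, applying Proposition~\ref{prop-oldresults} on each crackless cluster, and --- crucially --- invoking Theorem~\ref{thm:codec} to neglect the interaction between distinct cracks, the defects behave as an almost ideal lattice gas on the $N-1$ bonds. Writing $q_\beta(p):=\int_R^\infty\e^{-\beta(v(z)+pz)}\dd z$ for the crack-length integral, this yields, uniformly as $p\downarrow 0$ at the exponentially small rate relevant below,
\be
	g(\beta,p)=g^\ssup{R}(\beta,p)-\tfrac1\beta\log\bigl(1+q_\beta(p)\,\e^{-\beta g_\mathrm{surf}^\ssup{R}(\beta,p)}\bigr)+(\mathrm{error}).
\ee
For small $p$ one Taylor-expands $g^\ssup{R}(\beta,p)=e_0^\ssup{R}(\beta)+p\,a^\ssup{R}(\beta)+O(p^2)$ and $g_\mathrm{surf}^\ssup{R}(\beta,p)=e_\mathrm{surf}^\ssup{R}(\beta)+O(p)$, where $a^\ssup{R}(\beta)=\partial_p g^\ssup{R}(\beta,0)\to a$ is the mean spacing of the crackless ensemble at zero pressure.

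For (b), $\ell>a$: since $a^\ssup{R}(\beta)\to a<\ell$, the maximizer $p^\ast$ of $g(\beta,p)-p\ell$ is small and positive and solves $\partial_p g(\beta,p^\ast)=\ell$; with $q_\beta(p)=\e^{-\beta pR}/(\beta p)$ in the compactly supported case (an analogous leading term otherwise) and the expansions above this becomes $\ell-a^\ssup{R}(\beta)=(1+o(1))\,\e^{-\beta e_\mathrm{surf}^\ssup{R}(\beta)}/(\beta^2(p^\ast)^2)$, hence $p^\ast=\tfrac{1}{\beta\sqrt{\ell-a}}\e^{-\beta e_\mathrm{surf}^\ssup{R}(\beta)/2}(1+o(1))$, which is the asserted pressure since $p(\beta,\ell)=p^\ast$; note $\beta p^\ast\to 0$, which renders the expansions self-consistent. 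Substituting $p^\ast$ into $f(\beta,\ell)=g(\beta,p^\ast)-p^\ast\ell$ and using $\tfrac1\beta\log(1+x)=\tfrac x\beta(1+o(1))$, the mechanical term $p^\ast(a^\ssup{R}(\beta)-\ell)$ and the entropic defect term $-\tfrac1\beta q_\beta(p^\ast)\,\e^{-\beta e_\mathrm{surf}^\ssup{R}(\beta)}$ each equal $-\tfrac1\beta\sqrt{\ell-a}\,\e^{-\beta e_\mathrm{surf}^\ssup{R}(\beta)/2}(1+o(1))$, so $f(\beta,\ell)=e_0^\ssup{R}(\beta)-\tfrac2\beta\sqrt{\ell-a}\,\e^{-\beta e_\mathrm{surf}^\ssup{R}(\beta)/2}(1+o(1))$; then $\lim_{\beta\to\infty}f(\beta,\ell)=e_0$ follows from~\eqref{e0betalimits}.

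For (a), $\ell\in(\ell^\ast,a)$: a crack can only lengthen the chain and is therefore energetically suppressed at the fixed short length $\ell N$, so $f(\beta,\ell)=\sup_{p>0}\bigl(g^\ssup{R}(\beta,p)-p\ell\bigr)+o(1)$ with the crackless $g^\ssup{R}$. By Laplace's method around the uniform ground state --- legitimate precisely because $\ell^\ast<a$ is the left endpoint of the largest interval ending at $a$ on which $W$ coincides with its convex envelope, so that the length-$\ell N$ constrained ground state is the uniform lattice (such $\ell^\ast$ exists since $W$ is strictly convex near $a$ by Assumption~\ref{ass:v1}(iii)--(iv)) --- one obtains $g^\ssup{R}(\beta,p)\to\inf_{r>0}\bigl(W(r)+pr\bigr)$ as $\beta\to\infty$, the Gaussian fluctuation factor contributing only $O(\beta^{-1}\log\beta)$ to $f$. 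Double Legendre duality then gives $\lim_{\beta\to\infty}f(\beta,\ell)=W(\ell)>W(a)=e_0$, and $\lim_{\beta\to\infty}p(\beta,\ell)=-W'(\ell)>0$ by convergence of derivatives of convex functions at points where the limit is differentiable.

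The main obstacle is the uniformity of the displayed expansion of $g(\beta,p)$: it must be valid down to $p$ of order $\beta^{-1}\e^{-\beta e_\mathrm{surf}^\ssup{R}(\beta)/2}$, essentially at the left endpoint of the admissible pressure range, and with an error small enough that the subsequent saddle-point extraction of the $(1+o(1))$ prefactors survives. This is exactly where Theorem~\ref{thm:codec} and the correlation-decay bound of~\cite[Theorem~2.11]{jkst19} (hence Assumption~\ref{ass:m}) are indispensable, since one must exclude a macroscopic number of mutually interacting, clustered defects that would destroy the ideal-gas counting. The remaining ingredients --- the convex duality between ensembles and the $\beta\to\infty$ analysis of the crackless ensemble in part (a) --- are routine.
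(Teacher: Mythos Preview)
Your overall architecture matches the paper's: convex duality between the canonical and constant-pressure ensembles, with the substantive input for (b) coming from the Gibbs free energy expansion of Theorem~\ref{thm:gibbsfe} (your displayed formula for $g(\beta,p)$ is exactly its content), and (a) handled by passing the $\beta\to\infty$ limit through the Legendre transform. The saddle-point computation for $p^\ast$ in (b) and the identification of the two equal contributions to $f(\beta,\ell)-e_0^\ssup{R}(\beta)$ are correct and are precisely what the paper does.

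Two points of divergence in (a). First, your reason for dropping cracks --- ``a crack can only lengthen the chain and is therefore energetically suppressed at the fixed short length $\ell N$'' --- is stated in the wrong ensemble: in the canonical ensemble the total length is \emph{fixed}, so a crack does not lengthen anything, it forces the remaining spacings to shrink. The paper sidesteps this by working with the \emph{full} Gibbs free energy $g(\beta,p)$ (not the truncated $g^\ssup{R}$) at fixed $p\in(0,p^\ast)$ with $p^\ast=|v(z_{\max})|/z_{\max}$, for which \cite[Theorem~2.5]{jkst19} already gives $g(\beta,p)\to\inf_{r>0}(W(r)+pr)$; there is no separate ``cracks are suppressed'' step. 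If you want to keep your route via $g^\ssup{R}$, the correct justification is that at fixed $p>0$ the defect activity $q_\beta\sim\e^{-\beta e_\mathrm{surf}}/(\beta p)$ is exponentially small, hence $g(\beta,p)=g^\ssup{R}(\beta,p)+O(q_\beta/\beta)$ by the same lattice-gas bound you use in (b). Second, the paper's $\ell^\ast$ is not defined by the convexity region of $W$ but by the pressure cutoff: $\ell^\ast=a(p^\ast)$ where $W'(a(p))=-p$; this is the image of $(0,p^\ast)$ under $p\mapsto a(p)$ and is what the cited convergence result allows. Your definition via $W=W^{**}$ may give a different (possibly larger) interval, but since the statement only asserts existence of \emph{some} $\ell^\ast<a$, either works.
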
 

The theorem is proven in Section~\ref{sec:canoproof1}. The leading-order asymptotic behavior of the free energy density in both (a) and (b) is written more succinctly with the convex hull $W^{**}$ of $W$ as 
$$
	\lim_{\beta\to \infty} f(\beta,\ell) = W^{**}(\ell) = 
	\begin{cases} 
		W(\ell), &\quad \ell \in (\ell^*,a),\\
		W(a), &\quad \ell \geq a. 
	\end{cases} 
$$
We remark that in view of our general assumptions on $v$ we cannot expect the Cauchy-Born rule to hold near $r_\mathrm{hc}$, so that $\ell^* > r_\mathrm{hc}$ in general.

Our second result describes in detail the distribution of cracks for elongated chains $\ell>a$. The case $\ell<a$ corresponds to positive pressure and was already analysed in detail in \cite{jkst19}. Define
\be \label{eq:qldef}
	q_{\beta,\ell} = \frac{\exp(-\beta e_\mathrm{surf}^\ssup{R}(\beta))}{\beta p(\beta,\ell)} = \sqrt{\ell - a}\,\e^{-\beta e_\mathrm{surf}^\cl(\beta)/2}\bigl(1+o(1)\bigr) .
\ee
For simplicity we suppress the $R$-dependence from the notation for $q_{\beta,\ell}$. We let $\mathrm{Geom}(p)$ denote the probability measure on $\N$ with probability weights $p (1-p)^{k-1}$, and $||\cdot||_\mathrm{TV}$ the total variation distance between two probability measures, i.e., $||\mu - \nu||_\mathrm{TV} = \sup_{A}|\mu(A) - \nu(A)|$. 
Note that both $p(\beta,\ell)$ and $q_{\beta,\ell}$ behave as $\exp(-\beta e_\mathrm{surf}/2 +o(\beta))$. 

\begin{theorem}[Distribution for $\beta\to\infty$ at fixed $\ell>a$] \label{thm:lowdens}
Suppose that Assumptions~\ref{ass:v1}--\ref{ass:m} hold true. 
	Fix $\ell>a$. Then there exist  $\delta^\ssup{i}_\beta>0$ with $\lim_{\beta\to \infty} \delta^\ssup{i}_\beta =0$, $i=1,2$, and  $\beta_0\geq 0$  such that for all $\beta\geq\beta_0$, 
	\begin{align*}
		\limsup_{N\to \infty} \frac{1}{N}\log \P_{N,\ell N}^\ssup{\beta}\Bigl( \Bigl| \frac{M_N}{N} -q_{\beta,\ell}\Bigr|\geq q_{\beta,\ell} \delta^\ssup{1}_\beta \Bigr) &\leq - q_{\beta,\ell} \delta^\ssup{2}_\beta \\
			\limsup_{N\to \infty} \frac{1}{N}\log \P_{N,\ell N}^\ssup{\beta}\Bigl( || \nu_N - \mathrm{Geom}( \tfrac{q_{\beta,\ell}}{1+q_{\beta,\ell}})||_\mathrm{TV} \geq   \delta^\ssup{1}_\beta \Bigr) &\leq  - q_{\beta,\ell} \delta^\ssup{2}_\beta\\
		\limsup_{N\to \infty} \frac{1}{N}\log \P_{N,\ell N}^\ssup{\beta}\Bigl( ||\widehat \nu_N  - \mathrm{Exp}(\beta p(\beta,\ell)) ||_\mathrm{TV} \geq   \delta^\ssup{1}_\beta \Bigr)
		&  \leq - q_{\beta,\ell} \delta^\ssup{2}_\beta. 
	\end{align*} 
\end{theorem}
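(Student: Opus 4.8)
The plan is to pass to the constant-pressure (isobaric) ensemble, where the nearest-neighbor spacings $z_1,\dots,z_{N-1}$ become (after conditioning appropriately) an i.i.d.\ sequence, and then transfer the statements back to the canonical ensemble by the equivalence of ensembles announced in Section~\ref{sec:heuristics}. Concretely, I would first fix the pressure $p=p(\beta,\ell)$ dual to the elongation $\ell$, so that the isobaric law of the chain has spacings with single-site density proportional to $\exp(-\beta[\,\text{(one-site energy contribution)}+p z\,])$. Using the effective lattice-gas description of Sections~\ref{sec:heuristics2}--\ref{sec:effective-model}, a site $j$ carries a ``defect'' precisely when $z_j\ge R$; under Assumption~\ref{ass:R} the interaction across a crack is negligible (Eq.~\eqref{eq:across}), so in the isobaric ensemble the defect indicators $\1_{[R,\infty)}(z_j)$ are, up to exponentially small corrections in $\beta$ controlled by Theorem~\ref{thm:codec}, independent Bernoulli variables with success probability comparable to $q_{\beta,\ell}$ as in \eqref{eq:qldef}, and conditionally on $z_j\ge R$ the excess $z_j-R$ is approximately $\mathrm{Exp}(\beta p(\beta,\ell))$-distributed (this is the content of Lemma~\ref{lem:gapintegral}). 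This immediately yields, in the isobaric ensemble: (i) $M_N/N\to q_{\beta,\ell}$ with large-deviation rate of order $q_{\beta,\ell}$ by a standard Cramér bound for a sum of weakly dependent Bernoullis; (ii) the gaps between successive defects are i.i.d.\ geometric with parameter $q_{\beta,\ell}/(1+q_{\beta,\ell})$, so $\nu_N$ concentrates on $\mathrm{Geom}(q_{\beta,\ell}/(1+q_{\beta,\ell}))$ in total variation; (iii) $\widehat\nu_N$ concentrates on $\mathrm{Exp}(\beta p(\beta,\ell))$.

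The core quantitative estimates I would carry out in detail are Glivenko--Cantelli-type bounds with exponential tails. For $\nu_N$ and $\widehat\nu_N$, since these are empirical measures on $\N$ resp.\ $[0,\infty)$ built from $M_N-1\approx Nq_{\beta,\ell}$ approximately i.i.d.\ samples, a covering/union bound over finitely many test sets together with Hoeffding gives
\[
	\P\bigl(\|\nu_N-\mathrm{Geom}(\cdot)\|_{\mathrm{TV}}\ge\delta\bigr)\le \exp\bigl(-c\,\delta^2 N q_{\beta,\ell}\bigr),
\]
and similarly for $\widehat\nu_N$, which is the shape of the claimed bound with $\delta^\ssup{1}_\beta$ chosen to decay slowly and $\delta^\ssup{2}_\beta$ absorbing the $\delta^2$; the dominant scale $q_{\beta,\ell}$ on the right-hand side reflects that there are only $\sim Nq_{\beta,\ell}$ cracks to average over. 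For $M_N/N$ the bound is a direct binomial large-deviation estimate: $\P(|M_N/N - q_{\beta,\ell}|\ge q_{\beta,\ell}\delta)\le \exp(-c\,\delta^2 N q_{\beta,\ell})$ for small $\delta$.

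The main obstacle, and the step that consumes most of the work, is transferring these concentration statements from the isobaric to the canonical ensemble at fixed $\ell$ without losing the sharp exponential rate of order $q_{\beta,\ell}$. The equivalence of ensembles is standard at the level of leading exponential order in $N$, but here we need it on the much finer scale $N q_{\beta,\ell}$, which is exponentially small in $\beta$; a naive application would only give a rate $o(N)$ and destroy the statement. I expect to handle this exactly as in the companion analysis: write the canonical probability as an isobaric expectation reweighted by $\exp(-\beta p \sum z_j)$ and divided by the probability that $\sum z_j$ lies in an $O(1)$-window around $\ell N$; the latter is polynomially small in $N$ by a local central limit theorem for the (weakly dependent) spacings, hence does not affect the exponential rate, while the reweighting is neutral on the event of interest because conditioning on the length is nearly independent of the defect statistics (the defects contribute only an $O(Nq_{\beta,\ell})$ fraction of the total length, with fluctuations that are negligible against the $O(\sqrt N)$ fluctuations of the bulk). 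Making this precise — in particular controlling the joint fluctuations of $(M_N,\sum z_j)$ and propagating the exponentially-small-in-$\beta$ scale cleanly through the change of ensemble — is where Theorem~\ref{thm:codec}'s uniform correlation decay is indispensable, and it is the heart of the argument in Section~\ref{sec:canonical}.
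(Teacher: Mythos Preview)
Your overall strategy---prove the concentration in the isobaric ensemble at pressure $p=p(\beta,\ell)$ and then transfer to the canonical ensemble---matches the paper. But you have the difficulty exactly inverted. The step you flag as ``the main obstacle,'' namely transferring the bounds from the isobaric to the canonical ensemble without losing the rate $q_{\beta,\ell}\delta^\ssup{2}_\beta$, is in fact trivial, and your proposed machinery (local CLT, joint fluctuations of $(M_N,\sum z_j)$, conditioning on a length window) is unnecessary. The paper simply uses the one-line inequality
\[
	\Q_N^\ssup{\beta,p}(A)\;\ge\;\frac{Z_N(\beta,\ell N)}{Q_N(\beta,p)}\,\e^{-\beta p(\ell N+1)}\,\P_{N,\ell N}^\ssup{\beta}(A),
\]
obtained by restricting the $L$-integral to $[\ell N,\ell N+1]$. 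With $p=p(\beta,\ell)$ the prefactor is $\e^{o(N)}$ by the Legendre relation $f(\beta,\ell)=g(\beta,p)-p\ell$, and since the statements are about $\limsup_{N\to\infty}\frac1N\log(\cdot)$ at \emph{fixed} $\beta$, the $o(N)$ vanishes in the limit regardless of how small $q_{\beta,\ell}$ is. So $\limsup_N\frac1N\log\P_{N,\ell N}^\ssup{\beta}(A)\le\limsup_N\frac1N\log\Q_N^\ssup{\beta,p}(A)$ and the isobaric bound (Theorem~\ref{thm:ge}) passes through unchanged. Your worry that ``a naive application would only give a rate $o(N)$ and destroy the statement'' confuses the order of limits: $N\to\infty$ is taken first, $\beta$ is held fixed.

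Conversely, the isobaric analysis you treat lightly is where the real work lies. The spacings are \emph{not} i.i.d.\ for $m=2$; they form a Markov chain, and the defect indicators are correlated through the NNN interaction. The paper does not use Hoeffding or Glivenko--Cantelli; it maps to an effective lattice gas of defects (Section~\ref{sec:effective-model}), controls the effective interaction via Theorem~\ref{thm:codec} and Corollary~\ref{cor:esmall} to get $\eps_\beta=q_\beta\sum_k|f_\beta(k)|\to0$, and then applies renewal theory and large deviations (G{\"a}rtner--Ellis and Sanov, Lemmas~\ref{lem:mnldp}--\ref{lem:tilted-empirical}) with careful tilting by $\lambda_\beta$ to absorb the interactions across cracks. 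Your Hoeffding sketch could perhaps be made to work, but it would need the same correlation bounds and would end up reproving much of Section~\ref{sec:aux}.
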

\noindent The theorem is proven in Section~\ref{sec:canoproof2}. 

These estimates imply three laws of large numbers for $N\to\infty$ under the distribution $\P_{N,\ell N}^\ssup{\beta}$ at sufficiently low temperature with exponentially fast decay of the deviation from the mean by some threshold that is vanishingly small when $\beta$ is large. In particular, the number of clusters, $M_N$, behaves like $Nq_{\beta,\ell}(1+O( \delta^\ssup{1}_\beta))$ with a probability converging to $1$ exponentially fast. Furthermore, the number of clusters of size $k$ behaves like $N q_{\beta,\ell} (1-q_{\beta,\ell})^{k-1}(1+O( \delta^\ssup{1}_\beta))$ for every $k\in\N$, and hence the average cluster cardinality is about $1 / q_{\beta,\ell}$. Moreover, the distribution of a typical crack length (distance between neighboring clusters) is approximately an exponential variable with parameter $\beta p(\beta,\ell)$ and hence on average of size $1/\beta p(\beta,\ell)$. 

Theorem~\ref{thm:lowdens} makes no statement about the spacings inside the clusters (however, see \cite{jkst19} for more precise assertions in the constant-pressure model), but Lemma~\ref{lem:cluster-lengths}(a) implies that the average spacing is $\approx a$. Hence, the $N$ particles in the interval $[0,\ell N]$ are, with high probability, organized into $Nq_{\beta,\ell}$ clusters that cover each an interval of length $a / q_{\beta,\ell}$ and the same number of gaps in between, each with a size $\approx 1/(\beta p(\beta,\ell))$. Since $a +  q_{\beta,\ell}/(\beta p(\beta,\ell))\approx \ell$, which follows from a comparison of Theorem~\ref{thm:fe}(b) with \eqref{eq:qldef}, this explains how the $N$ particles fill up the system of length $\ell N$. 

We finally remark that our asymptotic estimates are essentially independent of the choice of $R$. To leading order this is a consequence of Eq.~\eqref{e0betalimits}. It also follows a posteriori from Theorems~\ref{thm:fe} and~\ref{thm:lowdens} as the crack length, for a fixed $R$, is exponentially distributed with parameter $\beta p(\beta,\ell)=\e^{- \beta e_\mathrm{surf}^\ssup{R}(\beta)/2+o(\beta)}$ which is itself exponentially small in $1/\beta$. For any other $R' > R$ the probablity of finding spacings which are larger than $R$ but not larger than $R'$ thus becomes negligible at large $\beta$ exponentially fast in $1/\beta$. 

\begin{remark}
In the elementary case of nearest neighbor models~\cite{takahashi42} (i.e., $m=1$) and smooth $v$, one has $a= \mathrm{argmin}\, v(r)$, $e_\mathrm{surf}^\ssup{R}(\beta) = - e_0^\ssup{R}(\beta)$ and  
\begin{align*}
	 e_0^\ssup{R}(\beta) &= -\frac{1}{\beta}\log \Bigl(\int_0^R \e^{-\beta v(r)}\dd r\Bigr) 
		= e_0 + \frac{1}{\beta}\log \sqrt{2\pi \beta v''(a)} + O(\beta^{-3/2}),\\
    \beta p(\beta,\ell) &= (1+o(1))\frac{\exp( -\beta e_\mathrm{surf})}{\sqrt{2\pi \beta v''(a)(\ell - a)}}.
\end{align*}
In particular, the $R$-dependence is explicitly seen to enter in exponentially small correction terms only. 

Harmonic approximations in case of more general pair potentials $v$ would require to replace $v''(a)$ by more complicated terms from Hessians or WKB expansions~\cite{helffer-book, shapeev-luskin17}, see also~\cite[Section 2.3]{jkst19}. (For related techniques in the context of computational approximation schemes for the simulation of atomistic materials see \cite{blanc-lebris-legoll-patz10,binder-luskin-perez-voter2015,  shapeev-luskin17,braun-duong-ortner2020}.) We do not pursue this here.
\end{remark}

\section{Proof ingredients and heuristics} \label{sec:heuristics} 

One-dimensional systems are best treated in the {\em constant-pressure ensemble}, also called {\em isothermal-isobaric} or {\em NpT ensemble}, which does not fix the length of the $N$-particle chain but instead fixes the external pressure. We formulate and prove all the results analogous to Theorems~\ref{thm:fe} and \ref{thm:lowdens} for the constant-pressure ensemble in Section~\ref{sec:pressure} (Theorems~\ref{thm:ge} and \ref{thm:gibbsfe}) and derive Theorems~\ref{thm:fe} and \ref{thm:lowdens} from them in Section~\ref{sec:canonical}. 

In the present section, we introduce the constant-pressure ensemble in Section~\ref{sec-conspress} and give in Section~\ref{sec:heuristics2} extensive heuristics about what properties are to be expected and how the various quantities behave and how they are related to each other. We also introduce and explain the effective model to which we will compare the ensemble when we carry out the proofs in Sections~\ref{sec:aux}--\ref{sec:pressure}. In Section~\ref{sec:compete} we give a modification of the heuristics in a case that we are not considering rigorously in the present article; it leads to a slightly different picture.

\subsection{Equivalence of ensembles and pressure-density (stress-strain) relation}\label{sec-conspress}

The partition function of the {\em constant-pressure ensemble} at pressure $p$ is defined as
\begin{equation}\label{QNdef}
\begin{aligned}
	Q_N(\beta,p) &= \int_0^\infty \e^{-\beta p L} Z_N(\beta,L)\dd L \\
	&= \int_{\R_+^{N-1}} \e^{-\beta [U_N(z_1,\ldots,z_{N-1}) + p \sum_{j=1}^{N-1} z_j ]} \dd z_1\cdots \dd z_{N-1}. 
\end{aligned}
\end{equation}
We write $\Q_N^\ssup{\beta,p}$ for the corresponding probability measure on $\R_+^{N-1}$ with probability density $\vect{z}=(z_1,\dots,z_{N-1})\mapsto Q_N(\beta,p)^{-1} \exp(- \beta [U_N(\vect{z})+ p\sum_{j=1}^{N-1} z_j])$. The Gibbs free energy (also called free enthalpy) per particle is 
\bes
	g(\beta,p) = - \lim_{N\to \infty}\frac{1}{\beta N}\log Q_N(\beta,p). 
\ees
The existence of the limit is well-known, moreover $p\mapsto g(\beta,p)$ is concave and it is related to the Helmholtz free energy by the relations~\cite[Chapter~5.6.6]{ruelle-book69}
\be \label{eq:gfrel}
	g(\beta,p) = \inf_{\ell>0} \bigl( f(\beta,\ell) + p\ell\bigr)\qquad\mbox{and}\qquad f(\beta,\ell) = \sup_{p>0} \bigl( g(\beta,p) - p\ell \bigr),
\ee
which formulate the equivalence of the ensembles at the level of thermodynamic potentials. By standard results on Legendre transforms, as $f(\beta,\cdot)$ is strictly convex and continuously differentiable, $g(\beta,\cdot)$ is strictly concave and continuously differentiable, moreover 
\be \label{eq:pflgrel}
	p = - \frac{\partial f}{\partial \ell}(\beta,\ell) \qquad \Longleftrightarrow\qquad \ell = \frac{\partial g}{\partial p}(\beta,p). 
\ee	
Explicit computations on the equivalence of ensembles and the stress-strain (or force-elongation) relation for one-dimensional systems with nearest or next-nearest neighbor interactions, in a context closer to applications to materials modelling, are given by Legoll and Leli{\`e}vre \cite[Section 2]{legoll-lelievre2012}, see also \cite{blanc-lebris-legoll-patz10}.

\subsection{Effective gas of defects} \label{sec:heuristics2}

An important quantity is the {\em truncated constant-pressure partition function}
\begin{equation}\label{truncpartfct}
	Q_k^\ssup{R} (\beta,p):= \int_{[0,R]^{k-1}} \e^{-\beta [U_k(\vect{z}) +p \sum_{j=1}^{k-1} z_j]} \dd \vect{z}, 
\end{equation}
which restricts to small gaps and describes a cluster of cardinality $k$. Let us give heuristics about its behavior for large $\beta$ and how it is used for a description of the entire constant-pressure ensemble in terms of a decomposition in its clusters and the gaps in between. We assume that $\beta\to \infty$ and $\beta p_\beta\to 0$.

We look at a realization of the $N$-particle ensemble with $n\in \{0,\ldots,N-1\}$ cracks. For $0 = i_0<i_1<\cdots <i_{n+1} = N$, let 
\begin{equation}\label{Bndef}
 B_N(\vect{i})=B_N(i_1,\dots,i_n)=\big\{\vect{z}=(z_1,\dots,z_{N-1})\in \R_+^{N-1}\colon z_j\geq R\Longleftrightarrow j\in\{i_1,\dots,i_n\}\big\}
\end{equation}
be the collection of configurations (chains) that have large gaps (the cracks) precisely at the places $i_1,\dots,i_n$. Suppose that interaction across cracks can be neglected. It is plausible that $Q_k^\ssup{R} (\beta,p_\beta) \approx \e^{-\beta E_k}$. Then, on the event $B_N(\vect{i})$, the entire chain  decomposes into $n$ cracks and $n+1$ clusters:
\bes
	\int_{B_N(\vect{i})} \e^{-\beta [U_N(\vect{z}) +p_\beta \sum_{j=1}^{N-1} z_j]} \dd \vect{z} \approx \Bigl(\int_R^\infty \e^{-\beta p_\beta r} \dd r\Bigr)^n\, \prod_{k=1}^{n+1} \e^{-\beta E_{i_k - i_{k-1}}}.
\ees
Set $V(k) = E_k - k e_0 - e_\mathrm{surf}$. Notice that
\be \label{eq:interact}
  \sum_{k=1}^{n+1} E_{i_k - i_{k-1}} = N e_0 + (n+1) e_\mathrm{surf} + \sum_{k=1}^{n+1} V(i_k- i_{k-1}).
\ee
Thus setting 
\[ %\label{qdef}
	q = q_\beta = \frac{\exp(-\beta [e_\mathrm{surf} + p_\beta R])}{\beta p_\beta}
\approx \frac{\exp(-\beta e_\mathrm{surf})}{\beta p_\beta},
\]
we get
\be
	Q_N(\beta,p_\beta)  
	\approx \e^{-\beta (N e_0 + e_\mathrm{surf})} \sum_{n=0}^{N-1}q^n \sum_{1\leq i_1<\cdots < i_n\leq N-1}  \e^{-\beta \sum_{k=1}^{n+1} V(i_k-i_{k-1})}.\label{eq:qeff}
\ee 
We recognize the partition function for an {\em effective lattice gas} on $\{1,2,\dots,N-1\}$ with activity $q$ and interaction potential  $(i,j)\mapsto V(j-i)$. Each site $j=1,\ldots,N-1$ corresponds to a bond $z_j$ between neighboring particles, and a defect is present at $j$ if $z_j\geq R$ is a crack. If $V$ was neglected, then the lattice gas would be {\em ideal}, and the right hand side of \eqref{eq:qeff} would be equal to $ \e^{ - \beta (N e_0+e_\mathrm{surf})}(1+q)^{N-1}$. 

\begin{remark}  \label{rem:defect-eof}
The reader may also think of $- \frac1\beta\log q$ as the Gibbs free energy of formation of a defect.  Computing free energies of defect formation is a non-trivial task, see e.g.~\cite{braun-duong-ortner2020} and the references therein. The Gibbs free energy of defect formation is a sum of two contributions: an energetic contribution $e_\mathrm{surf}$ that accounts for missing interactions across the crack, and an entropic contribution $\frac1\beta \log (\beta p_\beta)$ that comes from integrating over different possible lengths of the crack $z_j\geq R$. At fixed pressure only the energetic contribution would survive in the zero-temperature limit, however in our context the  pressure is exponentially small in $\beta$ (see Eq.~\eqref{eq:pqchoice} below) and both the energetic and the entropic contributions are relevant. 
\end{remark} 

From the definition \eqref{eq:surface} of $e_0$ and $e_\mathrm{surf}$ we know that $V(k)\to 0$ as $k\to \infty$. Hence, we work in a perturbative regime and need to control that $V$ is small enough in an appropriate sense. Criteria for this are well-known. Indeed, according to \cite[Theorem 4.2.3]{ruelle-book69}, if the quantity $q C(\beta)$ is small, where
\[ %\label{Cbetadef}
C(\beta)  = \sum_{k=1}^\infty |\e^{-\beta V(k)} - 1|,
\]
then the effect of interactions is negligible, and we may approximate the effective model by the ideal lattice gas. 

In this approximation,  under the assumption that $q C(\beta)$ is small, we get a number of consequent crucial approximations. Indeed, the collection of bonds (effective lattice sites) is approximately independent, and the probability that site $j$ is occupied ($z_j\geq R$) approaches the (tiny) number $q/(1+q)$ (and with the remaining probability $1/(1+q)$, it is not). As a consequence, the number of particles in successive clusters becomes geometric with this parameter. Furthermore, the length $z_j-R$ of a crack minus $R$ is approximately exponentially distributed with small parameter $\beta p_\beta$ and expected length $1/\beta p_\beta$. Additional arguments that analyse the energy term show that any length of a spacing inside a cluster approaches the ground state spacing $a$. As a consequence, any spacing is $\approx a$ with probability $1/(1+q)$ and $\approx R+1/\beta p_\beta$ otherwise. In particular, the average length of a spacing then is 
\[ %\label{eq:lbehav} 
	\ell \approx \frac{1}{1+q}\times a + \frac{q}{1+q} \Bigl( R+ \frac{1}{\beta p_\beta}\Bigr) .
\]
Assuming that $q$ is very small (low density of defects because of large $\beta$), 
the only way that a length $\ell\in(a,\infty)$ can be achieved is that 
$$ 
\frac{q}{\beta p_\beta}\approx \frac{\exp(-\beta e_\mathrm{surf})}{(\beta p_\beta)^2}\to \ell - a,
$$
which yields 
\be \label{eq:pqchoice}
	\beta p_\beta\approx \frac{\exp(-\beta e_\mathrm{surf}/2)}{\sqrt{\ell - a}}\qquad\mbox{and}\qquad q\approx \sqrt{\ell - a}\,  \e^{-\beta e_\mathrm{surf}/2}.  
\ee 
Hence, the smallness of $q C(\beta)$ would lead to a complete picture of the behavior of the chain, which is the one that we describe in Theorems~\ref{thm:fe} and \ref{thm:lowdens}.

The number $C(\beta)$ is a common measure in statistical mechanics for the overall strength of the interactions, see~\cite[Definition 4.1.2]{ruelle-book69}. However, there is {\em a priori} no reason that it be small. In general, it can go to infinity exponentially fast as $\beta\to\infty$. Indeed,  
\bes
	\liminf_{\beta \to \infty} \frac{1}{\beta} \log C(\beta) \geq - \inf_{k\in \N} V(k) = - \inf_{k\in \N} (E_k - k e_0 - e_\mathrm{surf}) \geq  e_0 + e_\mathrm{surf}
\ees
(recall $E_1 =0$). Under our assumptions on the pair potential, we have $E_k \geq (k-1)e_0$ for all $k\in \N$~\cite[Lemma 3.2]{jkst19} and hence $e_\mathrm{surf} + e_0 \geq 0$. (In particular, $V(k) \geq 0$, which justifies the first inequality in the above estimate.) As soon as the inequality is strict, we find that $C(\beta) \to \infty$ exponentially fast. 

Hence, our plan works only if $q=q_\beta$ vanishes as $\beta\to\infty$ fast enough. If  the pressure $p_\beta$ goes to zero not too fast so that $\beta p_\beta \gg \exp(-\beta e_\mathrm{surf}) \to 0$---for example, by choosing $p_\beta$ as in~\eqref{eq:pqchoice}---we see that $q=q_\beta \to 0$. A necessary condition for $q_\beta C(\beta)\to 0$, when $q_\beta$ is as in~\eqref{eq:pqchoice}, is certainly that $e_0 + e_\mathrm{surf}/2\leq 0$. This is indeed the case in which we  are working in the present article, see Lemma~\ref{lem:surfbound} and Theorem~\ref{thm:codec}.

\subsection{An alternative scenario} \label{sec:compete}

Let us present a modified heuristics in the case where $e_\mathrm{surf}/2> |e_0|$, which we do not  handle rigorously in this article. We still assume that $\inf_{k\in \N} (E_k - k e_0) = E_1 - e_0 = |e_0|$, as is proved in~\cite[Lemma 3.2]{jkst19}.

Let us make one more approximation step on the right-hand side of \eqref{eq:qeff}. We introduce the solution $u=u_\beta$ of
\be \label{eq:uheu}  
	q_\beta\sum_{k=1}^\infty  u_\beta^k   \,\e^{-\beta V(k)}= 1,
\ee
and introduce an $\N$-valued random variable $T_\beta$ which assumes the value $k\in\N$ with probability $q_\beta u_\beta^k \e^{- \beta V(k)}$.  Then independent copies of $T_\beta$ play the role of the cardinalities of the clusters. (Notice that the geometric distribution from Theorem~\ref{thm:lowdens} is recovered with the approximation $V(k) =0$, under which $u_\beta = 1/(1+q_\beta)$.)  The right-hand side of \eqref{eq:qeff} can be further transformed using these variables, which we carry out in Section~\ref{sec:aux}.

With the ansatz  $u_\beta\approx \exp(- \beta t_\beta)$ where $t_\beta = \exp(- \beta e_\mathrm{surf}/2)$, and with the help of \eqref{eq:pqchoice}, Eq.~\eqref{eq:uheu} becomes
\be \label{eq:heu2}
	\sum_{k=1}^\infty \e^{- k \beta t_\beta} \e^{-\beta (E_k - k e_0)} \approx \beta p_\beta.
\ee
Further approximations yield (splitting the sum at $k=1$) 
\bes
	\beta p_\beta \approx \e^{- \beta t_\beta}\e^{-\beta (E_1 -  e_0)}+\sum_{k=2}^\infty \e^{- k \beta t_\beta} \e^{-\beta (E_k - k e_0)} 
	\approx \e^{- \beta  |e_0|}  + \frac{\e^{- 2 \beta t_\beta} \e^{-\beta e_\mathrm{surf}}}{1- \exp( - \beta t_\beta)} \approx \e^{-\beta |e_0|} + \frac{\e^{-\beta e_\mathrm{surf}}}{\beta t_\beta}.
\ees
Hence 
\begin{equation}\label{heuristicsstop}
	\lim_{\beta \to \infty} \frac{1}{\beta} \log (\beta p_\beta) = - \lim_{\beta \to \infty} \min \Big\{|e_0|, e_\mathrm{surf} +\frac{1}{\beta} \log  (\beta t_\beta)\Big\} = -\min\{|e_0|, \smfrac 12 e_\mathrm{surf}\}=- |e_0|,
\end{equation}
where we used in the last step that $|e_0| < \frac 12 e_\mathrm{surf}$. (Here is the point at which the heuristics deviates from the situation considered in this article.) In order to find the expectation of $T_\beta$, we approximate, again splitting the sum at $k=1$,
\bes 
	\sum_{k=1}^\infty k \e^{- k \beta t_\beta} \e^{-\beta (E_k -k e_0)}
		\approx \e^{-\beta |e_0|} + \frac{\exp(-\beta e_\mathrm{surf})}{(\beta t_\beta)^2} \sim \frac{\exp(-\beta e_\mathrm{surf})}{(\beta t_\beta)^2} = \e^{o(\beta)}.
\ees
Hence, using that $T_\beta$ assumes each $k\in\N$ with probability $q u_\beta^k \e^{- \beta V(k)}$ and recalling that $V(k)=E_k-k e_0-e_{\rm surf}$, we see that the average cardinality of a given cluster is 
\bes	 
	\E[T_\beta] \approx \frac{1}{\beta p_\beta} \frac{\exp(- \beta e_\mathrm{surf})}{(\beta t_\beta)^2} \approx \e^{\beta |e_0| +o(\beta)}.
\ees
Accordingly, the average number of clusters is $\approx N/\E[T_\beta] \approx N \exp( - \beta|e_0|)$. We expect the chain of atoms to have a length given by the number of clusters times the sum of average cluster length and average crack length
\bes
	\frac{N}{\E[T_\beta]} \Big( a \E[T_\beta] + R + \frac{1}{\beta p_\beta}\Big)
		\approx N \Bigl( a + \frac{(\beta t_\beta)^2}{\exp(-\beta e_\mathrm{surf})} \Bigr).
\ees
Since our container has length $N\ell$, this suggests $\beta t_\beta \approx \sqrt{\ell -a } \exp( - \beta e_\mathrm{surf}/2)$, in agreement with our ansatz for $t_\beta$.  This leads us altogether to a picture that is slightly different from Theorem~\ref{thm:lowdens}: 
\begin{itemize}
	\item The pressure is $\beta p_\beta \approx \e^{-\beta |e_0|}$ instead of $\approx \e^{-\beta e_{\rm surf}/2}$. 
	\item The fraction of defects is $\approx\beta p_\beta \sqrt{\ell-a}$. 
	\item The cluster size $T_\beta$ is no longer approximately geometric anymore because the dominant contribution to the infinite sum~\eqref{eq:heu2} comes from bounded $k$. Put differently, defects tend to gather at finite mutual distance.
\end{itemize}
On the other side, the following features are the same in both heuristics:
\begin{itemize}
	\item $\E[T_\beta]\to \infty$, and the \emph{size-biased} law (the cardinality of the cluster containing a given particle) $\P(\widetilde T_\beta=k) = k \P(T_\beta =k)/ \E[T_\beta]$ is still comparable to a size-biased geometric law with parameter $\exp(- \beta t_\beta)$. 
	\item The crack length has an exponential law with parameter $\beta p_\beta$ and hence an average length $\approx 1/\beta p_\beta$ (but with value $\approx \exp( \beta |e_0|)$, see above).  
\end{itemize}

This heuristics provide intuition also in less restrictive situations than under our precise assumptions of Section~\ref{sec:ass}. In particular, Eq.~\eqref{eq:heu2} is applicable with $e_\mathrm{surf} = \liminf_{k\to \infty} (E_k - k e_0)$ when $E_k-ke_0$ is not convergent, which can happen for non-convex interactions where parity plays a role~\cite{braides-cicalese07}. The comparison of $e_\mathrm{surf}$ and $e_0$ as well as the evaluation of $\inf_{k\in \N} (E_k - k e_0)$, which in general need not be equal to $|e_0|$, are in turn closely related to the location of fracture in zero-temperature models~\cite{braides-cicalese07, scardia-schloem-zanini11}.

\section{Weakly interacting lattice gas} \label{sec:aux}

In this section we analyze an abstract lattice gas model motivated by Eq.~\eqref{eq:qeff}. For the reader's orientation, it is helpful to recall the heuristics of Sections~\ref{sec:heuristics2} and \ref{sec:compete} until \eqref{heuristicsstop}. We will have no parameter $\beta$.

In Section~\ref{sec:effective} we introduce the model and find some first properties of its free energy in terms of standard renewal theory. In Section~\ref{sec:geometric} we introduce the random variable $T$ that plays the role of the number of clusters and derive precise estimates about its distance to the geometric distribution. Large-deviation principles and the relevant estimates are derived in Section~\ref{sec:LD}.

\subsection{Effective free energy} \label{sec:effective}

It is convenient to work with $f(k) = \e^{-\beta V(k)} - 1$ rather than the interaction itself.  Thus we assume that a number $q>0$ and a map $f\colon \N\to [-1,\infty)$ are given such that 
\be \label{ass:f}
	\eps:=q\sum_{k=1}^\infty |f(k)|<1.
\ee
Put differently, we assume that $C:=\sum_{k=1}^\infty |f(k)|$ is finite, and $q= \eps/C$ for $0 \le \eps < 1$. We think of $q$ and $\eps$ as small numbers, whereas $f(k)$ can be large, at least for small $k$.  Consider the partition function
\be \label{eq:ZN}
	\mathcal{Z}_N(q): 
	= \sum_{n=1}^{N} q^{n} \sum_{0=i_0<\cdots <i_{n} = N} \prod_{k=1}^{n} \bigl( 1+ f(i_k - i_{k-1})\bigr). 
\ee
It can be studied either directly, using standard tools of statistical mechanics such as cluster expansions, or with the help of standard renewal theory from probability theory; see \cite[Chapter XI]{feller-vol2}. We are going to use the latter. Let  $u  \in (0,1)$ be the unique solution of 
\be \label{eq:renewalsolv}
	q \sum_{k=1}^\infty \bigl( 1+ f(k)\bigr) u^k = 1. 
\ee
and let $T, T_1, T_2,T_3,\dots$ be independent identically distributed random variables with law 
\be \label{eq:iid} 
	\P(T = k) = q \bigl( 1+ f(k)\bigr) u^k, \qquad k\in \N.
\ee
(Then $T_i$ plays the role of the cardinality of the $i$-th cluster.) The partition function~\eqref{eq:ZN} of the defect gas is related to the random variables $T_i$ by  
\begin{align*} 
	\mathcal{Z}_N(q)
	& = u^{-N} \sum_{n=1}^{N} \sum_{0= i_0<\cdots <i_{n} = N} \P(T_1=i_1-i_0,\ldots,T_{n} = i_{n} - i_{n-1}). \\
	& = u^{-N}\P(\exists n\in \N\colon T_1+\cdots + T_{n} = N).
\end{align*}
(Recall that the integers $i_k$ correspond to locations of cracks and the variables $T_k$ count cluster sizes, i.e.,  the number of points enclosed between two successive cracks.)
It follows from standard renewal theory that $u^N\mathcal{Z}_N(q) \to 1/\E[T]$ as $N\to \infty$, hence the effective free energy is given by
\be \label{eq:fe}
 \lim_{N\to \infty} \frac{1}{N} \log \mathcal{Z}_N(q) = - \log u.
\ee
It is actually close to $\log (1+q)$, as we have that
\begin{equation}\label{prop:fe}
	|1-(1+q) u| \leq \frac{q \eps}{1-\eps}.
\end{equation}
Indeed, by a straightforward computation, Eq.~\eqref{eq:renewalsolv} is equivalent to 
		\[ %\label{eq:difference}
			1 - (1+q) u = q u \, \frac{ q \sum_{k=1}^\infty f(k) u^k }{1 - q \sum_{k=1}^\infty  f(k) u^k}. 
		\]
		Then \eqref{prop:fe} follows from the fact that $u\in (0,1)$ and the monotonicity of $x\mapsto x/(1-x)$ in $(-1,1)$.

In addition to the formula~\eqref{eq:fe} for the free energy, renewal theory also yields an explicit description of the thermodynamic limit: as $N\to \infty$, the bulk behavior is given by a stationary renewal process. In our setup, this means in particular that the probability that a nearest neighbor bond $(i,i+1)$ is broken has probability $1 / \mu$ where $\mu = \E[T]$, and given that the bond is broken, the particle $i+1$ belongs to a $k$-cluster with probability $\P(T=k)$ (same statement for particle $i$).

\subsection{Approximately geometric variables} \label{sec:geometric}

We continue our analysis of the abstract gas model introduced in Section~\ref{sec:effective}. For non-interacting defects, that is, $V(k)=0$, i.e., $f\equiv 0$, we have $u=1/(1+q)$ and the random variable $T$ (standing for the cluster size) has precisely a geometric law. For weak interactions and small $q$, we may expect approximately a geometric law. In this section we provide explicit estimates. 

\begin{lemma} \label{lem:ibound} 
	For every $r\in\N$ and $\tau>0$, 
	$$q  \sum_{k=1}^\infty k^r |f(k)| \e^{-k\tau q} \leq \frac{r!\,\eps }{(\tau q)^r}.$$
\end{lemma}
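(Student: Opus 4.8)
The plan is to separate the $k$-dependent factor $k^r\e^{-k\tau q}$ from the weights $|f(k)|$: we bound $k^r\e^{-k\tau q}$ uniformly in $k$, pull the resulting constant out of the sum, and recognize the remaining expression $q\sum_{k}|f(k)|$ as precisely $\eps$ by the standing assumption~\eqref{ass:f}.

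First I would record the elementary uniform bound. For every $k\in\N$,
\[
	k^r\e^{-k\tau q}=\frac{1}{(\tau q)^r}\,(k\tau q)^r\e^{-k\tau q}\leq\frac{1}{(\tau q)^r}\sup_{x\geq 0}x^r\e^{-x}=\frac{r^r\e^{-r}}{(\tau q)^r},
\]
where the supremum of $x\mapsto x^r\e^{-x}$ on $[0,\infty)$ is attained at $x=r$ (differentiate). The standard inequality $r^r\e^{-r}\leq r!$, which follows from $\e^r=\sum_{n\geq 0}r^n/n!\geq r^r/r!$, then yields $k^r\e^{-k\tau q}\leq r!/(\tau q)^r$ for all $k\in\N$.

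Inserting this pointwise estimate into the sum gives
\[
	q\sum_{k=1}^\infty k^r|f(k)|\e^{-k\tau q}\leq\frac{r!}{(\tau q)^r}\,q\sum_{k=1}^\infty|f(k)|=\frac{r!\,\eps}{(\tau q)^r},
\]
using the definition $\eps=q\sum_{k=1}^\infty|f(k)|$ from~\eqref{ass:f}. I do not anticipate any genuine obstacle here: this is a routine estimate, and the only step that is not purely formal is the uniform bound on $x^r\e^{-x}$ together with the inequality $r^r\e^{-r}\leq r!$, both of which are classical. (One could equally truncate the exponential via $\e^{k\tau q}\geq (k\tau q)^r/r!$, which is the same computation read backwards.)
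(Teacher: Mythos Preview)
Your proof is correct, but it differs from the paper's argument. The paper defines $g(t)=q\sum_{k\geq 1}|f(k)|\e^{tk}$, notes that $g$ is analytic and bounded by $\eps$ on the closed half-plane $\{\Re t\leq 0\}$, and applies Cauchy's integral formula on the circle of radius $\tau q$ around $t=-\tau q$ to bound $g^{(r)}(-\tau q)=q\sum_{k}k^r|f(k)|\e^{-k\tau q}$ by $r!\,\eps/(\tau q)^r$. Your route is more elementary: you simply bound the scalar factor $k^r\e^{-k\tau q}\leq r!/(\tau q)^r$ pointwise (equivalently, truncate $\e^{k\tau q}\geq (k\tau q)^r/r!$) and then invoke $q\sum_k|f(k)|=\eps$. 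Both arguments produce the identical inequality; yours avoids complex analysis entirely, while the paper's Cauchy-estimate approach packages the factorial and the power of $\tau q$ in one step without needing the auxiliary inequality $r^r\e^{-r}\leq r!$.
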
 

\begin{proof} 
	For $t\in \C$ with $\Re t\leq 0$, let $g(t):= q \sum_{k=1}^\infty |f(k)| \e^{t k}$. The function $g$ is analytic on the open half-plane $\{t\in\C\colon \Re t<0\}$ and bounded by $\eps$ on the closed half-plane $\{t\in\C\colon \Re t\leq 0\}$. Let $t= - \tau q \in (-\infty,0)$ and $r\in \N$. By Cauchy's formula, 
	\bes 
		g^{\ssup r}(t) = \frac{r!}{2\pi\mathrm{i}} \oint_{|z-t|=\tau q} \frac{g(z)}{(z-t)^{r+1}} \dd z
	\ees
	hence $\sum_{k=1}^\infty q |f(k)|k^r \e^{kt}  = g^{\ssup r}(t) \leq  r! \eps/ (\tau q)^r$.
\end{proof} 

Let $G$ be a geometric random variable with law $\P(G=k) = q (1+q)^{-k}$ for $k\in \N$. We  compare the laws $\mathcal{L}(T)$ and $\mathcal{L}(G)$ of $T$ and $G$.
Let $\widetilde T$ and $\widetilde G$ be the size-biased variables associated with $T$ and $G$, i.e., $\P(\widetilde T= k) = k \P(T=k)/\E[T]$ and $\P(\widetilde G=k)= k q^2 (1+q)^{-k-1}$. Recall the total variation norm $\|\mu\|_{\rm TV}=\sum_{k\in\N}|\mu(k)|$ of a signed measure $\mu$ on $\N$.

\begin{lemma} \label{lem:geom1}
	As $\eps,q\to 0$,
	$$ u  = \frac{1}{1+q} \bigl(1+O(q \eps)\bigr) \qquad \mbox{and}\qquad
		\E[T]  =   \frac{1}{q}\bigl(1 + q +O(\eps)\bigr).
	$$
	Moreover,
	$$
	\|T-G\|_{\rm TV}\leq O(\eps),\qquad \|\widetilde T-\widetilde G\|_{\rm TV}\leq O(\eps).
	$$
\end{lemma}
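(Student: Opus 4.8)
The plan is to use the defining equation~\eqref{eq:renewalsolv} together with the Cauchy-type bound of Lemma~\ref{lem:ibound} (with $\tau$ of order $1$, say $\tau q = -\log u$ or simply $\tau = 1$ after noting $u \geq c > 0$) to control all the error terms. First I would establish the estimate for $u$. Starting from the identity already displayed in the text,
\[
	1 - (1+q) u = q u \, \frac{ q \sum_{k=1}^\infty f(k) u^k }{1 - q \sum_{k=1}^\infty  f(k) u^k},
\]
and using $|q\sum_k f(k) u^k| \leq q\sum_k |f(k)| = \eps$ together with $u \in (0,1)$, one gets $|1-(1+q)u| \leq q\eps/(1-\eps)$, which is exactly~\eqref{prop:fe}; rearranging gives $u = \frac{1}{1+q}(1 + O(q\eps))$ as $\eps,q\to 0$. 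For $\E[T]$, differentiate the constraint: since $\E[T] = \sum_k k q(1+f(k))u^k$ and, from~\eqref{eq:renewalsolv}, $q\sum_k (1+f(k))u^k = 1$, I would write $\E[T] = \frac{d}{du}\bigl(u \cdot q\sum_k(1+f(k))u^{k}\bigr)\big/(\text{something})$ — more concretely, multiply~\eqref{eq:renewalsolv} through and use that $q\sum_k k u^k = qu/(1-u)^2$ for the ideal part, plus $q\sum_k k|f(k)|u^k \leq \eps/(\tau q) \cdot (\tau q) = O(\eps/q)\cdot q = O(\eps)$ — wait, more carefully Lemma~\ref{lem:ibound} with $r=1$ gives $q\sum_k k|f(k)|\e^{-k\tau q} \leq \eps/(\tau q)$, and since $u$ is bounded away from $0$ uniformly, $u^k \leq \e^{-ck}$ so this sum is $O(1/q)\cdot\eps$? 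That is too weak; instead I use $u^k = (1-(1-u))^k$ with $1-u \asymp q$, so $q\sum k|f(k)|u^k = O(\eps)$ after dividing the Lemma's bound ($\eps/(\tau q)$ with $\tau q \asymp 1$, choosing $\tau \asymp 1/q$) — the point is to pick $\tau q = 1-u \asymp q$ so that $\e^{-k\tau q} \approx u^k$, giving $q\sum_k k|f(k)|u^k \lesssim \eps/(\tau q) = O(\eps/q)$, hence $q \cdot$ that $= O(\eps)$. Then $\E[T] = \frac{qu}{(1-u)^2} + O(\eps)$, and substituting $u = \frac{1}{1+q}(1+O(q\eps))$ gives $(1-u) = \frac{q}{1+q}(1+O(\eps))$, so $\frac{qu}{(1-u)^2} = \frac{1+q}{q}(1+O(\eps)) = \frac1q(1+q+O(\eps))$.

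Next, for the total variation bounds, I would compute $\|\mathcal L(T) - \mathcal L(G)\|_{\rm TV} = \sum_k |q(1+f(k))u^k - q(1+q)^{-k}|$ and split it using the triangle inequality into a term involving $f$ and a term involving the discrepancy $u$ vs.\ $1/(1+q)$: namely $\sum_k q|f(k)|u^k \leq \eps$ — but $\eps$ is the target, so this is fine — plus $\sum_k q|u^k - (1+q)^{-k}|$. For the second sum, writing $u = (1+q)^{-1}(1+\theta)$ with $\theta = O(q\eps)$, one has $u^k - (1+q)^{-k} = (1+q)^{-k}((1+\theta)^k - 1)$ and $q\sum_k (1+q)^{-k}|(1+\theta)^k-1| = O(q \cdot |\theta|/q^2) = O(\eps)$ by the same geometric-series manipulation (the factor $k|\theta| \lesssim k q\eps$ summed against $(1+q)^{-k}$ with weight $q$ gives $q \cdot q\eps / q^2 = \eps$). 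So $\|T-G\|_{\rm TV} = O(\eps)$. The size-biased estimate is analogous: $\P(\widetilde T = k) = k\P(T=k)/\E[T]$ versus $\P(\widetilde G = k) = kq^2(1+q)^{-k-1}$; I write $\|\widetilde T - \widetilde G\|_{\rm TV} \leq \frac{1}{\E[T]}\sum_k k|\P(T=k) - \frac{\E[T]}{1/(q^2(1+q)^{-1})}\P(G=k)\cdot(\ldots)|$ — cleaner: bound $\sum_k k|\frac{\P(T=k)}{\E[T]} - \frac{\P(G=k)}{\E[G]}|$ by splitting into $\frac{1}{\E[T]}\sum_k k|\P(T=k)-\P(G=k)| + |\frac{1}{\E[T]} - \frac{1}{\E[G]}|\sum_k k\P(G=k)$, where the first piece uses Lemma~\ref{lem:ibound} with $r=1$ (giving $\frac{1}{\E[T]}\cdot O(\eps/q) \cdot$ corrections $= O(q)\cdot O(\eps/q) = O(\eps)$, using $\E[T] \asymp 1/q$) plus the $u$-discrepancy handled as before, and the second piece uses $\E[G] = \frac{1+q}{q}$, $\E[T] = \frac{1+q}{q}(1+O(\eps))$ so $|\frac1{\E[T]} - \frac1{\E[G]}| = O(\eps) \cdot \frac{q}{1+q}$, times $\E[\widetilde G] = \E[G^2]/\E[G] = O(1/q)$, giving $O(\eps)$ again.

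The main obstacle I anticipate is bookkeeping the weights so that each application of Lemma~\ref{lem:ibound} genuinely produces $O(\eps)$ rather than $O(\eps/q)$ or $O(q\eps)$: the lemma's bound $\eps/(\tau q)^r$ is only useful if $\tau$ is chosen of order $1/q$ (so that $\tau q \asymp 1$), and one must check that with this choice $\e^{-k\tau q}$ really dominates $u^k$ (true because $u \leq \e^{-(1-u)} \leq \e^{-cq}$ for some $c>0$ once $q$ is small, so pick $\tau q$ slightly below $1-u$). The other delicate point is that $T$ genuinely has a well-defined finite mean $\E[T]$ — this requires knowing the radius of convergence considerations in~\eqref{eq:renewalsolv} force $u < 1$ strictly with $1-u \asymp q$, which follows from the ideal-gas comparison and~\eqref{prop:fe}. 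Once these two uniform-in-$q$ facts ($u$ bounded away from $0$ and $1-u \asymp q$) are in hand, every sum above reduces to a geometric series plus a Lemma~\ref{lem:ibound} remainder, and the stated orders follow by routine estimation.
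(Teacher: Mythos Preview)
Your plan is essentially the paper's own proof: derive the estimate for $u$ from the displayed identity (that is exactly how \eqref{prop:fe} is obtained), split $\E[T]$ into the geometric part $qu/(1-u)^2$ and the $f$-correction $q\sum_k k f(k) u^k$ controlled by Lemma~\ref{lem:ibound}, and for the total variation distances split into the $f$-piece $\sum_k q|f(k)|u^k\leq \eps$ and the $u$-versus-$(1+q)^{-1}$ discrepancy. The paper handles this last discrepancy by observing that $\hat u^k-1$ has constant sign so that $\sum_k q(1+q)^{-k}|\hat u^k-1|$ can be summed exactly, whereas you use a mean-value bound $|\hat u^k-1|\lesssim k|\theta|$; both give $O(\eps)$.

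One point to clean up: your discussion of how to choose $\tau$ in Lemma~\ref{lem:ibound} is internally inconsistent. You write that $\tau$ should be ``of order $1/q$ so that $\tau q\asymp 1$'', but then (correctly) note that the constraint is $u\leq \e^{-\tau q}$ with $-\log u\asymp q$, which forces $\tau q\asymp q$, i.e.\ $\tau$ of order~$1$ (the paper takes $\tau=1/2$). With that choice Lemma~\ref{lem:ibound} gives $q\sum_k k|f(k)|u^k=O(\eps/q)$, not $O(\eps)$; this is exactly the right size, since after factoring $\E[T]=\tfrac{1}{q}(1+q+O(\eps))$ an additive $O(\eps/q)$ becomes $O(\eps)$ inside the parentheses. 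Your line ``hence $q\cdot$ that $=O(\eps)$'' should be dropped --- there is no extra factor of $q$. Once this bookkeeping is straightened out, the argument goes through as you describe.
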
 

\begin{proof} 
	The estimate on $u$ follows from \eqref{prop:fe} and the assumption~\eqref{ass:f}.
	Next we compute
	\be \label{eq:t1}
		\E[T] = \sum_{k=1}^\infty k \P(T=k)
		 = q \sum_{k=1}^\infty k  f(k) u^k + \frac{q u}{(1- u)^2}. 
	\ee
	Eq.~\eqref{prop:fe} shows that eventually $u\leq\exp(-q/2)$. Consequently, by Lemma~\ref{lem:ibound}, the first term on the right-hand side of 
	Eq.~\eqref{eq:t1} is of order $O(\eps/q)$. For the second term, set $\hat u = (1+q) u = 1+ O(q\eps)$ and note
	\[ %\label{eq:quotient}
		\frac{q u}{(1- u)^2} =\frac {q/(q+1)} {(1- 1/(1+q))^2} \times \hat u \frac{q^2}{(1+ q- \hat u )^2} = \frac{1}{q}\bigl(1 + q + O(\eps)\bigr). 
	\]
	The estimate for $\E[T]$ follows. For the total variation distance, we estimate, using \eqref{eq:iid} and \eqref{ass:f},
	\bes
		\|T-G\|_{\rm TV}=\sum_{k=1}^\infty |\P(T=k) - \P(G=k)| \leq O(\eps) + \sum_{k=1}^\infty \frac{q}{(1+q)^k} \bigl| \hat u^k-1\bigr|. 
	\ees	
   The latter term is equal to 
   \bes
\Bigl| \sum_{k=1}^\infty \frac{q}{(1+q)^k} (1 - \hat u^k) \Bigr|= \Bigl| 1 - \frac{q \hat u}{1+q} \times \frac{1}{1- \hat u/[1+q]} \Bigr| 
		= \Bigl| 1 - \frac{q\hat u}{q+ 1- \hat u} \Bigr| = O(\eps).
	\ees
	It follows that $\|T-G\|_{\rm TV}$ is of order $O(\eps)$.  The size-biased distributions are treated in a similar way.
\end{proof} 

\noindent We also need some control of the cumulant generating function of $T$ and its Legendre transform. Let 
\be \label{eq:phidef}
	\varphi(t) = \log \E[\e^{tT}] = \log \Bigl(q\sum_{k=1}^\infty (1+ f(k))u^k \e^{tk}\Bigr),\qquad I(x) = \varphi^*(x) = \sup_{t \in \R} (t x - \varphi(t) ).
\ee
We have $\varphi(t) = \infty$ for $t \geq - \log u = \log(1+q +o(q))$. The function $\varphi$ is a smooth, increasing, strictly convex bijection from $(- \infty, -\log u)$ onto $\R$. As is well-known, $\varphi'(0) = \E[T]=:\mu$, $\varphi''(0) =\var(T)$ (the variance of $T$), $I(\mu)=0$ and $I''(\mu) = 1/\var[T]$. In view of the geometric approximation, we expect $\var[T]\approx 1/q^2$, and that the quadratic approximation to $I(x)$ for $x\approx \mu$ becomes $I(x) \approx \frac{1}{2}q^2(x-\mu)^2$. The next lemma provides a corresponding lower bound with some uniformity as $q,\eps \to 0$. 

\begin{lemma}\label{lem:ilevel} 
	Let $\mu =\E[T]= (1 + q + O(\eps))q^{-1}$. Then there exist $c,\delta>0$ such if $q,\eps\in [0,\delta]$, then, for all $x\in \R$ with $|x-\mu|\leq \delta/q$,
	\bes
		I(x) \geq \frac{1}{2 c} q^2 (x-\mu)^2.
	\ees
\end{lemma}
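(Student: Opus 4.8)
The plan is to control the cumulant generating function $\varphi$ near $t=0$ and deduce a lower bound on its Legendre transform $I$ via a second-order Taylor argument. The key point is that the bound must be \emph{uniform} as $q,\eps\to 0$, which is why the statement is phrased in terms of the rescaled variable $q(x-\mu)$ rather than $x-\mu$ itself.

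\medskip

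\noindent\textbf{Step 1: Rescaling.} Since $\mu = \E[T] \asymp 1/q$, it is natural to substitute $t = q s$ and study $\psi(s) := \varphi(qs)$. Then $\psi'(0) = q\mu = 1 + q + O(\eps)$ and $\psi''(0) = q^2 \var(T)$. The geometric comparison (Lemma~\ref{lem:geom1}) should give $\var(T) = \frac{1}{q^2}(1 + O(\eps))$ up to lower-order terms: indeed for the geometric variable $G$ one has $\var(G) = (1+q)/q^2$, and the total-variation closeness of the size-biased laws $\widetilde T, \widetilde G$ controls $\E[T^2] = \mu\,\E[\widetilde T]$ relative to $\E[G^2] = \E[G]\,\E[\widetilde G]$. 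So $\psi''(0) = 1 + O(\eps) + O(q)$, hence $\psi''(0) \in [\tfrac12, 2]$ say, once $q,\eps$ are below some $\delta$.

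\medskip

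\noindent\textbf{Step 2: Uniform control of $\psi''$ on a neighborhood.} I would show $\psi''(s)$ stays bounded above (and the relevant combinations bounded) for $s$ in a fixed neighborhood of $0$, say $|s| \le \sigma$ for some absolute $\sigma>0$. Writing $\E[\e^{tT}] = q\sum_k (1+f(k))u^k\e^{tk}$, the derivatives in $t$ bring down powers of $k$; the $f$-dependent part is handled by Lemma~\ref{lem:ibound} (with $\tau q$ playing the role of the gap between $-\log u \approx q$ and $t = qs$, so one needs $s$ bounded away from $1$), giving contributions $O(\eps)$ times powers of $1/q$ that cancel against the $q^2$ from rescaling; the $f\equiv 0$ part is an explicit geometric-type sum, analytic and uniformly controlled for $u\e^{qs}$ bounded away from $1$, i.e.\ for $s \le 1 - c_0$. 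This yields a uniform bound $|\psi''(s)| \le c$ on $|s|\le\sigma$, and the lower bound $\psi''(s) \ge \psi''(0) - c|s| \ge \tfrac14$ for $|s|$ small. Thus $\psi$ is uniformly (in $q,\eps$) smooth and strongly convex on a fixed small interval around $0$.

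\medskip

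\noindent\textbf{Step 3: Transfer to $I$.} By definition $I(x) = \sup_t(tx - \varphi(t)) = \sup_s (sq x - \psi(s))$. Write $y = q(x-\mu)$; the hypothesis $|x-\mu|\le\delta/q$ becomes $|y|\le\delta$. Choosing the (suboptimal but legitimate) test point $s = y/\psi''(0)$ — or simply Taylor-expanding: $\psi(s) = \psi(0) + \psi'(0)s + \tfrac12\psi''(\xi)s^2$ for some $\xi$ between $0$ and $s$, and $qx\cdot s = (q\mu + y)s = \psi'(0)s + ys$ — we get
\[
	I(x) \ge sq x - \psi(s) = ys - \tfrac12\psi''(\xi)s^2 .
\]
Optimizing the right side over $s$ (staying within $|s|\le\sigma$, which is possible once $\delta$ is small enough relative to $\sigma$ and the lower bound on $\psi''$) gives $I(x)\ge \tfrac{y^2}{2\sup\psi''}\ge \tfrac{1}{2c}\,y^2 = \tfrac{1}{2c}q^2(x-\mu)^2$, which is the claim with the same $c$ from Step 2 (enlarging $c$ if needed). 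One should check $I(\mu)=0$ is consistent, which it is since $y=0$ there.

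\medskip

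\noindent\textbf{Main obstacle.} The delicate part is Step 2: making the bound on $\psi''$ (equivalently on $q^2\varphi''(qs)$) genuinely \emph{uniform} in $q,\eps\to 0$, since the individual terms $f(k)$ may be huge for small $k$ and $u\to 1$. The resolution is that $u$ is pinned by the renewal equation \eqref{eq:renewalsolv}, so $-\log u = q(1+O(q\eps))$ and $u\e^{qs}$ stays bounded below $1$ for $s < 1 - c_0$; combined with Lemma~\ref{lem:ibound} this controls $q\sum_k k^r|f(k)|u^k\e^{qsk} = O(\eps/q^r)$, and the $q^2$ (resp.\ $q^r$) from the chain rule exactly absorbs the $1/q^r$. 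Keeping careful track that $\sigma$ and the lower bound on $\psi''$ can be chosen before sending $q,\eps\to0$ — and that $\delta$ is then chosen last, small relative to those — is the bookkeeping that makes the argument work.
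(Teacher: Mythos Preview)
Your approach is essentially the same as the paper's: control the second-order behavior of $\varphi$ uniformly on the window $|t|\lesssim q$ by splitting into the explicit geometric part and the $f$-part (handled via Lemma~\ref{lem:ibound}), then take the Legendre transform over this window. The paper's execution is a bit slicker than yours: instead of bounding $\psi''(s)=q^2\varphi''(qs)$, which forces you to control the \emph{ratio} $\E[T^2\e^{tT}]/\E[\e^{tT}]$ and hence also to lower-bound $\E[\e^{tT}]$ on the whole window (a step you gloss over), the paper bounds $\E[T^2\e^{|t|T}]\le c_\tau/q^2$ directly, uses the elementary Taylor remainder $|\E[\e^{tT}]-1-t\mu|\le \tfrac{t^2}{2}\E[T^2\e^{|t|T}]$, and then the one-line inequality $\log(1+u)\le u$ to get $\varphi(t)\le t\mu+\tfrac{c_\tau}{2q^2}t^2$ without ever touching $\varphi''$. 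Your claimed \emph{lower} bound on $\psi''$ (and the variance computation via size-biased laws in Step~1) is unnecessary for the lemma---only the quadratic \emph{upper} bound on $\varphi$ matters for a lower bound on $I$---and dropping it removes the one place where your sketch is not fully justified.
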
 

\begin{proof} 
	Fix $\tau \in (0,1)$. Then for some $c_\tau, \delta_\tau>0$, and all $q\leq \delta_\tau$, 
	\be \label{eq:mocubo}
		\sup_{|t|\leq \tau q}\E[ T^2 \e^{t T}] \leq \frac{c_\tau}{q^2}. 
	\ee
	Indeed, for $|t| \leq \tau q$ and abbreviating $w= u \e^{t}$, we have
	\begin{align} 
		\E[ T^2 \e^{t T}]  & = q \sum_{k=1}^\infty k^2(1+ f(k)) w^k %= q (w^2\sum_{k=1}^\infty k (k-1) w^{k-2} + w \sum_{k=1}^\infty k w^{k-1}) + \sum_{k=1}^\infty q f(k) w^k 
		= \frac{2 q w^2}{(1-w)^3} + \frac{q w}{(1-w)^2}+q\sum_{k=1}^\infty  k^2 f(k) w^k. \label{eq:mocubob} 
	\end{align} 
	Notice $w = (1 + q)^{-1} (1 + O(\eps q)) \e^t = (1 - q + O(q^2 + \eps q))(1 + \tau q + O(q^2)) = 1 - (1 - \tau) q + O(q^2 + \eps q)$. In particular, $w\to 1$. Hence,	
	choosing $\delta= \delta_\tau>0$ small enough, we find that for all $\eps, q\leq \delta$, we have $w \leq \exp( - (1-\tau)q/2 )$ and $1 - w \geq (1-\tau)q/2$. The bound~\eqref{eq:mocubo} now follows from~\eqref{eq:mocubob} and Lemma~\ref{lem:ibound}. Noting that
	\bes 
		\bigl|\E[\e^{t T}] - 1 - t\, \E[T] \bigr| \leq \frac{t^2}{2} \E[T^2 \e^{|t| T}], 
	\ees
	and recalling $\mu = \E[T]\sim 1/q$ by Lemma~\ref{lem:geom1}, we deduce (note that $\log (1+u)\leq u$) 
	\[ %\label{eq:mocuboc}
	 \varphi(t)\leq t \mu + \frac{c_\tau}{2 q^2} t^2
	\]
	for $|t|\leq \tau q$ and $\eps, q\leq \delta_\tau$.  It follows that 
	\bes
		I(x) \geq \sup_{|t|\leq \tau q} \bigl( t x - t\mu - \frac{c_\tau}{2 q^2} t^2 \bigr) =\frac{q^2}{2 c_\tau}  \min \big\{ (x- \mu)^2, (\tau c_\tau / q)^2 \big\}. 
	\ees
	If $\delta$ is chosen small enough, then indeed $\min \{ (x- \mu)^2, (\tau c_\tau / q )^2 \} = (x-\mu)^2$ for $|x-\mu|\leq \delta/q$. 
	%Going back to~\eqref{eq:mocubob}, we see that the constant $c_\tau$, for small $\tau$, is of the order of $ (1-\tau)^{-3} + (1-\tau)^{-2}+\eps$, which for small enough $\tau$ and $\eps$ is smaller than $4$. 
	\end{proof}

\subsection{Large deviations}\label{sec:LD}

The system that we wish to investigate can be  expressed exactly in terms of a lattice gas of defects as in Section~\ref{sec:effective} only when interactions across cracks vanish, i.e., for compactly supported potentials, see Assumption 4 in Section~\ref{sec:ass}. In the general case, we estimate the contribution of interactions across cracks by some small number $\lambda$ times the number of cracks, see Lemma~\ref{lem:rep} below. In order to quantify the effect of this small contribution we use large deviations theory. Providing this is the purpose of the present section. We keep all the notation from Sections~\ref{sec:effective} and \ref{sec:geometric}.

For the reader's convenience, we briefly repeat what a large deviations principle (LDP) is, see \cite{dembo-zeitouni} for more about this theory. We say that a sequence of random variables $X_N$ with values in a Polish space $\mathcal X$ satisfies an LDP with speed $N$ and with lower semi-continuous rate function $I\colon\mathcal X\to[0,\infty]$ if for every open set $G\subset \mathcal X$ and every closed subset $F\subset\mathcal X$,
$$
\limsup_{N\to\infty}\frac 1N\log \P(X_N\in F)\leq - \inf_F I,\qquad \liminf_{N\to\infty}\frac 1N\log \P(X_N\in G)\geq - \inf_G I.
$$
The intuitive idea behind this is that $\P(X_N\approx x)\approx \e^{-N I(x)}$ for $x\in\mathcal X$. Below, we will be working with $\mathcal X$ chosen as $\N$ and the set of probability measures on $\N$ and the product of the two.

Let $(M_N)_{N\in \N}$ be a sequence of $\N$-valued random variables with law 
\be\label{eq:mnk}
	\P( M_N = k) = \frac{\P(T_1+\cdots + T_k = N)}{\sum_{n=1}^N \P(T_1+\cdots + T_n = N)}
\ee
for $k=1,\ldots,N$ and $\P(M_N\geq N+1) = 0$. Thus $M_N$ counts the number of renewal intervals between $0$ and $N$ given that there are renewal points at $0$ and $N$. For the chain of atoms, $M_N$ corresponds to the number of clusters (= 1 + number of cracks) in an $N$-particle chain. 
	
\begin{lemma} \label{lem:mnldp}
	 $(M_N/N)_{N\in \N}$ satisfies a large deviations principle with speed $N$ and rate function 
	\bes
		J(y) = \begin{cases}
				y I (y^{-1}), &\quad y > 0,\\
				- \log u, &\quad y = 0, \\
				\infty, & \quad y < 0.
			\end{cases}  
	\ees
	 Moreover for all $\lambda \in \R$, 
	\be \label{eq:gel}
		\lim_{N\to \infty}\frac{1}{N}\log \E\bigl[\e^{\lambda M_N}\bigr] = - \varphi^{-1}(- \lambda).
	\ee
\end{lemma}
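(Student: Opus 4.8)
The plan is to derive both assertions from the renewal representation of $\mathcal Z_N$ and a tilting (change-of-measure) argument governed by the cumulant generating function $\varphi$. First I would establish~\eqref{eq:gel}, and then deduce the LDP from it via the Gärtner--Ellis theorem, after separately handling the boundary value $y=0$.

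\emph{Step 1: the moment generating function.} Writing $S_n = T_1 + \cdots + T_n$, the law~\eqref{eq:mnk} gives
\[
  \E\bigl[\e^{\lambda M_N}\bigr]
  = \frac{\sum_{n=1}^N \e^{\lambda n}\,\P(S_n = N)}{\sum_{n=1}^N \P(S_n = N)}
  = \frac{u^N\,\sum_{n=1}^N \e^{\lambda n}\,\P(S_n = N)}{u^N\,\mathcal Z_N(q)}\,,
\]
and the denominator satisfies $u^N\mathcal Z_N(q)\to 1/\E[T]$ by renewal theory, so it contributes nothing to the exponential rate. For the numerator I would introduce, for $s>0$ small, the exponentially tilted step distribution with weights $\propto \e^{sk}\P(T=k)$; its normalization is $\e^{\varphi(s)}$ and a new renewal sequence $S^{(s)}_n$ has $\P(S^{(s)}_n = N) = \e^{-\varphi(s)n}\e^{sN}\P(S_n=N)$. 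Hence $\sum_n \e^{\lambda n}\P(S_n=N) = \e^{-sN}\sum_n \e^{(\lambda+\varphi(s))n}\P(S^{(s)}_n = N)$. Choosing $s = s(\lambda)$ so that $\varphi(s) = -\lambda$ (possible since $\varphi$ is a smooth increasing bijection from $(-\infty,-\log u)$ onto $\R$, by the discussion after~\eqref{eq:phidef}, as long as $\lambda$ is in the appropriate range; for $\lambda$ outside one argues by monotonicity/truncation) kills the $\e^{(\lambda+\varphi(s))n}$ factor, leaving $\e^{-sN}\sum_n \P(S^{(s)}_n = N) = \e^{-sN} u_s^{-N}\mathcal Z_N^{(s)}$ for the tilted renewal partition function, which again grows subexponentially by renewal theory applied to the tilted walk (its step distribution has a finite mean). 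Therefore $\frac1N\log\E[\e^{\lambda M_N}] \to -s(\lambda) = -\varphi^{-1}(-\lambda)$, which is~\eqref{eq:gel}. Care is needed that $\varphi^{-1}(-\lambda)>0$, i.e. $\lambda < -\varphi(0)= -\log\E[\e^0\cdots]$... in fact $\varphi(0)=0$ so $s(\lambda)>0$ iff $\lambda<0$; for $\lambda\geq 0$ the tilt is in the other direction but the same computation (now with $s\leq 0$) applies, and one must check the tilted mean stays finite, which holds since $\varphi$ is finite and smooth on all of $(-\infty,-\log u)$.

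\emph{Step 2: from~\eqref{eq:gel} to the LDP.} The limiting logarithmic moment generating function of $M_N/N$ is $\Lambda(\lambda) := \lim_N \frac1N\log\E[\e^{\lambda M_N}] = -\varphi^{-1}(-\lambda)$, which is finite for all $\lambda\in\R$, differentiable, and convex (it is the negative of the inverse of a convex increasing function). By the Gärtner--Ellis theorem, $(M_N/N)$ satisfies an LDP with speed $N$ and rate function $\Lambda^*(y) = \sup_\lambda(\lambda y - \Lambda(\lambda))$. It remains to identify $\Lambda^*$ with $J$. For $y>0$, substituting $\lambda = -\varphi(s)$ and using that $\varphi^{-1}$ has derivative tending to $0$ at one end and to $-\log u$... a direct computation of the supremum gives $\Lambda^*(y) = \sup_s(-\varphi(s)y + s) = y\sup_s( s/y - \varphi(s)) = y\,\varphi^*(1/y) = y\,I(1/y)$, matching $J(y)$. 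For $y=0$: $\Lambda^*(0) = \sup_\lambda(-\Lambda(\lambda)) = \sup_\lambda \varphi^{-1}(-\lambda) = \sup_{t<-\log u} t = -\log u$, matching $J(0)$; and $\Lambda^*(y)=\infty$ for $y<0$ since $M_N/N>0$. The value $y=0$ corresponds to $M_N/N\to 0$, i.e. $O(1)$ clusters, consistent with $J(0)=-\log u$ being the free energy~\eqref{eq:fe}; one should double-check lower semicontinuity of $J$ at $0$, which follows from $\lim_{y\downarrow 0} yI(1/y) = -\log u$ using $I(x)/x \to -\log u$ as $x\to\infty$ (a standard consequence of $\varphi(t)\to\infty$ precisely at $t=-\log u$).

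\emph{Main obstacle.} The routine parts are the tilting algebra and invoking Gärtner--Ellis. The genuine work is controlling the subexponential prefactors uniformly — specifically, verifying that for the tilted renewal walk $S^{(s)}_n$ the quantity $u_s^N\mathcal Z_N^{(s)}(q)$ (equivalently $\sum_n\P(S^{(s)}_n=N)$) is indeed bounded above and below by subexponential factors, so that it does not affect the exponential rate. This requires checking that the tilted step law still has all the properties (finite mean, aperiodicity on $\N$ — here $\N$-valued with $\P(T=1)>0$ ensures aperiodicity) needed for the renewal theorem, and that the convergence, while not uniform in $s$, is enough because we only need it for each fixed $\lambda$. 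A secondary subtlety is matching the rate functions precisely at the boundary $y=0$ and confirming that Gärtner--Ellis delivers the full LDP (not just the upper bound) — this needs $\Lambda$ to be essentially smooth and lower semicontinuous, both of which are immediate here since $\Lambda$ is finite and $C^\infty$ on all of $\R$.
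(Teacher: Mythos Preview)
Your proposal is correct and follows essentially the same approach as the paper: both compute the limiting cumulant generating function $\psi(\lambda)=-\varphi^{-1}(-\lambda)$ by an exponential tilt of the step distribution combined with the renewal theorem, and then invoke G\"artner--Ellis. The only notable difference is in identifying the Legendre transform $\psi^*$: the paper cites a general result on Legendre transforms of compositions \cite[Theorem~4]{hiriart-martinez03}, whereas you compute it directly via the substitution $\lambda=-\varphi(s)$ and handle the boundary $y=0$ explicitly --- your route is slightly more self-contained, and your remarks on aperiodicity and essential smoothness spell out hypotheses that the paper leaves implicit.
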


\begin{proof} 
	We already know from renewal theory~\cite[Chapter XI]{feller-vol2} that the denominator in Eq.~\eqref{eq:mnk} converges to $1 / \E[T]$. 
	For the numerator, pick  $t< - \log u$ and let  $\hat T,\hat T_1,\hat T_2,\ldots$ be i.i.d. random variables with tilted law $\P(\hat T= k) = \exp( t k) \P(T=k) /\exp(\varphi(t)) $. Then $\E[\hat T] = \varphi'(t)$ and by standard renewal theory 
\bes
	 \sum_{k=1}^N \e^{k \lambda }\P(T_1+\cdots + T_k = N) 
			= \e^{-N t} \sum_{k=1}^N \P(\hat T_1+\cdots+ \hat T_k = N)  =(1+o(1)) \frac{\e^{-N t}}{\E[\hat T]}
\ees
where $\lambda = - \varphi(t)$. It follows that 
\bes
	\lim_{N\to \infty} \frac{1}{N}\log \E\bigl[\e^{\lambda M_N}\bigr] = - t = - \varphi^{-1}(-\lambda)=:\psi(\lambda).
\ees
This proves Eq.~\eqref{eq:gel}.
Now $\varphi$ is a smooth, strictly convex, monotone increasing bijection from $(-\infty,-\log u)$ onto $\R$.  It follows that $\psi\colon \R\to (\log u,\infty)$ is a monotone increasing bijection and strictly convex as well. The G{\"a}rtner-Ellis theorem shows that $(M_N/N)_{N\in \N}$ satisfies a large deviations principle with speed $N$ and rate function $J=\psi^*$. The explicit expression for the Legendre transform $\psi^*$ follows from $I=\varphi^*$ and \cite[Theorem 4]{hiriart-martinez03}.
\end{proof} 

\begin{lemma} \label{lem:tilted}
	There exist $\delta>0$ and $c,C>0$ such that the following holds for all $\eps,q,\lambda\in (0,\delta)$: With 
 $A=\{y\in\R \colon |\mu y- 1 |\geq C \sqrt{\lambda}\}$,
	\bes
		\limsup_{N\to \infty} \frac{1}{N}\log \E\Bigl[\e^{\lambda M_N} \1_{\{M_N/N \in A\}}\Bigr] \leq - c q \lambda. 
	\ees
\end{lemma}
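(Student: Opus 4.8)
The plan is to combine the large deviations principle of Lemma~\ref{lem:mnldp} with the quadratic lower bound on the rate function furnished by Lemma~\ref{lem:ilevel}. Because $M_N\le N$, the variable $M_N/N$ lives in the compact interval $[0,1]$, where $\e^{\lambda M_N}\le \e^{\lambda N}$; so the (elementary) upper-bound half of Varadhan's lemma---a finite covering combined with the large deviations upper bound---gives
\[
	\limsup_{N\to\infty}\frac1N\log\E\bigl[\e^{\lambda M_N}\1_{\{M_N/N\in A\}}\bigr]\le \sup_{y\in A\cap[0,1]}\bigl(\lambda y-J(y)\bigr),
\]
with $J$ the rate function of Lemma~\ref{lem:mnldp}, $J(y)=yI(1/y)$ for $y>0$. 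It thus suffices to bound the right-hand side by $-cq\lambda$ for a suitable fixed $c>0$, for all small $\eps,q,\lambda$, and with $C$ fixed large enough independently of these parameters.

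Set $y_0:=1/\mu$. The function $J$ is convex (a perspective function of the convex $I$) with $J(y_0)=y_0I(\mu)=0$, hence it is non-increasing on $[0,y_0]$ and non-decreasing on $[y_0,\infty)$, with minimum $0$. The first step is to transcribe Lemma~\ref{lem:ilevel} into a statement about $J$: using $|1/y-\mu|=\mu|y-y_0|/y$ and $q\mu=1+O(q+\eps)$ (so that $y_0/q=(q\mu)^{-1}$ is bounded above and below) one obtains constants $c_1,\delta_1>0$, depending only on the constants in Lemma~\ref{lem:ilevel}, such that for $q,\eps$ small
\[
	J(y)\ge \frac{1}{2c_1q}(y-y_0)^2\qquad\text{whenever }|y-y_0|\le\delta_1 q ;
\]
the substance here is that $\mu\asymp 1/q$ forces the curvature of $J$ at its minimum to be of order $1/q$. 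Note that $A=\{y:|y-y_0|\ge\rho\}$ with $\rho:=C\sqrt\lambda\,y_0$, and that $c_1q\lambda\le\rho\le\delta_1 q$ once $\lambda$ is small. I then split $A\cap[0,1]$ according to the size of $s:=|y-y_0|$.

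On the \emph{near} region $\rho\le s\le\delta_1 q$ the quadratic bound gives $\lambda y-J(y)\le \lambda y_0+\lambda s-s^2/(2c_1q)=:g(s)$; since $\rho\ge c_1q\lambda$, $g$ is decreasing on $[\rho,\delta_1 q]$, so
\[
	g(s)\le g(\rho)=\lambda y_0\Bigl(1+C\sqrt\lambda-\frac{C^2 y_0}{2c_1q}\Bigr)\le -q\lambda
\]
as soon as $C$ is large enough that $C^2 y_0/(2c_1q)\ge 4$ and $\lambda$ is small enough that $C\sqrt\lambda\le 1$. On the \emph{far} region $s>\delta_1 q$ one argues more crudely: for $y<y_0$ monotonicity of $J$ and the quadratic bound at $s=\delta_1 q$ give $J(y)\ge\delta_1^2 q/(2c_1)$, while $\lambda y\le 2q\lambda$ since $y<y_0\le 2q$; for $y>y_0$ one combines the quadratic bound at $y_0+\delta_1 q$ with convexity of $J$ (whose right tangent there has slope $\ge\delta_1/(2c_1)$), so that $\lambda y-J(y)$ is affine decreasing in $y$ once $\lambda<\delta_1/(2c_1)$ and hence maximal at $y=y_0+\delta_1 q$; either way $\lambda y-J(y)\le -q\lambda$ for $\lambda$ small. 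Taking $c=1$ and folding all the finitely many smallness requirements on $\eps,q,\lambda$ into a single $\delta>0$ finishes the proof.

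The bookkeeping of constants is routine; the one genuinely delicate point is the near region, where the linear gain $\lambda y\asymp q\lambda$ and the quadratic cost $J(y)\asymp q^{-1}(y-y_0)^2$ are of the same order $q\lambda$ on the window $s\asymp q\sqrt\lambda$. This is precisely why the excluded set $A$ is defined with the threshold $C\sqrt\lambda$ (rather than, say, a fixed multiple of $q\lambda$): only at this scale can a single large constant $C$ make the quadratic cost dominate by a definite margin, uniformly as $\eps,q,\lambda\to0$. Lemma~\ref{lem:ilevel} is tailored to supply exactly this curvature estimate.
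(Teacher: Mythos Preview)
Your argument is correct and follows a somewhat different route from the paper's. The paper first applies Cauchy--Schwarz,
\[
	\limsup_{N\to\infty}\frac{2}{N}\log\E\bigl[\e^{\lambda M_N}\1_{\{M_N/N\in A\}}\bigr]\le -\varphi^{-1}(-2\lambda)-\inf_{y\in A}J(y),
\]
and then bounds the two pieces separately: $\varphi^{-1}(-2\lambda)$ via a direct Taylor expansion of $\varphi$ around $0$ (giving $\varphi^{-1}(-2\lambda)=-2q\lambda(1+O(q+\eps+\lambda))$), and $\inf_A J$ by evaluating $J$ at the two boundary points $y_\pm=\mu^{-1}(1\pm C\sqrt\lambda)$ using Lemma~\ref{lem:ilevel} together with the convexity of $J$. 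You instead go straight to the variational form $\sup_{A\cap[0,1]}(\lambda y-J(y))$ via the Varadhan upper bound, transcribe Lemma~\ref{lem:ilevel} into a curvature bound $J(y)\gtrsim q^{-1}(y-y_0)^2$ near the minimizer, and handle the near/far split explicitly. Your route avoids the Cauchy--Schwarz detour and the separate analysis of $\varphi^{-1}$; the paper's route keeps the optimisation trivial (two points only) at the cost of that extra expansion. Both hinge on the same curvature input from Lemma~\ref{lem:ilevel}, and your closing remark on why the $C\sqrt\lambda$ scale is forced is a nice conceptual addition.
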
 
\noindent Put differently, the dominant contributions to $\E[\exp(\lambda M_N)]$ come from $A^{\rm c }$, i.e., from realizations where $M_N/N \approx q(1\pm O(\sqrt{\lambda})+ O(q+\eps))$.

\begin{proof} 
	By the Cauchy-Schwarz inequality and Lemma~\ref{lem:mnldp}, we have
	\bes %\label{eq:csi} 
		 	\limsup_{N\to \infty} \frac{2}{N}\log \E\Bigl[\e^{\lambda M_N} \1_{\{M_N/N \in A\}}\Bigr] 
		 		\leq - \varphi^{-1}(-2\lambda) - \inf_{y\in A} J(y).
	\ees
	Let us estimate  $- \varphi^{-1}(-2\lambda)$. Let $t=\tau q$ with $\tau \in[-1/4,1/4]$. Recall  $\varphi(0)=0$ and $\varphi'(0)=\mu$. 	
	 Proceeding as in Lemma~\ref{lem:ilevel}, we see that  
	 \bes
		|\varphi(t) - \mu t| \leq \frac{1}{2} t^2 \sup_{|s|\leq q /4} 
		|\varphi''(s)| = O(\tau^2).
	 \ees
	Hence 
	\be \label{eq:vfa}
	\varphi(\tau q) = \mu q \tau + O(\tau^2) = \tau (1 + q + O(\eps)+O(\tau))
	\ee 
	 uniformly in $|\tau|\leq 1/4$. We may thus choose a constant $M \ge 4$ such that in particular  $\varphi(q/M)$ and $\varphi(-q/M)$ are bounded away from zero as $\eps,q\to 0$. Thus we may find $\delta_1>0$ such that if $\eps,q\in [0,\delta_1]$, then $|\varphi(\pm q/M)|\geq 2 \delta_1$. For $|\lambda|\leq \delta_1$, we have $- 2 \lambda = \varphi(\tau q)$ for some $|\tau|\leq 1/M$. We combine with Eq.~\eqref{eq:vfa} and find that $\varphi^{-1} (- 2\lambda) = q \tau = - 2 q \lambda (1+O(q + \eps + \lambda))$. Making $\delta_1$ smaller if necessary, we may assume $\varphi^{-1}(-\lambda)\leq 4 q \lambda$ for $|\lambda|\leq \delta_1$. 
	 
	Next we note that $J(y)\geq 0$ with equality if and only if $y=\mu^{-1}$. Furthermore, because of the strict convexity of $J$,
	\bes
		\inf_{y\in A} J(y) \geq \min \big\{ J(y_-), J(y_+)\big\},\qquad\mbox{where } y_\pm =  \mu^{-1}(1 \pm C\sqrt{\lambda}).
	\ees
	From Lemma~\ref{lem:ilevel}, the definition of $J$ in Lemma~\ref{lem:mnldp}, and the identity $\mu= q^{-1}(1+q+O(\eps))$ we have 
	\bes
		J(y_\pm) \geq \frac{1}{2c}  q^2 y_\pm \bigl( y_\pm^{-1} - \mu\bigr)^2 
			= \frac{1}{2c} q^2 y_\pm^{-1} \mu^2 C ^2  \lambda = \frac{1}{2c} q C^2 \lambda (1+ O(q + \eps + \sqrt{\lambda})). 
	\ees
	For small $\eps,\lambda$ this is larger than, say $q C^2 \lambda/(4c)$. Choosing $C^2 /(4c) > 4$ we find $\inf_A J + \varphi^{-1}(-2 \lambda) \geq (C^2/ 8 - 4) q \lambda$, and the assertion of the lemma follows. 
	\end{proof} 

Next, let 
\be\label{eq:empdist-T}
	\nu_N= \frac{1}{M_N}\sum_{j=1}^{M_N} \delta_{T_j}
\ee
be the empirical distribution of $T_1,\ldots,T_{M_N}$. Then $\nu_N$ is a random variable with values in the space of probability measures on $\N$ equipped with the topology of weak convergence. Note that on this space weak convergence is equivalent to pointwise convergence. 

\begin{lemma}\label{lem:tilted-empirical}
	There exist $\delta>0$ and $c,C>0$ such that the following holds for all $\eps,q,\lambda\in (0,\delta)$. With
	 $B=\{\nu\colon ||\nu - \mathcal{L}(T)||_\mathrm{TV} \geq \sqrt{2\lambda}\}$,
	\bes
		\limsup_{N\to \infty}\frac{1}{N}\log \E\Bigl[\e^{\lambda M_N}\1_{\{\nu_N\in B\}}\Bigr] \leq -  c q \lambda (1+ O(q+\eps)). 
	\ees
\end{lemma}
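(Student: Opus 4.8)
I would prove Lemma~\ref{lem:tilted-empirical} by estimating $\E\bigl[\e^{\lambda M_N}\1_{\{\nu_N\in B\}}\bigr]$ directly from the renewal decomposition, controlling the event $\{\nu_N\in B\}$ by a Sanov-type bound and the constraint built into the renewal structure by the Cram\'er rate $I=\varphi^*$ of \eqref{eq:phidef}, and then optimising over the number of clusters. Write $D(\cdot\,\|\,\cdot)$ for relative entropy and set $\rho_\lambda:=\inf_{\nu\in B}D(\nu\,\|\,\mathcal{L}(T))$. By Pinsker's inequality $\rho_\lambda\geq 4\lambda$, and it is precisely this --- $\rho_\lambda$ comfortably exceeding $\lambda$, which is what the radius $\sqrt{2\lambda}$ in the definition of $B$ is calibrated for --- that makes the argument work.

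\emph{Renewal decomposition and a term-by-term bound.} Exactly as in the proof of Lemma~\ref{lem:mnldp},
\[
	\E\bigl[\e^{\lambda M_N}\1_{\{\nu_N\in B\}}\bigr]
	=\frac{1}{\sum_{n=1}^N\P(T_1+\cdots+T_n=N)}\sum_{k=1}^N\e^{\lambda k}\,\P\Bigl(T_1+\cdots+T_k=N,\ \tfrac1k\textstyle\sum_{j=1}^k\delta_{T_j}\in B\Bigr),
\]
with the prefactor converging to $\mu=\E[T]$. For fixed $k$ I would bound the event and the sum constraint separately and take the better estimate:
\[
	\P\Bigl(T_1+\cdots+T_k=N,\ \tfrac1k\textstyle\sum_{j=1}^k\delta_{T_j}\in B\Bigr)
	\ \leq\ \min\Bigl\{\P\bigl(\tfrac1k\textstyle\sum_{j=1}^k\delta_{T_j}\in B\bigr),\ \P(T_1+\cdots+T_k=N)\Bigr\}
	\ \leq\ C_N\,\e^{-k\max\{I(N/k),\,\rho_\lambda\}},
\]
where $C_N$ is polynomial in $N$ (for fixed $q,\eps,\lambda$). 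Indeed $\P(T_1+\cdots+T_k=N)\leq\e^{-kI(N/k)}$ is the Chernoff bound for i.i.d.\ sums (the Cram\'er estimate underlying Lemma~\ref{lem:mnldp}), while $\P(\tfrac1k\sum_j\delta_{T_j}\in B)\leq C_N\,\e^{-k\rho_\lambda}$ is Sanov's upper bound for the empirical measure of $k$ i.i.d.\ copies of $T$: since the state space is $\N$ one first truncates to $\{1,\dots,K\}$ with $K$ chosen large enough (polynomially in $1/q$ and $\log(1/\lambda)$) that $\mathcal{L}(T)(\{K+1,K+2,\dots\})$ is negligible --- possible by the geometric-type tail of $\mathcal{L}(T)$ from Lemma~\ref{lem:geom1} --- and splits off the event that the empirical tail mass exceeds a small fraction of $\sqrt\lambda$, which is super-suppressed; on the finite alphabet the method of types gives the bound with an effective relative-entropy rate $\geq 4\lambda$ up to an arbitrarily small relative loss from the truncation, in particular still $>\lambda$.

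\emph{Optimisation over the number of clusters.} Combining the two displays,
\[
	\E\bigl[\e^{\lambda M_N}\1_{\{\nu_N\in B\}}\bigr]\ \leq\ C_N\,\exp\Bigl(\sup_{1\leq k\leq N}k\bigl(\lambda-\max\{I(N/k),\rho_\lambda\}\bigr)\Bigr),
\]
so it suffices to show $\sup_k k(\lambda-\max\{I(N/k),\rho_\lambda\})\leq-3\lambda\,qN\,\bigl(1+O(q+\eps+\sqrt\lambda)\bigr)$. This is a short case analysis using Lemma~\ref{lem:geom1} ($\mu=(1+q+O(\eps))q^{-1}$, so the typical value $N/\mu$ of $k$ is $qN(1+O(q+\eps))$), the bound $I(x)\geq\frac1{2c}q^2(x-\mu)^2$ for $|x-\mu|\leq\delta/q$ from Lemma~\ref{lem:ilevel}, convexity of $I$ with $I(\mu)=0$, and the fact that $I$ grows at least linearly with slope of order $q$ for large argument (from the tail bound $u\leq\e^{-q/2}$ in the proof of Lemma~\ref{lem:geom1}). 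For $k$ in the window $|N/k-\mu|\leq\sqrt{2c\rho_\lambda}/q$ one has $\max\{I(N/k),\rho_\lambda\}\geq\rho_\lambda$ and $k\geq N/\mu\cdot(1-O(\sqrt{\rho_\lambda}))=qN(1+O(q+\eps+\sqrt\lambda))$, so $k(\lambda-\max\{\dots\})\leq k(\lambda-\rho_\lambda)\leq-3\lambda qN(1+O(q+\eps+\sqrt\lambda))$ by $\rho_\lambda\geq 4\lambda$; for $k$ outside the window $I(N/k)$ has grown enough that $kI(N/k)$ is already of order $qN\gg qN\rho_\lambda$, so the exponent is even more negative. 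This gives $\limsup_N\frac1N\log\E[\e^{\lambda M_N}\1_{\{\nu_N\in B\}}]\leq-3q\lambda(1+O(q+\eps+\sqrt\lambda))$, i.e.\ the assertion for any $c\in(0,3)$ once $\delta>0$ is small enough. (One may equivalently mimic the Cauchy--Schwarz structure of Lemma~\ref{lem:tilted}: estimate $\E[\e^{\lambda M_N}\1_{\{\nu_N\in B\}}]\leq\E[\e^{2\lambda M_N}]^{1/2}\P(\nu_N\in B)^{1/2}$, use \eqref{eq:gel} and $\varphi^{-1}(-2\lambda)=-2q\lambda(1+O(q+\eps+\lambda))$ from the proof of Lemma~\ref{lem:tilted} together with $\limsup_N\frac1N\log\P(\nu_N\in B)\leq-q\rho_\lambda(1+O(q+\eps+\sqrt\lambda))$, proved by the same argument with the weight $\e^{\lambda k}$ removed, and invoke $\rho_\lambda\geq4\lambda>2\lambda$.)

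The step I expect to be the main obstacle is this optimisation over $k$, carried out \emph{uniformly} in the small parameters $q,\eps,\lambda$: one must patch the ``typical'' regime $k\approx N/\mu$, where the Pinsker/total-variation rate $\rho_\lambda$ controls the cost, with the ``atypical'' regime, where the Cram\'er rate $I$ for the mean cluster size takes over, and verify that the crossover --- hence the dominant contribution to $\E[\e^{\lambda M_N}\1_{\{\nu_N\in B\}}]$ --- occurs at $M_N/N\approx q$. A secondary but genuine technical point is making Sanov's theorem rigorous on the countable alphabet $\N$: the naive truncation degrades the Pinsker rate, so one has to choose the truncation level $K$ large and the tail‑excess threshold a suitably small multiple of $\sqrt\lambda$ in order to keep the effective rate strictly above $\lambda$; here the quantitative tail estimate for $\mathcal{L}(T)$ from Lemma~\ref{lem:geom1} is exactly what is needed.
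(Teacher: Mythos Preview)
Your argument is correct but follows a different route from the paper. The paper avoids the term-by-term decomposition and the optimisation over $k$ altogether: it first asserts a joint large-deviations principle for $(M_N/N,\nu_N)$ at speed $N$ with rate function $\widetilde J(y,\nu)=J(y)+y\,H(\nu;\mathcal{L}(T))$, obtained by combining Lemma~\ref{lem:mnldp} with Sanov's theorem. It then splits
\[
\E\bigl[\e^{\lambda M_N}\1_{\{\nu_N\in B\}}\bigr]
\leq \E\bigl[\e^{\lambda M_N}\1_{\{M_N/N\in A\}}\bigr]
+\E\bigl[\e^{\lambda M_N}\1_{\{M_N/N\in A^{\mathrm c},\,\nu_N\in B\}}\bigr],
\]
with $A$ the set from Lemma~\ref{lem:tilted}. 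The first summand is controlled directly by that lemma. For the second, Varadhan's lemma gives the exponent $-\inf_{A^{\mathrm c}\times B}(\widetilde J-\lambda y)$; dropping $J(y)\ge0$ one has $\widetilde J(y,\nu)-\lambda y\ge y\,(H(\nu;\mathcal L(T))-\lambda)$, on $A^{\mathrm c}$ the factor $y$ is pinned near $\mu^{-1}$, and Pinsker's inequality (exactly as you use it) gives $H\ge 2\lambda$ on $B$. No finite-$k$ Sanov bound and no truncation of the alphabet $\N$ are needed.

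Your approach is more elementary---it bypasses the joint LDP and Varadhan's lemma and makes the uniformity in $(q,\eps,\lambda)$ explicit through direct Chernoff and type-counting estimates---at the cost of the optimisation over $k$ and the finite-alphabet Sanov technicality you correctly flag. The paper's approach is quicker because the $A/A^{\mathrm c}$ split recycles Lemma~\ref{lem:tilted} verbatim and localises the empirical-measure argument to the region $M_N/N\approx\mu^{-1}$, which is precisely the crossover region you identify as the main obstacle; on that region only the Sanov/Pinsker rate $yH$ matters, and the Cram\'er rate $J(y)$ is simply discarded. Your parenthetical Cauchy--Schwarz variant is also different from what the paper does here (Cauchy--Schwarz is used only in Lemma~\ref{lem:tilted}), but it would work as well.
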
 
Combined with Lemma~\ref{lem:geom1}, we see that dominant contributions to $\E[\e^{\lambda M_N}]$ come from realizations where the total variation distance between the law of $\nu_N$ and law of the geometric variable $G$ defined in Section~\ref{sec:geometric} is of the order of $O(\sqrt{\lambda}) + O(\eps)$. 

\begin{proof}  
	Using Lemma~\ref{lem:mnldp} and Sanov's theorem~\cite[Chapter 6.2]{dembo-zeitouni}, it is not difficult to see 
	that the pair sequence $(M_N/N, \nu_N)_{N\in\N}$ satisfies a joint large deviations principle with speed $N$ and rate function 
	\be\label{eq:ratefcn-cluster}
		\widetilde J(y, \nu) = J(y) + y H\bigl(\nu;\mathcal{L}(T)\bigr)  
					= y \Bigl( I(y^{-1})+ \sum_{k=1}^\infty \nu(\{k\}) \log \frac{\nu(\{k\})}{\P(T=k)}\Bigr) 
	\ee
	for $y > 0$ and $\widetilde J(y, \nu) = \infty$ otherwise. Let $\delta, \lambda, C, A$ be as in Lemma~\ref{lem:tilted}. 	
	Thus $A^{\rm c}= \{y\colon |\mu y - 1| < C \sqrt{\lambda}\}$. We estimate 
	\be \label{eq:tilted-empir-aux}
	\begin{aligned}
		\E\bigl[\e^{\lambda M_N}\1_{ \{\nu_N\in B\}}\bigr] 
			& \leq \E\bigl[ \e^{\lambda M_N}\1_{\{M_N/N\in A\}}\bigr] 
				+ \E\bigl[\e^{\lambda M_N} \1_{\{M_N/N\in A^{\rm c},\, \nu_N\in B\}} \bigr] \\
			& \leq \e^{- N c q \lambda + o(N)} 
				+ \e^{- N\inf_{(y,\nu)\in A^{\rm c}\times B} [\widetilde J(y,\nu) - \lambda y] +o(N)} 
	\end{aligned} 
	\ee	
	with the help of Varadhan's lemma. On $A^{\rm c} \times B$, we have 
	 \[
	 	\widetilde J(y,\nu) - \lambda y \geq \mu^{-1} \bigl(1 - C\sqrt{\lambda}\bigr)\Bigl( H(\nu;\mathcal{L}(T)) -\lambda\Bigr)\geq \lambda  \mu^{-1} \bigl(1 - C\sqrt{\lambda}\bigr). 
	 \]
	 Here we have used Pinsker's inequality 
	 \[
	 	H(\nu;\mathcal{L}(T))\geq 2 ||\nu- \mathcal{L}(T)||_\mathrm{TV}^2
\]
	 and the definition of $B$. The lemma now follows from~\eqref{eq:tilted-empir-aux}. 
\end{proof} 

Lemmas~\ref{lem:tilted} and~\ref{lem:tilted-empirical} are formulated in terms of the variable $T$ only. Combined with the information that $T$ is close to the geometric variable $G$ introduced in Section~\ref{sec:geometric}, we obtain the following. 

\begin{prop} \label{prop:tilted}
	There exists $\delta>0$ and $C,c>0$ such that the following holds for all $\eps,q,\lambda\in (0,\delta)$: 
	\begin{align*} 
		\limsup_{N\to \infty} \frac{1}{N}\log \E\Bigl[\e^{\lambda M_N} \1_{\{ |M_N/N -q| \geq C q \max(q,\eps,\sqrt{\lambda}) \} }\bigr]  & \leq - c q \max(\eps^2,\lambda), \\
		\limsup_{N\to \infty} \frac{1}{N}\log \E\Bigl[\e^{\lambda M_N} \1_{\{ ||\nu_N- \mathcal{L}(G)||_\mathrm{TV} \geq C\max(\eps,\sqrt{\lambda}) \} }\bigr]  & \leq - c q \max(\eps^2,\lambda). 
	\end{align*} 
\end{prop}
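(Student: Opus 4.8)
The plan is to deduce Proposition~\ref{prop:tilted} from Lemmas~\ref{lem:tilted} and~\ref{lem:tilted-empirical} by two elementary steps. The first is a monotonicity trick that replaces the exponent $\lambda$ in those lemmas by $\lambda' := \max(\lambda,\eps^2)$; this is what produces the factor $\eps^2$ in the decay rate. The second is to rewrite the ``bad'' events of Lemmas~\ref{lem:tilted} and~\ref{lem:tilted-empirical}, which are phrased through $\mu=\E[T]$ and $\mathcal L(T)$, as the events of the proposition, which are phrased through $q$ and $\mathcal L(G)$, using the estimates $\mu^{-1}=q(1+O(q+\eps))$ and $\|\mathcal L(T)-\mathcal L(G)\|_\mathrm{TV}=O(\eps)$ of Lemma~\ref{lem:geom1}.

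First I would note that $M_N\ge 1$ almost surely, so $\e^{\lambda M_N}\le \e^{\lambda' M_N}$ pointwise whenever $0<\lambda\le\lambda'$, and hence $\E[\e^{\lambda M_N}\1_E]\le \E[\e^{\lambda' M_N}\1_E]$ for every event $E$. With $\lambda'=\max(\lambda,\eps^2)$ one has $\lambda'\in(0,\delta)$ once $\delta<1$, and $\sqrt{\lambda'}=\max(\sqrt\lambda,\eps)$, so the events of the proposition are unchanged: $\max(q,\eps,\sqrt{\lambda'})=\max(q,\eps,\sqrt\lambda)$ and $\max(\eps,\sqrt{\lambda'})=\max(\eps,\sqrt\lambda)$. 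Applying Lemma~\ref{lem:tilted} with $\lambda'$ in place of $\lambda$ bounds $\limsup_N\frac1N\log\E[\e^{\lambda'M_N}\1_{\{M_N/N\in A'\}}]$ by $-cq\lambda'=-cq\max(\lambda,\eps^2)$, where $A'=\{y:|\mu y-1|\ge C\sqrt{\lambda'}\}$; and applying Lemma~\ref{lem:tilted-empirical} with $\lambda'$ bounds the corresponding log-expectation restricted to $B'=\{\nu:\|\nu-\mathcal L(T)\|_\mathrm{TV}\ge\sqrt{2\lambda'}\}$ by $-cq\lambda'(1+O(q+\eps))$, which is $\le -\tfrac c2 q\max(\lambda,\eps^2)$ once $\delta$ is small enough that $1+O(q+\eps)\ge\tfrac12$.

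It then remains to verify two set inclusions, for which I would use Lemma~\ref{lem:geom1}. There is an absolute constant $C_0$ with $|\mu^{-1}-q|\le C_0q(q+\eps)\le 2C_0q\max(q,\eps,\sqrt{\lambda'})$, and, after shrinking $\delta$, $\mu\ge 1/(2q)$. Hence if $|y-q|\ge C q\max(q,\eps,\sqrt{\lambda'})$, then
\[
  |\mu y-1|=\mu\,|y-\mu^{-1}|\ge \mu\bigl(|y-q|-|q-\mu^{-1}|\bigr)\ge \tfrac12(C-2C_0)\max(q,\eps,\sqrt{\lambda'})\ge \tfrac12(C-2C_0)\sqrt{\lambda'},
\]
which is at least $C_{\mathrm{Lem}}\sqrt{\lambda'}$, with $C_{\mathrm{Lem}}$ the constant of Lemma~\ref{lem:tilted}, as soon as $C\ge 2C_0+2C_{\mathrm{Lem}}$; so $y\in A'$, giving the first bound. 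For the second, $\|\mathcal L(T)-\mathcal L(G)\|_\mathrm{TV}\le C_1\eps$ for an absolute constant $C_1$, so if $\|\nu-\mathcal L(G)\|_\mathrm{TV}\ge C\max(\eps,\sqrt{\lambda'})$ with $C\ge C_1+\sqrt2$, then $\|\nu-\mathcal L(T)\|_\mathrm{TV}\ge\|\nu-\mathcal L(G)\|_\mathrm{TV}-C_1\eps\ge\sqrt2\,\max(\eps,\sqrt{\lambda'})\ge\sqrt{2\lambda'}$, i.e.\ $\nu\in B'$. Choosing the proposition's $C$ larger than both thresholds, $c$ the smallest of the constants produced, and $\delta$ the smallest of the (finitely many) values of $\delta$ used completes the argument.

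I expect no genuinely hard step here: the proposition is essentially a repackaging of Lemmas~\ref{lem:tilted} and~\ref{lem:tilted-empirical}. The only point needing a little care is the monotonicity upgrade $\lambda\mapsto\max(\lambda,\eps^2)$, which is what injects the $\eps^2$ into the rate (without it, the direct rate $-cq\lambda$ is too weak when $\lambda\ll\eps^2$); everything else is bookkeeping of constants when passing from $(\mu,\mathcal L(T))$ to $(q,\mathcal L(G))$ via Lemma~\ref{lem:geom1}.
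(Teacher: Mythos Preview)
Your proof is correct, and your route is close to but slightly different from the paper's. The paper first performs exactly your set-inclusion step (passing from $(q,\mathcal L(G))$ to $(\mu,\mathcal L(T))$ via Lemma~\ref{lem:geom1}), and then splits into two cases: for $\eps\le\sqrt\lambda$ it applies Lemmas~\ref{lem:tilted} and~\ref{lem:tilted-empirical} directly, whereas for $\eps>\sqrt\lambda$ it says to ``repeat the proofs of Lemmas~\ref{lem:tilted} and~\ref{lem:tilted-empirical} but with modified definitions of the sets $A$ and $B$ (replace $\sqrt\lambda$ by $\eps$)''. Your monotonicity trick $\lambda\mapsto\lambda'=\max(\lambda,\eps^2)$ is a clean substitute for this second case: since $M_N\ge 1$ and $\sqrt{\lambda'}=\max(\sqrt\lambda,\eps)$, you can apply the lemmas as stated with input $\lambda'$ and read off the rate $-cq\lambda'=-cq\max(\lambda,\eps^2)$ without reopening their proofs. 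This is arguably tidier than the paper's ``repeat the proofs'' instruction; the price is nil, since the two arguments use the same underlying inequalities.
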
 

\begin{proof}
	Let $C,c,\eps,q$ be as in Lemmas~\ref{lem:tilted} and~\ref{lem:tilted-empirical}. Let $C'>0$.
	As $\mu = \E[T]$ satisfies $\mu q = 1+q+O(\eps)$ by Lemma~\ref{lem:geom1}, on the event $|M_N/N - q|\geq C' q \max(q,\eps,\sqrt{\lambda})$ we have for sufficiently small $\eps$
	\[
		\Bigl|\mu\frac{M_N}{N} - 1\Bigr| \geq \mu\Bigl|\frac{M_N}{N} - q\Bigr| -|\mu q - 1| 
\geq \frac{1}{2} C' \max (q,\eps,\sqrt{\lambda})  
	\]
	if $C'$ is chosen sufficiently large. Similarly, on the event $||\nu_N- \mathcal{L}(G)||_\mathrm{TV}\geq C'\max(\eps,\sqrt{\lambda})$, by Lemma~\ref{lem:geom1}, we have for sufficiently small $\eps$ and large $C'$ 
	\[
		||\nu_N- \mathcal{L}(T)||_\mathrm{TV} \geq  C'\max(\eps,\sqrt{\lambda}) + ||\mathcal{L}(G) - \mathcal{L}(T)||_\mathrm{TV}\geq 
			\frac{1}{2} C' \max (\eps,\sqrt{\lambda}).
	\]
	If $\eps\leq \sqrt{\lambda}$ we conclude with Lemmas~\ref{lem:tilted} and~\ref{lem:tilted-empirical}. For $\eps\geq \sqrt{\lambda}$ we repeat the proofs of Lemmas~\ref{lem:tilted} and~\ref{lem:tilted-empirical} but with modified definitions of the sets $A$ and $B$ (replace $\sqrt{\lambda}$ by $\eps$).
\end{proof}

\section{Constant-pressure ensemble} \label{sec:pressure}

In this section we formulate and prove the analogs of Theorems~\ref{thm:fe} and~\ref{thm:lowdens} in the constant-pressure ensemble. Our main results in this section are Theorems~\ref{thm:ge} and \ref{thm:gibbsfe}.
Motivated by the heuristics from Section~\ref{sec:heuristics} we focus on $\beta$-dependent pressures $p_\beta$ such that  
\be \label{eq:pchoice}
	\beta p _\beta= \e^{- \beta e_\mathrm{surf}/2+o(\beta)} \qquad \mbox{as }\beta\to \infty. 
\ee
Recall that we write $\Q_N^\ssup{\beta}$ for the Gibbs measure in the constant-pressure ensemble, i.e., the probability measure on $\R_+^{N-1}$ with probability density ${\boldsymbol z}=(z_1,\dots,z_{N-1})\mapsto Q_N(\beta,p_\beta)^{-1}\exp(- \beta [U_N({\boldsymbol z})+ p_\beta \sum_{j=1}^{N-1} z_j])$, and the partition function $Q_N(\beta,p_\beta)$ is given by \eqref{QNdef}.

\subsection{Effective model} \label{sec:effective-model}

As in \eqref{truncpartfct}, define the truncated partition function by
\bes
	Q_N^\cl(\beta,p) = \int_{[0,R]^{N-1}} \e^{-\beta [U_N(z_1,\ldots,z_{N-1}) + p \sum_{j=1}^{N-1}z_j]} \dd z_1\cdots \dd z_{N-1}
\ees
when $N\geq 2$, and set $Q_1^\cl(\beta,p):=1$. Recall that  Assumption~\ref{ass:v1}(i) and~\ref{ass:R} imply that $v(r)\leq 0$ for $r\geq R$. As interactions across cracks are always zero or negative, discarding them decreases the Boltzmann weight $\exp( - \beta U)$. Summing over the number and locations of cracks, we get the inequality 
\be \label{eq:rec1}
	Q_N(\beta,p) \geq  Q_N^\cl(\beta,p) + \sum_{n=1}^{N-1}\Bigl( \int_R^\infty \e^{- \beta p r} \dd r\Bigr)^n  \sum_{1\leq i_1<\cdots<i_n\leq N-1} \prod_{k=1}^{n+1} Q_{i_k- i_{k-1}}^\cl(\beta,p)
\ee 
with the conventions $i_0=0$ and $i_{n+1}= N$. 
(If $v=0$ on $(R,\infty)$, the inequality~\eqref{eq:rec1} is in fact an equality.)
With $g^\cl(\beta,p_\beta)$, $g^\cl_\mathrm{surf}(\beta,p_\beta)$ as in Proposition~\ref{prop-oldresults} we define 
\begin{eqnarray}
	V_\beta( k) &=& -\frac{1}{\beta} \log Q_k^\cl(\beta,p_\beta) - k g^\cl(\beta,p_\beta) - g^\cl_\mathrm{surf}(\beta,p_\beta) \label{eq:veff} \\
	f_{\beta}(k) & =& \exp( - \beta V_\beta(k))- 1 \label{eq:fdef}\\
	q_\beta  & =& \frac{\exp(-\beta [g_\mathrm{surf}^\cl (\beta,p_\beta)+p_\beta R]) }{\beta p_\beta}
	 \label{eq:qdef}.
\end{eqnarray}
For simplicity we suppress the $R$-dependence from the notation for $V_\beta$, $f_\beta$, and $q_\beta$.  In this notation, Eq.~\eqref{eq:rec1} becomes 
\begin{equation} \label{eq:rec2}
	Q_N(\beta,p_\beta)\frac{\exp(- \beta p_\beta R)}{\beta p_\beta} \geq  \exp\bigl(- \beta N g^\cl(\beta,p_\beta)\bigr) \sum_{n=0}^{N-1} q_\beta^{n+1}  \sum_{0=i_0< i_1<\cdots<i_{n+1}= N} \exp \Bigl( - \beta \sum_{k=1}^{n+1} V_\beta(i_{k} - i_{k-1})\Bigr). 
\end{equation} 
Let $T,T_1,\ldots$ be i.i.d.~random variables as in Section~\ref{sec:aux} with $f=f_{\beta}$ and $q=q_\beta$. The $\beta,R$-dependence is suppressed from the notation. Think of $T_k = i_k- i_{k-1}$ in Eq.~\eqref{eq:rec2}. 
Then 
\be\label{eq:rec3}
	 Q_N(\beta,p_\beta)\frac{\exp(-\beta p_\beta R)}{\beta p_\beta}\, \e^{\beta N g^\cl(\beta,p_\beta)} u_\beta^N \geq \sum_{n=0}^{N-1} \P(T_1+\cdots+ T_{n+1} = N),
\ee
with  $u_\beta = u$  as in Eq.~\eqref{eq:renewalsolv}. For an upper bound we use Assumption~\ref{ass:v1}(v); set 
\be \label{eq:ladef}
	\lambda_\beta := \log \frac{\int_R^\infty \exp(\beta [C r^{-(s-2)} - p_\beta r]) \dd r}{\int_R^\infty \exp( - \beta p_\beta r) \dd r} 
\ee
with $C$ as in Eq.~\eqref{eq:across} for $v$ with unbounded support and $\lambda_\beta = 0$ if $v$ has compact support. For $n\in \N_0$ and $0=i_0<\ldots<i_{n+1} = N$ let $B_N(i_1,\ldots,i_n)$ be the event that there are exactly $n$ cracks, located after the particles with labels $i_1,\ldots,i_n$, as in \eqref{Bndef}. As in \eqref{clusternumber} we denote by $M_N$ the number of clusters in a configuration. To avoid confusion we write $M_N^{\rm lg}$ for the lattice gas variable from Eq.~\eqref{eq:mnk}. Also recall the generating function $\varphi=\varphi_\beta$ from Eq.~\eqref{eq:phidef}. 

\begin{lemma} \label{lem:rep}
Under Assumptions~\ref{ass:v1}(i)--(v), for all $\beta>0$,
\be\label{eq:rep1}
	-\beta g^\cl(\beta,p_\beta) - \log u_\beta \leq - \beta g(\beta,p_\beta) \leq -\beta g^\cl(\beta,p_\beta) -  \log u_\beta - \varphi_\beta^{-1}(- \lambda_\beta). 
\ee	
Moreover for all $n\in \N_0$ and $0=i_0<\cdots < i_{n+1}=N$, 
\be \label{eq:rec5}
	\Q_N^\ssup{\beta, p_\beta}\bigl( B_N(i_1,\ldots,i_n)\bigr) 
		\leq \e^{\lambda_\beta n} \frac{\P(T_1=i_1-i_0,\ldots,T_{n+1}= i_{n+1} - i_n)}{\sum_{\ell=1}^N \P(T_1 + \cdots + T_\ell = N)}. 
\ee	
In particular,
\be\label{eq:rec6}
	\Q_N^\ssup{\beta, p_\beta}(M_N = n+1) \leq \e^{\lambda_\beta n} \P(M_N^\mathrm{lg} = n+1).
\ee
\end{lemma}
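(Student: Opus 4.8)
The plan is to compare the full constant-pressure partition function $Q_N(\beta,p_\beta)$ with the lattice-gas partition function built from the truncated cluster partition functions $Q_k^{\ssup R}(\beta,p_\beta)$, both from below (which is already essentially \eqref{eq:rec1}--\eqref{eq:rec3}) and from above (the new ingredient). For the lower bound on $-\beta g(\beta,p_\beta)$ in \eqref{eq:rep1}, I would take $\frac1N\log$ of \eqref{eq:rec3}, use that $\sum_{n=0}^{N-1}\P(T_1+\cdots+T_{n+1}=N)\to 1/\E[T]$ by renewal theory (as recalled after \eqref{eq:fe}), and let $N\to\infty$; this yields $-\beta g(\beta,p_\beta)\ge -\beta g^{\ssup R}(\beta,p_\beta)-\log u_\beta$ directly.

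For the upper bound, I would start from the exact decomposition of $Q_N(\beta,p_\beta)$ over the events $B_N(i_1,\dots,i_n)$: on each such event the configuration splits into $n+1$ clusters (spacings in $[0,R]$) and $n$ cracks (spacings $\geq R$), and the energy is $\sum_{k=1}^{n+1}U_{i_k-i_{k-1}}(\cdot)$ plus the interactions \emph{across} cracks. By Assumption~\ref{ass:v1}(i) and~\ref{ass:R} those cross-crack interactions are nonpositive, but for the upper bound I must bound their Boltzmann weight from above rather than discard them: by \eqref{eq:across} each crack at $z_k\geq R$ contributes an energy $\geq -C z_k^{-(s-2)}$, so $\e^{-\beta U_{\text{across}}}\le \prod_{k\,:\,\text{crack}} \e^{\beta C z_k^{-(s-2)}}$. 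Integrating the crack spacing $z$ against $\e^{-\beta p_\beta z}$ over $[R,\infty)$ then replaces each factor $\int_R^\infty \e^{-\beta p_\beta z}\dd z$ by $\int_R^\infty \e^{\beta(Cz^{-(s-2)}-p_\beta z)}\dd z$, i.e.\ multiplies by $\e^{\lambda_\beta}$ per crack, which is exactly the definition \eqref{eq:ladef}. This gives the reverse of \eqref{eq:rec1} up to a factor $\e^{\lambda_\beta n}$ in the $n$-crack term, hence an upper bound on $Q_N(\beta,p_\beta)\exp(-\beta p_\beta R)/(\beta p_\beta)$ of the form $\exp(-\beta Ng^{\ssup R})\sum_{n} (q_\beta\e^{\lambda_\beta})^{?}\cdots$; more precisely the factor $\e^{\lambda_\beta n}$ attaches to the $n$ cracks, so dividing \eqref{eq:rec5} is the natural bookkeeping. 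Rewriting in terms of the tilted i.i.d.\ variables: $\e^{\lambda_\beta n}\prod\P(T_k=\cdot)$ summed over compositions of $N$ into $n+1$ parts equals (after absorbing the tilt) a renewal sum for a tilted law, and the exponential growth rate of $\sum_N (\text{that}) u_\beta^N$-type expression is governed by $\varphi_\beta^{-1}(-\lambda_\beta)$ exactly as in the proof of Lemma~\ref{lem:mnldp} with $\lambda=\lambda_\beta$. Taking $\frac1N\log$ and $N\to\infty$ yields $-\beta g(\beta,p_\beta)\le -\beta g^{\ssup R}(\beta,p_\beta)-\log u_\beta-\varphi_\beta^{-1}(-\lambda_\beta)$, completing \eqref{eq:rep1}.

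For \eqref{eq:rec5}, I would divide the $B_N(i_1,\dots,i_n)$-contribution just estimated by $Q_N(\beta,p_\beta)$ itself, bounding the denominator below by \eqref{eq:rec3}: the numerator is $\leq \e^{\lambda_\beta n}\,u_\beta^{-N}\exp(-\beta Ng^{\ssup R})\cdot\frac{\beta p_\beta}{\exp(-\beta p_\beta R)}\cdot\P(T_1=i_1-i_0,\dots,T_{n+1}=i_{n+1}-i_n)$ and the denominator is $\geq$ the same prefactor times $\sum_{\ell=1}^N\P(T_1+\cdots+T_\ell=N)$; the ratio is exactly the right-hand side of \eqref{eq:rec5}. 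Summing \eqref{eq:rec5} over all $0<i_1<\cdots<i_n\leq N-1$ gives \eqref{eq:rec6} since $\sum_{i_1<\cdots<i_n}\P(T_1=i_1-i_0,\dots,T_{n+1}=i_{n+1}-i_n)=\P(T_1+\cdots+T_{n+1}=N)$ and the normalization matches the definition \eqref{eq:mnk} of $M_N^{\mathrm{lg}}$.

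I expect the main obstacle to be the upper bound half of \eqref{eq:rep1} — specifically, making rigorous that multiplying the $n$-crack term by $\e^{\lambda_\beta n}$ is \emph{exactly} what the cross-crack interaction costs, and that no interaction contributions are double-counted or left out when a configuration has several cracks (the bound \eqref{eq:across} is stated per single crack $z_k\geq R$, so for multiple cracks one needs that the total cross-crack energy is still bounded below by $-\sum_{k\,:\,\text{crack}} C z_k^{-(s-2)}$, which does follow from Assumption~\ref{ass:v1}(i),(ii) since all such terms are individually bounded and of one sign, but this needs a clean statement). A secondary technical point is the passage from the finite-$N$ renewal sums to the claimed logarithmic limit uniformly enough to identify the rate $\varphi_\beta^{-1}(-\lambda_\beta)$; this is handled exactly as in Lemma~\ref{lem:mnldp} via the tilted law $\P(\hat T=k)=\e^{tk}\P(T=k)/\e^{\varphi(t)}$ with $t$ chosen so that $-\varphi_\beta(t)=\lambda_\beta$, i.e.\ $t=\varphi_\beta^{-1}(-\lambda_\beta)$, so no genuinely new estimate is required there.
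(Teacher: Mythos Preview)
Your proposal is correct and matches the paper's proof essentially step for step: the lower bound in \eqref{eq:rep1} comes from \eqref{eq:rec3} plus renewal theory, the upper bound and \eqref{eq:rec5} from bounding cross-crack energy via \eqref{eq:across} to produce one factor $\e^{\lambda_\beta}$ per crack, then dividing by \eqref{eq:rec3} and invoking Lemma~\ref{lem:mnldp} for the rate $-\varphi_\beta^{-1}(-\lambda_\beta)$; \eqref{eq:rec6} follows by summing. Your flagged obstacle about multiple cracks is real but you already have the resolution: attribute each cross-crack interaction to the leftmost crack it spans, note each such term is nonpositive (the argument is $\geq z_{\max}$), so the per-crack subsum is no more negative than the full sum in \eqref{eq:across}.
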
 

\begin{proof} 
	Estimating interactions across cracks with Eq.~\eqref{eq:across} we have 
	\begin{equation}\label{summing}
	\begin{aligned} 
	 \int_{\R_+^{N-1}} &\1_{B_N(i_1,\ldots,i_n)} (\vect{z})\e^{-\beta [U_N(\vect{z}) + p_\beta \sum_{j=1}^{N-1} z_j]} \dd \vect{z} \\
		&\quad \leq \Bigl(\e^{\lambda_\beta} \int_R^\infty \e^{- \beta p_\beta r} \dd r\Bigr)^n \prod_{k=1}^{n+1} Q_{i_k- i_{k-1}}^\cl(\beta,p_\beta) \\
		&\quad = \e^{\lambda_\beta n} \Bigl( \frac{\exp(-\beta p_\beta R)}{\beta p_\beta} \e^{\beta N g^\cl(\beta,p_\beta)} u_\beta^{N} \Bigr)^{-1} \, \P\bigl(T_1=i_1-i_0,\ldots,T_{n+1}= i_{n+1} - i_n\bigr). 
	\end{aligned} 
	\end{equation}
	We divide by $Q_N(\beta,p_\beta)$, combine with the lower bound~\eqref{eq:rec3} and obtain the inequality~\eqref{eq:rec5}. The inequality~\eqref{eq:rec6} follows by summing~\eqref{eq:rec5} over all increasing sequences $1\leq i_1<\cdots <i_{n}\leq N$. 
	
On the other  hand, summing \eqref{summing} over all increasing subsequences $1\leq i_1<\cdots <i_{n}\leq N$ and over $n\in\N$
	we find 
	\[ %\label{eq:rec4}
	\begin{aligned}
		Q_N(\beta,p_\beta) & \leq Q_N^\cl(\beta,p_\beta) + \sum_{n=1}^{N-1} \Bigl(\e^{\lambda_\beta} \int_R^\infty \e^{- \beta p_\beta r} \dd r\Bigr)^n\sum_{1\leq i_1<\cdots<i_n\leq N-1}  \prod_{k=1}^{n+1} Q_{i_k- i_{k-1}}^\cl(\beta, p_\beta) \\
		& = \sum_{n=0}^{N-1} \e^{\lambda_\beta n} \Bigl( \frac{\exp(-\beta p_\beta R)}{\beta p_\beta} \e^{\beta N g^\cl(\beta,p_\beta)} u_\beta^{N} \Bigr)^{-1} \, \P(T_1+\cdots+ T_{n+1} =N)\\
		& = \e^{-\lambda_\beta} \Bigl( \frac{\exp(-\beta p_\beta R)}{\beta p_\beta} \e^{\beta N g^\cl(\beta,p_\beta)} u_\beta^{N} \Bigr)^{-1} \, \E\bigl[\e^{\lambda_\beta M_N^\mathrm{lg}}\bigr]\P(\exists n\colon \,T_1+\cdots+ T_n =N).
\end{aligned}
	\]
	Now~\eqref{eq:rep1} follows from this, Eq.~\eqref{eq:rec2}, Lemma~\ref{lem:mnldp}, and the standard renewal result $\lim_{N\to \infty}\P(\exists n\colon T_1+\cdots + T_n=N) = \mu^{-1}\in (0,\infty)$.  
\end{proof} 

For later purpose we formulate a similar bound for the empirical distribution $\widehat \nu_N$ of the crack lengths defined in~\eqref{empdist-card-length}. Let $Y_i$, $i\in \N$, be i.i.d.\ exponential random variables with parameter $\beta p_\beta$, defined without loss of generality on the same underlying probability space as the lattice gas variable $M_N^\mathrm{lg}$. The $Y_i$'s are assumed to be independent of $M_N^\mathrm{lg}$. Define 
\be\label{eq:muhatdef}
	\widehat \nu_N^\mathrm{lg}:= \frac{1}{M_N^\mathrm{lg}-1} \sum_{i=1}^{M_N^\mathrm{lg} - 1} \delta_{Y_i}.
\ee

\begin{lemma}\label{lem:rep2}
	Under Assumptions~\ref{ass:v1}(i)--(v), for all $\beta>0$, for all $n\in \N_0$, and $D$ a measurable subset of the set of probability measures on $\R_+$, we have 
	\[ 
		\Q_N^\ssup{\beta, p_\beta}\bigl(\widehat \nu_N\in D,\, M_N = n+1\bigr)
		\leq \E\Biggl[ \exp\Bigl( C\beta (M_N^\mathrm{lg} - 1) \int_0^\infty (R+y)^{-(s-2)} \dd \widehat \nu_N^\mathrm{lg}(y) \Bigr)\1_{\{\widehat \nu_N^\mathrm{lg} \in D\}} \1_{\{M_N^\mathrm{lg} = n+1\}}\Biggr]
	\] 
\end{lemma}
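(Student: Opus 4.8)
The strategy mirrors the proof of Lemma~\ref{lem:rep}, but now we must retain the information about the crack lengths $z_{i_1}, \ldots, z_{i_n}$ rather than just integrating them out. First I would fix $n \in \N_0$ and an increasing sequence $0 = i_0 < i_1 < \cdots < i_{n+1} = N$, and decompose the integral defining $\Q_N^\ssup{\beta,p_\beta}(\widehat\nu_N \in D,\, B_N(i_1,\ldots,i_n))$ into the product of the cluster contributions and the crack contributions. Estimating the interactions across each crack $i_k$ by Eq.~\eqref{eq:across}, the factor coming from crack $k$ is bounded by $\int_R^\infty \exp(\beta[C z_{i_k}^{-(s-2)} - p_\beta z_{i_k}])\1_{\{\cdots\}}\dd z_{i_k}$ rather than simply $\int_R^\infty \e^{-\beta p_\beta z}\dd z$; here the indicator remembers which bonds are counted in $\widehat\nu_N$. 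After the change of variables $z_{i_k} = R + y_k$, this is exactly the expectation over an exponential variable $Y_k$ with parameter $\beta p_\beta$ of $\exp(\beta C (R+Y_k)^{-(s-2)})$, times the normalization $\int_R^\infty \e^{-\beta p_\beta z}\dd z$. The product of those normalizations over the $n$ cracks, together with the cluster partition functions $Q_{i_k - i_{k-1}}^\cl(\beta, p_\beta)$, reassembles — after dividing by $Q_N(\beta,p_\beta)$ and using the lower bound~\eqref{eq:rec3} — into $\P(T_1 = i_1 - i_0, \ldots, T_{n+1} = i_{n+1} - i_n)$ divided by $\sum_{\ell=1}^N \P(T_1 + \cdots + T_\ell = N)$, precisely as in Eq.~\eqref{eq:rec5}.

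\textbf{Assembling the expectation.} With the $Y_i$ chosen independent of $M_N^\mathrm{lg}$ (and of the $T_j$), on the event $M_N^\mathrm{lg} = n+1$ we have $n = M_N^\mathrm{lg} - 1$ cracks, and the empirical distribution of the $Y_i$ is $\widehat\nu_N^\mathrm{lg}$ as in Eq.~\eqref{eq:muhatdef}. The product $\prod_{k=1}^n \exp(\beta C (R + Y_k)^{-(s-2)})$ equals $\exp\bigl(\beta C \sum_{k=1}^{M_N^\mathrm{lg}-1}(R+Y_k)^{-(s-2)}\bigr) = \exp\bigl(\beta C (M_N^\mathrm{lg}-1)\int_0^\infty (R+y)^{-(s-2)}\dd\widehat\nu_N^\mathrm{lg}(y)\bigr)$. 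The indicator $\1_{\{\widehat\nu_N \in D\}}$ on the atomistic side translates, under the substitution, into $\1_{\{\widehat\nu_N^\mathrm{lg} \in D\}}$ on the lattice-gas side. Summing Eq.~\eqref{eq:rec5}-type bounds over all admissible $(i_1,\ldots,i_n)$ and identifying the resulting sum over $\P(T_1 = i_1 - i_0, \ldots)$ weighted by these crack factors with the stated expectation over $(M_N^\mathrm{lg}, \widehat\nu_N^\mathrm{lg})$ — using that $\sum_{\ell} \P(T_1 + \cdots + T_\ell = N) = \P(\exists \ell\colon T_1 + \cdots + T_\ell = N) \to \mu^{-1}$ — gives the claimed inequality (if one is slightly careful, the lemma as stated discards the harmless normalizing factor, which only helps the bound in the limit, or one tracks it exactly; I would state it with the $N\to\infty$ renewal asymptotics absorbed just as the present lemma does).

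\textbf{Main obstacle.} The only genuinely delicate point is the bookkeeping of \emph{which} bonds feed into $\widehat\nu_N$ versus $\widehat\nu_N^\mathrm{lg}$, and ensuring the joint law on the lattice-gas side correctly couples $M_N^\mathrm{lg}$ (built from the $T_j$) with the independent exponential family $(Y_i)$: the atomistic crack lengths are \emph{not} independent of the cluster structure a priori, but after estimating across-crack interactions by Eq.~\eqref{eq:across} the Boltzmann weight factorizes into a cluster part (depending only on the $T_j$'s) times a crack part $\prod_k \e^{\beta[C z_{i_k}^{-(s-2)} - p_\beta z_{i_k}]}$ (depending only on the crack lengths), and it is exactly this factorization that legitimizes the product form. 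Everything else — the change of variables $z = R + y$, recognizing the exponential density, and summing over crack positions — is routine and parallels Lemma~\ref{lem:rep} verbatim.
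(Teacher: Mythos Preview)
Your proposal is correct and follows essentially the same route as the paper's proof: refine the bound in~\eqref{summing} by keeping the crack-length dependence, substitute $z_{i_k}=R+y_k$ to recognize the exponential density, divide by $Q_N(\beta,p_\beta)$ and use~\eqref{eq:rec3}, then sum over crack positions and exploit independence of the $Y_i$ from $M_N^\mathrm{lg}$. One small clarification: no renewal asymptotics are needed---the normalizing denominator $\sum_\ell \P(T_1+\cdots+T_\ell=N)$ is exactly the one appearing in the definition~\eqref{eq:mnk} of the law of $M_N^\mathrm{lg}$, so the inequality holds for every finite $N$.
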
 

\begin{proof} 
	We use the notation of Lemma~\ref{lem:rep} and its proof.  Refining the first inequality in~\eqref{summing}, we see
	\begin{multline}\label{summing2}
	 \int_{\R_+^{N-1}} \1_{B_N(i_1,\ldots,i_n)} (\vect{z})\e^{-\beta [U_N(\vect{z}) + p_\beta \sum_{j=1}^{N-1} z_j]} \1_{\{\widehat \nu_N\in D\}} \dd \vect{z} \\
		 \leq \Biggl( \int_{[0,\infty)^n} \exp\Bigl( \beta C \sum_{j=1}^n (R+r_j)^{- (s-2)} - \beta p_\beta \sum_{j=1}^n (r_j +R) \Bigr)\1_{\{ \frac1n \sum_{j=1}^n \delta_{r_j}\in D\}} \dd \boldsymbol{r} \Biggr)\\
		 	\times  \prod_{k=1}^{n+1} Q_{i_k- i_{k-1}}^\cl(\beta,p_\beta) 
	\end{multline}
	with the help of \eqref{eq:across}. The expression in the second line is rewritten with the aid of the i.i.d.\ exponential random variables with probability density function $\beta p_\beta \exp( - \beta p_\beta r)$ as 
	\[
		\frac{\exp(- n \beta p_\beta R)}{(\beta p_\beta)^n} \E\Bigl[ \exp\Bigl( C\beta \sum_{i=1}^n (R+Y_i)^{-(s-2)} \Bigr)\1_{\{\frac1 n\sum_{i=1}^n \delta_{Y_i}\in D\}}\Bigr].   
	\] 		
	Substituting this expression in the second line of~\eqref{summing2} we obtain 
	\[
	\begin{aligned} 
	 \int_{\R_+^{N-1}} &\1_{B_N(i_1,\ldots,i_n)} (\vect{z})\e^{-\beta [U_N(\vect{z}) + p_\beta \sum_{j=1}^{N-1} z_j]} \1_{\{\widehat \nu_N\in D\}} \dd \vect{z}\\
	 &\leq \E\Bigl[ \exp\Bigl( C\beta \sum_{i=1}^n (R+Y_i)^{-(s-2)} \Bigr)\1_{\{\frac1 n\sum_{i=1}^n \delta_{Y_i}\in D\}}\Bigr]\\
	 & \qquad \qquad \times
	 \Bigl( \frac{\exp(-\beta p_\beta R)}{\beta p_\beta} \e^{\beta N g^\cl(\beta,p_\beta)} u_\beta^{N} \Bigr)^{-1} \, \P\bigl(T_1=i_1-i_0,\ldots,T_{n+1}= i_{n+1} - i_n\bigr), 
	\end{aligned}
	\]
	compare the third line in~\eqref{summing}. 	We divide by $Q_N(\beta,p_\beta)$, combine with the lower bound~\eqref{eq:rec3} and obtain an inequality similar to~\eqref{eq:rec5}:
	\begin{multline}
		\Q_N^\ssup{\beta, p_\beta}\bigl( \{\widehat \nu_N\in D\}\cap  B_N(i_1,\ldots,i_n)\bigr) \\
		\leq \E\Bigl[ \exp\Bigl( C\beta \sum_{i=1}^n (R+Y_i)^{-(s-2)} \Bigr)\1_{\{\frac1 n\sum_{i=1}^n \delta_{Y_i}\in D\}}\Bigr]\, \frac{\P(T_1=i_1-i_0,\ldots,T_{n+1}= i_{n+1} - i_n)}{\sum_{\ell=1}^N \P(T_1 + \cdots + T_\ell = N)}. 
	\end{multline} 
	We sum over $i_1,\ldots, i_n$, remember the definition~\eqref{eq:mnk} of the distribution of the lattice gas variable $M_N^\mathrm{lg}$, and exploit the independence of $Y_i$ and $M_N^\mathrm{lg}$. This gives 
	\[
	\begin{aligned}
		&\Q_N^\ssup{\beta, p_\beta}\bigl(\widehat \nu_N\in D,\, M_N = n+1\bigr) \\
		&\qquad \leq \E\Bigl[ \exp\Bigl( C\beta \sum_{i=1}^n (R+Y_i)^{-(s-2)} \Bigr)\1_{\{\frac1 n\sum_{i=1}^n \delta_{Y_i}\in D\}}\Bigr]\, \P(M_N^\mathrm{lg} =n+1)\\
		&\qquad = \E\Biggl[ \exp\Bigl( C\beta \sum_{i=1}^n (R+Y_i)^{-(s-2)} \Bigr)\1_{\{\frac1 n\sum_{i=1}^n \delta_{Y_i}\in D\}} \1_{\{M_N^\mathrm{lg} = n+1\}}\Biggr].
	\end{aligned}
	\]
	To conclude, we note that on the event $M_N^\mathrm{lg} =n+1$, we have $\frac1n\sum_{i=1}^{n}\delta_{Y_i} = \widehat \nu_N^\mathrm{lg}$ and 
	\[ 
		\sum_{i=1}^n (R+Y_i)^{-(s-2)} = (M_N^\mathrm{lg} - 1) \int_0^\infty (R+y)^{-(s-2)} \dd \widehat \nu_N^\mathrm{lg}(y). \qedhere
	\] 
\end{proof} 

\subsection{Bounds on effective quantities} 

In order to apply the results from Section~\ref{sec:aux}, we need to check that $q_\beta$, $\lambda_\beta$, and $q_\beta \sum_{k=1}^\infty |f_\beta(k)|$ are small. We start with~$q_\beta$. 
 
\begin{lemma} \label{lem:qbetasy}
	Under Assumptions~\ref{ass:v1}(i)--(v), we have 
	$$
	 \lim_{\beta\to \infty} g^\cl(\beta,p_\beta) = e_0,\quad \lim_{\beta\to \infty} g^\cl_\mathrm{surf}(\beta,p_\beta) = 	e_\mathrm{surf}>0,\quad q_\beta = \e^{-\beta e_\mathrm{surf}/2 +o(\beta)}.
	$$
\end{lemma}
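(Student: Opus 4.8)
The plan is to prove the three assertions of Lemma~\ref{lem:qbetasy} in sequence, with the first two being essentially a reading of Proposition~\ref{prop-oldresults} together with the choice~\eqref{eq:pchoice} of $p_\beta$, and the third a short computation combining them.

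\textbf{Step 1: The bulk and surface limits.}
By definition $g^\cl(\beta,p) = g^\ssup{R}(\beta,p)$ and $g^\cl_\mathrm{surf}(\beta,p) = g^\ssup{R}_\mathrm{surf}(\beta,p)$ are the quantities from Proposition~\ref{prop-oldresults}, and the notation there fixes $e_0^\ssup{R}(\beta) = g^\ssup{R}(\beta,0)$ and $e_\mathrm{surf}^\ssup{R}(\beta) = g^\ssup{R}_\mathrm{surf}(\beta,0)$ with $\lim_{\beta\to\infty} e_0^\ssup{R}(\beta) = e_0$ and $\lim_{\beta\to\infty} e_\mathrm{surf}^\ssup{R}(\beta) = e_\mathrm{surf}$. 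Since we are evaluating at $p_\beta$ rather than at $0$, I would first argue that $g^\cl(\beta,p_\beta) = g^\cl(\beta,0) + o(1)$ and similarly for the surface term. This uses the monotonicity and concavity of $p\mapsto g^\cl(\beta,p)$ (inherited from the constant-pressure formalism) together with an explicit crude bound: adding a pressure term $p_\beta\sum_j z_j$ with $z_j\in[0,R]$ changes $-\frac1\beta\log(\cdot)$ by at most $p_\beta R$ per particle, hence $|g^\cl(\beta,p_\beta) - g^\cl(\beta,0)| \le p_\beta R \to 0$ (recall $\beta p_\beta = \e^{-\beta e_\mathrm{surf}/2 + o(\beta)}$, so in particular $p_\beta\to 0$). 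The same argument controls the surface correction: the $o(1)$-term in~\eqref{e0Rdef} is uniform enough, or one compares the two truncated partition functions directly and tracks the $N$-independent discrepancy, which is $O(R\,p_\beta\,\beta)\cdot$(boundary contributions)$\to 0$. Combining, $\lim_{\beta\to\infty} g^\cl(\beta,p_\beta) = e_0$ and $\lim_{\beta\to\infty} g^\cl_\mathrm{surf}(\beta,p_\beta) = e_\mathrm{surf}$, the latter being strictly positive by~\eqref{eq:surface}.

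\textbf{Step 2: Asymptotics of $q_\beta$.}
From the definition~\eqref{eq:qdef},
\[
	-\frac1\beta\log q_\beta = g^\cl_\mathrm{surf}(\beta,p_\beta) + p_\beta R + \frac1\beta\log(\beta p_\beta).
\]
By Step~1 the first term tends to $e_\mathrm{surf}$; the second term $p_\beta R\to 0$; and by~\eqref{eq:pchoice}, $\frac1\beta\log(\beta p_\beta) = -\tfrac12 e_\mathrm{surf} + o(1)$. Hence $-\frac1\beta\log q_\beta = e_\mathrm{surf} - \tfrac12 e_\mathrm{surf} + o(1) = \tfrac12 e_\mathrm{surf} + o(1)$, which is exactly $q_\beta = \e^{-\beta e_\mathrm{surf}/2 + o(\beta)}$.

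\textbf{Main obstacle.}
The only genuinely nontrivial point is justifying in Step~1 that passing from pressure $0$ to the exponentially small pressure $p_\beta$ does not disturb the \emph{surface} term, i.e.\ that the $o(1)$ remainder in the expansion~\eqref{e0Rdef} is sufficiently uniform in $p$ near $0$ (or at least along the sequence $p_\beta$). The bulk term is easy because the per-particle pressure contribution is globally bounded by $p_\beta R$; for the surface term one must check that the constant in Proposition~\ref{prop-oldresults}'s expansion, which encodes boundary-layer energetics, depends continuously on $p$ at $p=0$. This is precisely the content of the technical remark following Proposition~\ref{prop-oldresults} (the extension of the results of~\cite{jkst19} to $p=0$ and to $p=p_\beta\to 0$ via compactness), so I would invoke that remark; if a more hands-on argument is wanted, one compares $\int_{[0,R]^{N-1}}\e^{-\beta U_N}\dd\vect z$ and $\int_{[0,R]^{N-1}}\e^{-\beta[U_N + p_\beta\sum z_j]}\dd\vect z$ directly, noting the ratio lies in $[\e^{-\beta p_\beta R(N-1)}, 1]$, taking $-\frac1\beta\log$, and observing that the $N$-linear part gives the harmless $p_\beta R$ correction to $g^\cl$ while the genuinely $N$-independent surface discrepancy is controlled by the boundary-layer estimates of~\cite{jkst19}, which are stated uniformly enough for $p$ in a neighborhood of $0$.
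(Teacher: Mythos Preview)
Your proof is correct and rests on the same key ingredient as the paper's: the technical remark after Proposition~\ref{prop-oldresults}, namely that for the truncated partition function with $z_j\in[0,R]$ exponential tightness is automatic, so the asymptotics from~\cite{jkst19} apply directly at the vanishing pressure $p_\beta$. The paper's proof is slightly more direct: rather than first evaluating at $p=0$ and then arguing continuity in $p$, it simply invokes that remark to conclude $g^\cl(\beta,p_\beta)\to e_0$ and $g^\cl_\mathrm{surf}(\beta,p_\beta)\to e_\mathrm{surf}$ in one stroke, and then derives the $q_\beta$ asymptotics exactly as you do in Step~2. Your detour through $p=0$ is harmless for the bulk term (the bound $|g^\cl(\beta,p_\beta)-g^\cl(\beta,0)|\le p_\beta R$ is correct and elementary) but, as you yourself note, does not by itself handle the surface term, for which you end up invoking the same remark the paper uses; so the two proofs are essentially the same.
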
 

\begin{proof}
In our previous work \cite[Theorem 2.5]{jkst19} we investigated the asymptotic behavior of $g(\beta,p)$ and $g_\mathrm{surf}(\beta,p)$ at fixed $p>0$ and for the full partition function $Q_N(\beta,p)$. The strict positivity of the pressure was needed to ensure exponential tightness of measures on $\R_+^{\N}$ or $\R_+^\Z$ as $\beta \to \infty$. For restricted partition functions with spacings in $[0,R]$, exponential tightness comes for free and the results extend to vanishing pressure $p_\beta \to 0$. The asymptotic relations for $g^\ssup{R}(\beta,p_\beta)$ and $g^\ssup{R}_\mathrm{surf}(\beta,p_\beta)$ follow. Together with the definition~\eqref{eq:qdef} of $q_\beta$ and our choice of pressure~\eqref{eq:pchoice}, this implies the asymptotic behavior of $q_\beta$. 	 
\end{proof} 

Next we estimate $\lambda_\beta$ defined in Eq.~\eqref{eq:ladef}. The following lemma crucially needs Assumption~\ref{ass:R} on the size of the truncation parameter $R$.

\begin{lemma} \label{lem:gapintegral}
	Under Assumptions~\ref{ass:v1}--\ref{ass:R}, there exists $c>0$ such that $\lambda_\beta = O(\e^{-c\beta})$. 
\end{lemma}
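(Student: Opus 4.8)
The plan is to reduce the claim to a Laplace-type estimate. The case of compactly supported $v$ is immediate: $\lambda_\beta = 0$ by the definition in \eqref{eq:ladef}. So assume $v$ has unbounded support. Substituting $r=R+y$ in the numerator and using $\int_R^\infty e^{-\beta p_\beta r}\dd r = (\beta p_\beta)^{-1}e^{-\beta p_\beta R}$ for the denominator, I would rewrite
\[
	\lambda_\beta = \log \int_0^\infty \beta p_\beta\, e^{-\beta p_\beta y}\, e^{\beta C (R+y)^{-(s-2)}}\, \dd y = \log \E\bigl[e^{\beta C (R+Y)^{-(s-2)}}\bigr],
\]
with $Y\sim\mathrm{Exp}(\beta p_\beta)$. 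Since $C\geq 0$, the integrand dominates $e^{-\beta p_\beta y}$, so $\lambda_\beta\geq 0$, and it suffices to show $\E[e^{\beta C(R+Y)^{-(s-2)}}]\leq 1 + O(e^{-c\beta})$ for some $c>0$.

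The naive bound $(R+y)^{-(s-2)}\leq R^{-(s-2)}$ yields only $\lambda_\beta\leq \beta C/R^{s-2}$, which grows linearly in $\beta$. The point is that although the exponential tilt has tiny rate $\beta p_\beta = e^{-\beta e_\mathrm{surf}/2 + o(\beta)}$ (by \eqref{eq:pchoice}), it still concentrates $Y$ at the exponentially large scale $1/(\beta p_\beta)$, where $(R+Y)^{-(s-2)}$ is exponentially small. To exploit this I would split the integral at $y=\rho_\beta:=e^{a\beta}$ for a small $a>0$ to be chosen. On $[\rho_\beta,\infty)$ one has $(R+y)^{-(s-2)}\leq \rho_\beta^{-(s-2)}$, so for $\beta$ large the exponent is $\leq 1$ and $e^{\beta C(R+y)^{-(s-2)}}-1\leq 2\beta C\rho_\beta^{-(s-2)}$; integrating against the exponential density, this tail contributes at most $2\beta C\rho_\beta^{-(s-2)} = e^{-a(s-2)\beta + o(\beta)}$ to $\E[\,\cdot\,]-1$. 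On $[0,\rho_\beta]$ one bounds $e^{-\beta p_\beta y}\leq 1$ and $(R+y)^{-(s-2)}\leq R^{-(s-2)}$, so this part contributes at most $\beta p_\beta\,\rho_\beta\, e^{\beta C/R^{s-2}}$.

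The crucial input is Assumption~\ref{ass:R}, which in the unbounded case furnishes the margin $\delta_0:=e_\mathrm{surf}/2 - C/R^{s-2}>0$. With $\rho_\beta=e^{a\beta}$ and \eqref{eq:pchoice}, the small-$y$ contribution is $e^{\beta(-e_\mathrm{surf}/2 + a + C/R^{s-2}) + o(\beta)} = e^{(a-\delta_0)\beta + o(\beta)}$. Choosing $a:=\delta_0/2$ makes both contributions bounded by $e^{-c'\beta}$ for $\beta$ large, with $c':=\tfrac14\delta_0\min\{s-2,1\}>0$ (the extra factor absorbing the $o(\beta)$); hence $\E[e^{\beta C(R+Y)^{-(s-2)}}]\leq 1+e^{-c'\beta}$ and $\lambda_\beta=\log(1+e^{-c'\beta})\leq e^{-c'\beta}$, which is the claim with $c=c'$. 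I expect the only slightly delicate point to be the bookkeeping that balances the two error terms against the exponential smallness of $\beta p_\beta$; this is exactly where the strict inequality $C/R^{s-2}<e_\mathrm{surf}/2$ from Assumption~\ref{ass:R} is indispensable, the remaining steps being routine estimates of one-dimensional integrals.
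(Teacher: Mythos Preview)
Your proof is correct and follows essentially the same strategy as the paper's: split the integral at an exponentially growing cutoff, bound the near part crudely by $e^{\beta C/R^{s-2}}$ times its length, bound the far part using $e^x-1\lesssim x$, and then balance the two pieces using the margin $e_\mathrm{surf}/2 - C/R^{s-2}>0$ from Assumption~\ref{ass:R} together with $\beta p_\beta = e^{-\beta e_\mathrm{surf}/2+o(\beta)}$. The paper parametrizes the splitting point as the level set $\{r:\beta C r^{-(s-2)}=\delta\}$ with $\delta=e^{-c_1\beta}$, which amounts to the same exponential scale as your $\rho_\beta=e^{a\beta}$; your probabilistic rewriting $\lambda_\beta=\log\E[e^{\beta C(R+Y)^{-(s-2)}}]$ with $Y\sim\mathrm{Exp}(\beta p_\beta)$ is a clean way to package the same computation.
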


\begin{proof}
	Clearly $\lambda_\beta \geq 0$. 
 For an upper bound, we first observe that $\exp( \beta C r^{-(s-2)}) \leq \exp( \delta)$ if and only if $r \geq (C \beta / \delta)^{1/(s-2)}$ and split the integral accordingly:
  \bes
  \begin{aligned}
  &    \beta p_\beta \int_R^\infty \e^{\beta [C r^{-(s-2)} - p_\beta r]} \dd r \\
	&\qquad  \leq \beta p_\beta  \int_0^{( C\beta /\delta)^{1/(s-2)}} \e^{\beta C R^{-(s-2)}} \e^{-\beta p_\beta r} \dd r
	   +  \beta p_\beta \int_{(C \beta /\delta)^{1/(s-2)}}^\infty \e^{\delta}\e^{-\beta p_\beta r } \dd r \\
	&\qquad  \leq \e^{\beta C R^{-(s-2)}} \bigl(1- \e^{-\beta p_\beta [\beta C/\delta]^{1/(s-2)}}\bigr) + \e^\delta \\
	& \qquad = O\Bigl(\e^{\beta C R^{-(s-2)}} \beta p_\beta (\tfrac{\beta C}{\delta})^{1/(s-2)}\Bigr) + 1+ O(\delta).
  \end{aligned}
  \ees
 To conclude, we choose $\delta_\beta= \exp( - c_1 \beta)$ such that 
 $c_2:=\min (c_1, e_\mathrm{surf}/2 - C R^{-(s-2)}- (s-2)^{-1} c_1)$ is positive (which exists because of Assumption~\ref{ass:R}),
 take the logarithm, use the assumption~\eqref{eq:pchoice} on the pressure,  and obtain an upper bound $\lambda_\beta \leq \exp( - \beta c_2+ o(\beta))$. 
\end{proof}

Next we make sure that $\eps_\beta =q_\beta  \sum_{k=1}^\infty |f_\beta (k)|$ vanishes for $\beta\to\infty$. Let us first check a necessary condition. Notice that for every fixed $k$
\bes
	f_{\beta}(k) = \e^{- \beta [E_k - k e_0- e_\mathrm{surf} + o(1)]} - 1,\qquad\mbox{as }\beta\to\infty.
\ees
In view of Lemma~\ref{lem:qbetasy}, in order that at least $\lim_{\beta\to\infty}q_\beta\sup_k |f_\beta(k)|= 0$, it is necessary that $\frac{1}{2}e_\mathrm{surf}  - (E_k - k e_0)<0$ for all $k\in \N$. The following lemma implies this. It is related to bounds derived in~\cite{scardia-schloem-zanini11}, see also~\cite[Remark 2.3]{braides-cicalese07}. 

%Let us recall that $e_0$ and $e_{\rm surf}$ depend on the interaction range parameter $m$ that we put equal to 2 in Assumption~\ref{ass:m}. However, the following lemma also holds under the weaker assumption that $m=2$ or $m\geq 3$ and $v(a)<\sum_{k=3}^m (k-2) v(ka)$. 

\begin{lemma} \label{lem:surfbound}
Suppose that Assumptions~\ref{ass:v1}(i)--(v) and~\ref{ass:m} hold true. Then 
\begin{equation}\label{Eklowerbound}
  E_n - n e_0\geq |e_0| > e_\mathrm{surf}/2,\qquad n\in\N.
\end{equation}
\end{lemma}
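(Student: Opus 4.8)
The plan is to prove the two inequalities in \eqref{Eklowerbound} separately. The first, $E_n - n e_0 \geq |e_0|$, is equivalent to $E_n \geq (n-1)e_0$ since $e_0 < 0$; this is exactly \cite[Lemma 3.2]{jkst19}, quoted already in Section~\ref{sec:heuristics2}, so I would simply invoke it. (The intuition: removing one bond from an $n$-particle chain costs at most $|e_0|$ relative to the bulk energy per particle, because under Assumption~\ref{ass:v1}(i) all NNN bonds are attractive, so deleting the weakest link only helps.) The substance of the lemma is therefore the strict inequality $|e_0| > e_\mathrm{surf}/2$, equivalently $2|e_0| > e_\mathrm{surf}$, i.e. $-2e_0 - e_\mathrm{surf} > 0$. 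Recalling $e_0 = W(a) = v(a) + v(2a)$ and that $e_\mathrm{surf} = \lim_N (E_N - N e_0)$, I would first obtain a clean closed form or a sharp upper bound for $e_\mathrm{surf}$ under Assumption~\ref{ass:m} ($m=2$).

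For the NNN model the ground state is the periodic lattice with spacing $a$, possibly with boundary-layer relaxation at the two ends. The key structural fact from \cite{jkst19} is that $E_N - N e_0$ converges, and that the limit $e_\mathrm{surf}$ is the sum of two surface-layer contributions, one at each end of the chain. A convenient upper bound is obtained by using a \emph{trial configuration}: take the strictly periodic chain with spacing $a$ (no relaxation). Then for this configuration the energy is $(N-1)v(a) + (N-2)v(2a)$, so
\[
  E_N \leq (N-1)v(a) + (N-2)v(2a) = N e_0 - v(a) - 2v(2a),
\]
giving $e_\mathrm{surf} = \lim_N(E_N - Ne_0) \leq -v(a) - 2v(2a)$. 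Hence it suffices to show
\[
  -2e_0 - e_\mathrm{surf} \geq -2\bigl(v(a)+v(2a)\bigr) + v(a) + 2v(2a) = -v(a) > 0,
\]
which holds because $v(a) < 0$: indeed $a \in (z_{\min}, z_{\max})$ and $v$ is negative throughout $(z_{\max},\infty)$ and decreasing on $(0,z_{\max})$ with $v(z_{\max}) < 0$, so $v(a) \leq v(z_{\max}) < 0$ by Assumption~\ref{ass:v1}(i). Thus $2|e_0| - e_\mathrm{surf} \geq |v(a)| > 0$, which is the claim.

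The main obstacle I anticipate is justifying the upper bound $e_\mathrm{surf} \le -v(a) - 2v(2a)$ rigorously: one must be sure that the unrelaxed periodic chain is an admissible comparison configuration (it is, since $a > z_{\min} > r_\mathrm{hc}$, so no hard-core violation) and that $E_N \le$ its energy genuinely passes to the limit $e_\mathrm{surf} = \lim_N (E_N - Ne_0)$, which it does because that limit exists by \eqref{eq:surface}. A secondary subtlety is that the argument as written only uses $m = 2$; for general finite $m$ the trial-energy bookkeeping changes (one gets $e_\mathrm{surf} \le -\sum_{k=1}^{m}(2k-1)v(ka)$ heuristically, $\big)$ but the sign conclusion may fail, which is consistent with the alternative scenario of Section~\ref{sec:compete}. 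I would remark on this but not pursue it. The reference to \cite{scardia-schloem-zanini11} and \cite[Remark 2.3]{braides-cicalese07} can be cited for the analogous zero-temperature fracture-location estimates, since the computation of $e_\mathrm{surf}$ for NNN chains is carried out there in a closely related form.
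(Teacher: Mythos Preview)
Your proof is correct and follows essentially the same route as the paper: both invoke \cite[Lemma~3.2]{jkst19} for $E_n - n e_0 \ge |e_0|$, and both bound $e_\mathrm{surf}$ above by the clamped (unrelaxed periodic) surface energy $-v(a)-2v(2a)$---the paper cites this as \cite[Theorem~2.2]{jkst19} while you derive it directly from the trial configuration---and then conclude $e_\mathrm{surf}+2e_0 \le v(a) < 0$. One small slip in your parenthetical aside: for general $m$ the trial-chain bound is $e_\mathrm{surf} \le -\sum_{k=1}^m k\,v(ka)$, not $-\sum_{k=1}^m (2k-1)v(ka)$ (and there is a stray closing delimiter after ``heuristically'').
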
 

\begin{proof} 
For $n=1$ we have $E_1=0$, and the first inequality of \eqref{Eklowerbound} is trivial. By~\cite[Lemma~3.2]{jkst19}, the sequence $(E_{n+1}/n)_{n\in \N}$ is subadditive and $e_0 = \inf_{n\in \N} (E_{n+1}/n)$. Hence, we have, for every $n\in\N\setminus\{1\}$,
\bes
	E_n - n e_0 = E_{n-1+1} - (n-1) e_0 - e_0 \geq - e_0 = |e_0|.
\ees 
From~\cite[Theorem 2.2]{jkst19}, we know that the surface energy is smaller than the clamped surface energy, i.e., $e_\mathrm{surf}\leq - \sum_{k=1}^m k v(ka)$. Therefore 
\bes 
	e_\mathrm{surf} +2 e_0 \leq - \sum_{k=1}^m k v(ka) +2 \sum_{k=1}^m v(ka) \leq v(a) - \sum_{k=2}^m (k-2) v(ka).
\ees
Since $m=2$, the right-hand side equals $v(a)$, which is negative. 
\end{proof} 

\begin{remark} 
The proof shows that Lemma~\ref{lem:surfbound}, and thus also the following Lemma~\ref{lem:qsurfbound}, is valid for $m\in\N\cup\{\infty\}$ provided $v$ satisfies the estimate $v(a)<\sum_{k=3}^m (k-2) v(ka)$. This is e.g.\ true for the Lennard-Jones potential $v(r) = r^{- 2s} - r^{-s}$ provided $s\geq 3$. 
\end{remark} 

For the following lemma, we also need Assumptions~\ref{ass:v1}(vi) and~\ref{ass:R}. 

\begin{lemma} \label{lem:qsurfbound}
	Under Assumptions~\ref{ass:v1}--\ref{ass:m}, we have
	$$
	\limsup_{\beta\to \infty}\sup_{n\in \N} \left(\frac{1}{\beta} \log Q_{n}^\ssup{R}(\beta,p_\beta) + n g^\ssup{R}(\beta,p_\beta) \right) 
	\leq e_0.
	$$ 
\end{lemma}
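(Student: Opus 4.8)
The quantity to control is $\tfrac1\beta \log Q_n^\ssup{R}(\beta,p_\beta) + n g^\ssup{R}(\beta,p_\beta)$, and by definition~\eqref{eq:veff} this equals $g_\mathrm{surf}^\ssup{R}(\beta,p_\beta) - V_\beta(n)$. Since $g_\mathrm{surf}^\ssup{R}(\beta,p_\beta) \to e_\mathrm{surf}$ by Lemma~\ref{lem:qbetasy}, the statement is equivalent to showing $\liminf_{\beta\to\infty}\inf_{n\in\N} V_\beta(n) \geq e_\mathrm{surf} - e_0$, i.e.\ a \emph{lower} bound on the effective interaction that is uniform in $n$. For fixed $n$ one expects $V_\beta(n) \to E_n - n e_0 - e_\mathrm{surf}$ as $\beta\to\infty$, and then Lemma~\ref{lem:surfbound} gives $E_n - n e_0 - e_\mathrm{surf} \geq |e_0| - e_\mathrm{surf} \geq e_0 + e_\mathrm{surf} - e_\mathrm{surf}$\,---\,wait, one wants $\geq e_\mathrm{surf} - e_0$? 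Let me recompute: we need $V_\beta(n)+g_\mathrm{surf} = -\tfrac1\beta\log Q_n^\ssup{R} - n g^\ssup{R}$, so the claimed bound $\leq e_0$ reads $-V_\beta(n) \leq e_0 - e_\mathrm{surf} + o(1)$, i.e.\ $V_\beta(n) \geq e_\mathrm{surf} - e_0 + o(1) = e_\mathrm{surf} + |e_0| + o(1)$. By Lemma~\ref{lem:surfbound} the pointwise limit $E_n - n e_0 - e_\mathrm{surf}$ satisfies $E_n - n e_0 - e_\mathrm{surf} \geq |e_0| - e_\mathrm{surf}$, which is \emph{not} obviously $\geq e_\mathrm{surf}+|e_0|$; so the pointwise statement alone is too weak, and the real content is a uniform-in-$n$ \emph{upper} bound on $Q_n^\ssup{R}$ of the form $Q_n^\ssup{R}(\beta,p_\beta) \leq \e^{-\beta(n g^\ssup{R} + e_0 + o(\beta))}$, uniformly in $n$.

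The plan is therefore to bound $Q_n^\ssup{R}(\beta,p_\beta)$ from above directly. First I would drop the pressure term ($p_\beta \geq 0$, so $\e^{-\beta p_\beta\sum z_j}\leq 1$) and the $R$-truncation, reducing to bounding $\int_{[0,R]^{n-1}} \e^{-\beta U_n(\vect z)}\dd\vect z$. The energy $U_n$ splits as $U_n(\vect z) = \sum_{i=1}^{n-1} v(z_i) + \sum_{i=1}^{n-2} v(z_i + z_{i+1})$ for $m=2$. Using $E_n = \inf U_n$ I would like to write $U_n(\vect z) = E_n + (U_n(\vect z) - E_n)$ and extract a product structure from the excess $U_n - E_n \geq 0$. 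The cleaner route, following the transfer-operator / subadditivity logic already invoked in the paper, is to use the almost-additivity $E_{k+\ell} \leq E_k + E_\ell + |v(a)|$-type bounds from~\cite{jkst19}, together with a crude per-bond estimate: split $[0,R]^{n-1}$ into a small neighborhood of the ground-state configuration (contributing $\approx \e^{-\beta E_n}\cdot(\text{polynomial in }\beta)$) and the complement, on which $U_n - E_n$ grows at least linearly in the number of "bad" bonds, giving geometric decay. Summing, $Q_n^\ssup{R}(\beta,p_\beta) \leq \e^{-\beta E_n} (C\beta)^{n}$ or so\,---\,but $(C\beta)^n$ is too lossy. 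Instead I would iterate a \emph{one-step} bound: show there is a constant $\kappa_\beta$ with $\tfrac1\beta\log\kappa_\beta \to 0$ such that $Q_{k+1}^\ssup{R}(\beta,p_\beta) \leq \kappa_\beta\, \e^{-\beta g^\ssup{R}(\beta,p_\beta)} Q_k^\ssup{R}(\beta,p_\beta)$ uniformly in $k$, which by induction gives $Q_n^\ssup{R} \leq \kappa_\beta^{n} \e^{-\beta(n-1)g^\ssup{R}} Q_1^\ssup{R}$; combined with $Q_1^\ssup{R}=1$ and $\tfrac1\beta\log\kappa_\beta^n/n\to 0$ this would \emph{almost} work but still leaves an unwanted $n$-dependent factor unless $\kappa_\beta^n \leq \e^{o(\beta)}$ uniformly in $n$, which forces $\kappa_\beta \leq 1+o(1/n)$\,---\,not plausible.

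The correct mechanism, I believe, exploits Assumption~\ref{ass:v1}(vi) ($v\leq 0$ on $[2r_\mathrm{hc},\infty)$) and Assumption~\ref{ass:R}, which together let one insert an extra "gap-like" bond without energetic cost: one compares an $(n{+}1)$-cluster configuration to an $n$-cluster one by \emph{removing} a particle whose two adjacent bonds are both $\geq 2r_\mathrm{hc}$ (hence the removed interactions are $\leq 0$, so removing them does not increase the energy), and using that in a configuration with all spacings in $[0,R]$ and total "length budget" not too large, a positive fraction of bonds must be short $\approx a$ while, by a pigeonhole/length argument, one can always find an adjacent pair of the required type unless $n$ is bounded. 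This should yield $Q_{n+1}^\ssup{R}(\beta,p_\beta) \leq c_\beta\, \e^{-\beta(g^\ssup{R}(\beta,p_\beta)+e_0+o(1))}Q_n^\ssup{R}(\beta,p_\beta)$ with the decisive gain that the iteration telescopes to give $\tfrac1\beta\log Q_n^\ssup{R} + n g^\ssup{R} \leq e_0 + o(1)$ uniformly, because each step contributes exactly the bulk energy per particle $g^\ssup{R}\!\approx\! e_0$ plus a surface-type correction that does not accumulate. Concretely I would: (i) record the pointwise limit $V_\beta(n)\to E_n-ne_0-e_\mathrm{surf}$ and handle small $n\leq N_0$ by this plus Lemma~\ref{lem:surfbound}; (ii) for $n> N_0$, peel off boundary layers of bounded size at both ends using~\cite[Theorem 2.5/2.11]{jkst19} so that $-\tfrac1\beta\log Q_n^\ssup{R}(\beta,p_\beta) = n g^\ssup{R}(\beta,p_\beta) + (\text{boundary contributions, }\geq g_\mathrm{surf}^\ssup{R}-o(1)) + (\text{bulk excess}\geq 0)$; (iii) invoke $g_\mathrm{surf}^\ssup{R}\to e_\mathrm{surf}$ and the inequality $e_\mathrm{surf}\geq e_\mathrm{surf}/2 > -e_0$, equivalently $e_\mathrm{surf} - e_0 > e_0$... the arithmetic that must be made to close is exactly $g_\mathrm{surf}^\ssup{R}(\beta,p_\beta) - o(1) \geq e_\mathrm{surf} - |e_0| \geq$ (nothing needed, since we only need $\geq e_\mathrm{surf} - e_0$, and $e_0<0$ makes $-e_0=|e_0|$, so we need $g_\mathrm{surf}^\ssup{R} \geq e_\mathrm{surf} + |e_0| - $ slack). \textbf{The main obstacle} is precisely obtaining the boundary-contribution lower bound with the \emph{right constant} $e_\mathrm{surf} + |e_0|$ rather than just $e_\mathrm{surf}$: one must show that a cluster of $n$ particles of \emph{restricted} length (spacings in $[0,R]$, no pressure) carries not only the surface energy $e_\mathrm{surf}$ but an extra $|e_0|$ coming from the fact that cutting a bond "wastes" one ground-state bond's worth of binding energy\,---\,this is the content of $E_n - n e_0 \geq |e_0|$ from Lemma~\ref{lem:surfbound} combined with $e_\mathrm{surf} \leq -\sum_k k v(ka)$, and the work is to promote that zero-temperature estimate to a uniform positive-temperature one using the decay-of-correlations input~\cite[Theorem 2.11]{jkst19} (here Assumption~\ref{ass:m}, $m=2$, enters), together with Assumption~\ref{ass:v1}(vi) to keep the truncation at $R$ harmless.
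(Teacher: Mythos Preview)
Your proposal contains a sign error that sends you on a wild goose chase. From $V_\beta(n) + g_\mathrm{surf}^\ssup{R} = -\tfrac1\beta\log Q_n^\ssup{R} - n g^\ssup{R}$, the claimed bound $\tfrac1\beta\log Q_n^\ssup{R} + n g^\ssup{R} \leq e_0$ reads $-V_\beta(n) - g_\mathrm{surf}^\ssup{R} \leq e_0$, i.e.\ $V_\beta(n) \geq -e_0 - g_\mathrm{surf}^\ssup{R} \to |e_0| - e_\mathrm{surf}$, \emph{not} $e_\mathrm{surf}+|e_0|$ as you wrote. So at the pointwise level the zero-temperature inequality $E_n - n e_0 - e_\mathrm{surf} \geq |e_0| - e_\mathrm{surf}$ from Lemma~\ref{lem:surfbound} is exactly the right constant, and the whole discussion about needing an ``extra $|e_0|$'' is based on a miscalculation. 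The genuine issue is only the \emph{uniformity in $n$}, which none of your subsequent sketches actually establishes: the one-step iteration leaves an $n$-dependent factor you yourself flag as fatal, and the boundary-layer peeling via decay of correlations would at best give $\geq g_\mathrm{surf}^\ssup{R}-o(1)\to e_\mathrm{surf}$, not the stronger constant you (wrongly) thought you needed, but you never write down how to get even that uniformly.

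The paper's proof is a one-line subadditivity argument you narrowly miss. Assumption~\ref{ass:v1}(vi) says $v(r)\leq 0$ for $r\geq 2r_\mathrm{hc}$; since any finite-energy configuration has all spacings $>r_\mathrm{hc}$, every cross-bond term $v(z_j+z_{j+1})$ is $\leq 0$. Dropping the cross term that links two blocks therefore \emph{increases} $U$, hence \emph{decreases} the Boltzmann weight, giving the supermultiplicativity
\[
	Q_{k+n+1}^\ssup{R}(\beta,p_\beta)\ \geq\ Q_{k+1}^\ssup{R}(\beta,p_\beta)\,Q_{n+1}^\ssup{R}(\beta,p_\beta),\qquad k,n\in\N_0.
\]
By Fekete's lemma, $\e^{-\beta g^\ssup{R}(\beta,p_\beta)} = \lim_n Q_{n+1}^\ssup{R}(\beta,p_\beta)^{1/n} = \sup_n Q_{n+1}^\ssup{R}(\beta,p_\beta)^{1/n}$, so for every $n\geq 1$,
\[
	\tfrac1\beta\log Q_{n}^\ssup{R}(\beta,p_\beta) + n\,g^\ssup{R}(\beta,p_\beta)\ \leq\ g^\ssup{R}(\beta,p_\beta),
\]
(the case $n=1$ is trivial since $Q_1^\ssup{R}=1$), and the right-hand side tends to $e_0$ by Lemma~\ref{lem:qbetasy}. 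No correlation decay, no pigeonhole, no splitting into small and large $n$ is needed.
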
 

\begin{proof} 
	Let $k,n \in \N_0$. If $U_{k+n+1}(z_1,\ldots,z_{k+n}) <\infty$, then every spacing $z_j$ must be larger than $r_\mathrm{hc}$ and all interactions involving more than one bond $z_j$, e.g., $v(z_j+z_{j+1})$, are zero or negative because of Assumption~\ref{ass:v1}(vi). It follows that 
	$$
	Q_{k+n+1}^\ssup{R}(\beta,p_\beta) \geq Q_{k+1}^\ssup{R}(\beta,p_\beta) Q_{n+1}^\ssup{R}(\beta,p_\beta),\qquad k,n \in \N.
	$$
	Hence 
	$ \exp(-\beta g^\ssup{R}(\beta,p_\beta)) \geq Q_{n+1}^\ssup{R}(\beta,p_\beta)^{1/n}$ 
	 for all $n\in \N$. Consequently
	 \bes
	  \sup_{n\in\N} \left(\frac{1}{\beta} \log Q_{n}^\ssup{R}(\beta,p_\beta) + n g^\ssup{R}(\beta,p_\beta) \right) 
	 		\leq g^\ssup{R}(\beta,p_\beta) = e_0+o(1).  \qedhere
	\ees
\end{proof} 

\begin{theorem} \label{thm:codec}
	Under Assumptions~\ref{ass:v1}--\ref{ass:m}, we have
	\begin{equation}\label{fksumupperbound}
	\limsup_{\beta\to \infty}\frac{1}{\beta}\log\Bigl( \sum_{k=1}^\infty |f_\beta(k)|\Bigr) \leq e_0+ e_\mathrm{surf}.
	\end{equation}
\end{theorem}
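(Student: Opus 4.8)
The plan is to rewrite everything in terms of the finite-size correction $V_\beta(k)$ and to split the series at an index $K_\beta$ growing linearly in $\beta$. Recall from \eqref{eq:veff}--\eqref{eq:fdef} that $f_\beta(k)=\e^{-\beta V_\beta(k)}-1$ with $V_\beta(k)=-\tfrac1\beta\log Q_k^\cl(\beta,p_\beta)-k\,g^\cl(\beta,p_\beta)-g_\mathrm{surf}^\cl(\beta,p_\beta)$, and that $V_\beta(k)\to 0$ as $k\to\infty$ by Proposition~\ref{prop-oldresults} (eq.~\eqref{e0Rdef}). Two ingredients will be combined: (i) a bound valid \emph{uniformly} in $k$, namely $-V_\beta(k)\le e_0+e_\mathrm{surf}+o(1)$ as $\beta\to\infty$, which is immediate from Lemma~\ref{lem:qsurfbound} (controlling $\tfrac1\beta\log Q_k^\cl+k\,g^\cl$ uniformly in $k$) together with Lemma~\ref{lem:qbetasy} ($g_\mathrm{surf}^\cl\to e_\mathrm{surf}$); and (ii) a geometric decay estimate $|V_\beta(k)|\le C(\beta)\,\rho^k$ with some $\rho\in(0,1)$ independent of $\beta$ and $C(\beta)=\e^{O(\beta)}$. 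Since $e_0+e_\mathrm{surf}\ge 0$ (recalled in Section~\ref{sec:heuristics2}), so $\e^{\beta(e_0+e_\mathrm{surf})}\ge 1$, and since $|\e^{-x}-1|\le\max\{1,\e^{-x}\}$, ingredient (i) yields the uniform estimate $|f_\beta(k)|\le \e^{\beta(e_0+e_\mathrm{surf})+o(\beta)}$ for every $k$, with the $o(\beta)$ term not depending on $k$.

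For the head $\sum_{k\le K_\beta}|f_\beta(k)|$ we use only this uniform estimate: it is at most $K_\beta\,\e^{\beta(e_0+e_\mathrm{surf})+o(\beta)}$, so if $K_\beta$ is polynomial (in particular linear) in $\beta$ then $\tfrac1\beta\log$ of the head tends to $e_0+e_\mathrm{surf}$. For the tail we choose $K_\beta=A\beta$ with $A$ a large constant, chosen so that $\tfrac{\beta C(\beta)}{1-\rho}\rho^{K_\beta}\le 1$ for all large $\beta$ (possible because $C(\beta)=\e^{O(\beta)}$). Then $\beta|V_\beta(k)|\le 1$ for $k\ge K_\beta$ by (ii), hence $|f_\beta(k)|=|\e^{-\beta V_\beta(k)}-1|\le 2\beta|V_\beta(k)|\le 2\beta\,C(\beta)\,\rho^k$, so $\sum_{k>K_\beta}|f_\beta(k)|=O(\beta)$ and $\tfrac1\beta\log$ of the tail tends to $0\le e_0+e_\mathrm{surf}$. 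Adding the two contributions and taking $\tfrac1\beta\log$ gives $\limsup_{\beta\to\infty}\tfrac1\beta\log\sum_{k=1}^\infty|f_\beta(k)|\le e_0+e_\mathrm{surf}$.

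The heart of the matter is ingredient (ii): the correction $V_\beta(k)$ must decay geometrically in $k$ at a rate uniform in $\beta$ and with prefactor at most exponential in $\beta$. I would deduce this from the uniform decay-of-correlations estimate \cite[Theorem 2.11]{jkst19} — this is exactly where Assumption~\ref{ass:m} ($m=2$) is used. Concretely, since $V_\beta(j)\to 0$, one may telescope $V_\beta(k)=\sum_{j\ge k}\bigl(V_\beta(j)-V_\beta(j+1)\bigr)$, and each increment $V_\beta(j)-V_\beta(j+1)=\tfrac1\beta\log\bigl(Q_{j+1}^\cl(\beta,p_\beta)/Q_j^\cl(\beta,p_\beta)\bigr)+g^\cl(\beta,p_\beta)$ measures the deviation of the free-energy cost of appending one more bond to a $j$-cluster from the bulk free energy per bond; this is a one-point-type quantity that converges to $0$ at the rate dictated by the correlation decay, i.e.\ by the spectral gap of the restricted transfer operator. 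The estimate of \cite{jkst19} would bound this increment by $C(\beta)\,\rho^j$ with $\rho\in(0,1)$ independent of $\beta$ and $C(\beta)=\e^{O(\beta)}$; summing the geometric series over $j\ge k$ gives $|V_\beta(k)|\le \tfrac{C(\beta)}{1-\rho}\rho^k$, which is (ii).

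The main obstacle is precisely step (ii): extracting from \cite[Theorem 2.11]{jkst19} a decay of $V_\beta(k)$ that is simultaneously (a) genuinely geometric in $k$, (b) with rate $\rho$ bounded away from $1$ \emph{uniformly} as $\beta\to\infty$, and (c) with prefactor growing no faster than exponentially in $\beta$; it is the $\beta$-uniformity in (b) that a naive transfer-operator argument cannot provide at low temperature, and that makes the sharp NNN correlation estimates indispensable. Everything else — the telescoping identity, the elementary inequalities $|\e^{-x}-1|\le\max\{1,\e^{-x}\}$ and $|\e^{-x}-1|\le 2|x|$ for $|x|\le 1$, the choice $K_\beta=A\beta$, and the bookkeeping with $\tfrac1\beta\log$ — is routine once (ii) is available.
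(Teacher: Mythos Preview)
Your proposal is correct and follows the same overall strategy as the paper: obtain a uniform bound $\sup_k|f_\beta(k)|\le \e^{\beta(e_0+e_\mathrm{surf}+o(1))}$ from Lemma~\ref{lem:qsurfbound}, a geometric decay in $k$ from the correlation estimate \cite[Theorem~2.11]{jkst19}, and split the series at an index linear in $\beta$.

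The only substantive difference is in how the geometric decay is extracted. You telescope $V_\beta(k)=\sum_{j\ge k}(V_\beta(j)-V_\beta(j+1))$, bound the increments via correlation decay, and then translate back to $|f_\beta(k)|\le 2\beta|V_\beta(k)|$ on the tail. The paper avoids this detour: using the explicit representation (a restricted version of \cite[Proposition~4.9]{jkst19})
\[
f_\beta(k+1)=\frac{\mu_\beta^\ssup{R}\bigl(\e^{\beta[\mathcal W_0+\mathcal W_k]}\bigr)}{\bigl[\mu_\beta^\ssup{R}(\e^{\beta\mathcal W_0})\bigr]^2}-1,
\]
one sees that $f_\beta(k+1)$ is \emph{itself} a normalized covariance of two bounded observables at distance $k$, so \cite[Theorem~2.11]{jkst19} gives directly $|f_\beta(k)|\le \e^{c\beta}\e^{-\gamma k}$ with $\gamma>0$ uniform in $\beta$. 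This makes your ``main obstacle'' (ii) a one-line application rather than a telescoping argument on the free-energy increments; the latter would work too, but is an unnecessary intermediate step.
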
 

Since $|e_0|> e_\mathrm{surf}/2$ by Lemma~\ref{lem:surfbound} and $q_\beta = \exp( - \beta[e_\mathrm{surf}/2+o(1)])$ by Lemma~\ref{lem:qbetasy}, we obtain right away the following corollary. 

\begin{cor} \label{cor:esmall} 
	Under Assumptions~\ref{ass:v1}--\ref{ass:m},
	$$
		\eps_\beta:= q_\beta \sum_{k=1}^\infty |f_\beta(k)| = O(\e^{-\beta (|e_0|-e_\mathrm{surf}/2)})\to 0. 
	$$ 
\end{cor}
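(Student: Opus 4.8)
\emph{Overall strategy.} Recall that $f_\beta(k)=\e^{-\beta V_\beta(k)}-1$ with $V_\beta(k)=-\tfrac1\beta\log Q_k^\cl(\beta,p_\beta)-kg^\cl(\beta,p_\beta)-g^\cl_\mathrm{surf}(\beta,p_\beta)$. The plan is to split $\sum_{k\ge1}|f_\beta(k)|$ at a cutoff $K_\beta$ that grows only logarithmically in $\beta$. For the finitely many terms $k\le K_\beta$ I would bound each one uniformly, using the estimate essentially contained in Lemma~\ref{lem:qsurfbound}, thereby losing only the subexponential factor $K_\beta$; for $k>K_\beta$ I would use the fast decay of $V_\beta(k)$ in $k$ to render the tail negligible. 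The heuristic is that the single term $k=1$, where $Q_1^\cl=1$ and hence $V_\beta(1)=-g^\cl(\beta,p_\beta)-g^\cl_\mathrm{surf}(\beta,p_\beta)\to-(e_0+e_\mathrm{surf})$, already lives at the scale $\e^{\beta(e_0+e_\mathrm{surf})}$; the theorem asserts that nothing of larger order accumulates.

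\emph{The two ingredients.} First, Lemma~\ref{lem:qsurfbound} gives $\sup_{k\in\N}\bigl(\tfrac1\beta\log Q_k^\cl(\beta,p_\beta)+kg^\cl(\beta,p_\beta)\bigr)\le e_0+o_\beta(1)$, and Lemma~\ref{lem:qbetasy} gives $g^\cl_\mathrm{surf}(\beta,p_\beta)\to e_\mathrm{surf}$; combining, $\inf_{k\in\N}V_\beta(k)\ge-(e_0+e_\mathrm{surf})-o_\beta(1)$, hence $\sup_{k\in\N}\e^{-\beta V_\beta(k)}\le\e^{\beta(e_0+e_\mathrm{surf})+o(\beta)}$. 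Since $|f_\beta(k)|\le\max\{1,\e^{-\beta V_\beta(k)}\}$ and $e_0+e_\mathrm{surf}\ge0$ (using $E_k\ge(k-1)e_0$, \cite[Lemma~3.2]{jkst19}), this produces the uniform single-term bound $|f_\beta(k)|\le\e^{\beta(e_0+e_\mathrm{surf})+o(\beta)}$. Second---and this is where Assumption~\ref{ass:m} enters---I would invoke the uniform decay-of-correlations estimates of \cite[Theorem~2.11]{jkst19} for the restricted ensemble (all spacings confined to $[0,R]$), which express that the two boundary layers of a cluster of $k$ particles decouple exponentially fast with a rate that stays bounded away from zero as $\beta\to\infty$. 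From these I would extract constants $C,c>0$, independent of $\beta$ for $\beta$ large, such that $|V_\beta(k)|\le C\e^{-ck}$ for all $k\in\N$. At fixed $\beta$ this merely refines the $o(1)$-error in Proposition~\ref{prop-oldresults}; the essential new content is the uniformity in $\beta$.

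\emph{Putting it together.} With $K_\beta:=\lceil\tfrac2c\log\beta\rceil$, for $k>K_\beta$ one has $\beta|V_\beta(k)|\le\beta C\e^{-cK_\beta}\le C/\beta\to0$, so $|\e^x-1|\le|x|\e^{|x|}$ gives $|f_\beta(k)|\le2\beta C\e^{-ck}$ and therefore
\[
 \sum_{k>K_\beta}|f_\beta(k)|\ \le\ 2\beta C\sum_{k>K_\beta}\e^{-ck}\ \le\ \frac{2C}{(1-\e^{-c})\,\beta}\ \xrightarrow[\beta\to\infty]{}\ 0.
\]
For $k\le K_\beta$ the uniform single-term bound yields $\sum_{k\le K_\beta}|f_\beta(k)|\le K_\beta\,\e^{\beta(e_0+e_\mathrm{surf})+o(\beta)}=\e^{\beta(e_0+e_\mathrm{surf})+o(\beta)}$ because $K_\beta=\e^{o(\beta)}$. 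Adding the two contributions and taking $\tfrac1\beta\log$ gives $\limsup_{\beta\to\infty}\tfrac1\beta\log\sum_{k\ge1}|f_\beta(k)|\le e_0+e_\mathrm{surf}$. The only genuinely hard step is the second ingredient, namely turning the correlation bounds of \cite[Theorem~2.11]{jkst19} into the uniform-in-$\beta$ estimate $|V_\beta(k)|\le C\e^{-ck}$; everything else is bookkeeping on lemmas already in hand, and it is this step alone that forces the restriction to nearest and next-to-nearest neighbor interactions.
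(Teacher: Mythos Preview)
Your overall strategy—uniform single-term bound for small $k$ from Lemma~\ref{lem:qsurfbound}, tail control for large $k$ from the correlation estimates of \cite[Theorem~2.11]{jkst19}, then splitting the sum—is exactly the paper's. The gap is in your second ingredient: the form of the tail bound you extract, and consequently the size of the cutoff.

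What \cite[Theorem~2.11]{jkst19} actually delivers, via the identity $1+f_\beta(k+1)=\mu_\beta^\ssup{R}(\e^{\beta[\mathcal{W}_0+\mathcal{W}_k]})/[\mu_\beta^\ssup{R}(\e^{\beta\mathcal{W}_0})]^2$, is a covariance bound
\[
\bigl|\mu_\beta^\ssup{R}(\e^{\beta[\mathcal{W}_0+\mathcal{W}_k]})-[\mu_\beta^\ssup{R}(\e^{\beta\mathcal{W}_0})]^2\bigr|\le\e^{-\gamma k}\,\|\e^{\beta\mathcal{W}_0}\|_\infty^2.
\]
Dividing by $[\mu_\beta^\ssup{R}(\e^{\beta\mathcal{W}_0})]^2$ introduces a factor of order $\e^{c\beta}$, because under $\mu_\beta^\ssup{R}$ the typical value of $\mathcal{W}_0=v(z_0+z_1)$ is close to $v(2a)<0$. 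Thus one obtains $|f_\beta(k)|\le\e^{c\beta}\e^{-\gamma k}$, which is the paper's bound. Your stronger claim $|V_\beta(k)|\le C\e^{-ck}$ uniformly in $\beta$ would amount to $|f_\beta(k)|\le\e^{\beta C\e^{-ck}}-1$; this does not follow from the correlation bound, and the $\e^{c\beta}$ prefactor cannot be absorbed without further input.

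With only $|f_\beta(k)|\le\e^{c\beta-\gamma k}$ in hand, your logarithmic cutoff $K_\beta\sim\log\beta$ is too small: the tail $\sum_{k>K_\beta}\e^{c\beta-\gamma k}\sim\e^{c\beta}\beta^{-2\gamma/c}$ is still exponentially large. The remedy is to cut at $n=C\beta$ with $C>c/\gamma$; then the tail $\sum_{k>n}\e^{c\beta-\gamma k}\le\e^{(c-\gamma C)\beta}/(1-\e^{-\gamma})$ is harmless, and the bulk $n\cdot\bigl(1+\e^{\beta(e_0+e_\mathrm{surf})+o(\beta)}\bigr)$ still has a merely polynomial prefactor. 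This is precisely the paper's argument. Finally, to pass from the bound on $\sum_k|f_\beta(k)|$ to the corollary itself, multiply by $q_\beta=\e^{-\beta e_\mathrm{surf}/2+o(\beta)}$ (Lemma~\ref{lem:qbetasy}) and use $|e_0|>e_\mathrm{surf}/2$ (Lemma~\ref{lem:surfbound}); you stopped just short of writing this step.
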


In particular, $q_\beta$ and $f_\beta$ satisfy the condition~\eqref{ass:f}, and the results from Section~\ref{sec:aux} are applicable. 
%\noindent  In particular, the statements (a)-(c) of Lemma~\ref{lem:ibound} \comment{modify} hold true. 
%Since these statements were the only form in which the condition~\ref{ass:f} entered, we may apply the general theorems on the weakly interacting lattice gas from Section~\ref{sec:aux}. 
As a preparation for the following proofs, we note that the shift-invariant restricted Gibbs measure $\mu^\ssup{R}_\beta$ at pressure $p$ on $[0,R]^\Z$ is given by 
\bes
	\lim_{k\to \infty} \int_{[0,R]^{\ell}} f(z_{i_k+1},\ldots,z_{i_k+\ell}) \dd \Q_k^\ssup{R,\beta,p}(z_1,\ldots,z_{k-1})  = \int_{[0,R]^{\Z}} f(z_1,\ldots,z_\ell) \dd \mu^\ssup{R}_\beta( (z_j)_{j\in \Z})
\ees
for all $f \in C([0,R]^\ell)$ and sequences $i_k$ with $i_k\to \infty$ and $k-i_k\to \infty$. (The reader is referred to \cite{jkst19} for details on $\mu_\beta^\ssup{R}$.) 
In terms of the interaction 
\bes 
    \mathcal{W}_n((z_j)_{j\in\Z}) 
    = \sum_{\heap{j \leq n, k \geq n+1}{|k-j|\leq m-1}} v(z_j+\cdots + z_k)
\ees 
between a left and a right part of an infinite chain (in particular, $\mathcal{W}_n((z_j)_{j\in\Z}) = v(z_n + z_{n+1})$ if $m = 2$) in \cite[Proposition~4.9]{jkst19} and its proof one finds the explicit formulae   
\bes
	Q_{n+1}^\ssup{R}(\beta,p) 
	= \e^{-\beta n g^\ssup{R}(\beta,p)} \times \frac{\mu^\ssup{R}_\beta( \e^{ \beta [\mathcal{W}_0+ \mathcal{W}_n] })}{\mu^\ssup{R}_\beta(\e^{ \beta \mathcal{W}_0 })}, \qquad 
	\e^{-\beta g_\mathrm{surf}^\ssup{R}(\beta,p_\beta)}  
	=\e^{\beta g^\ssup{R}(\beta,p)} \mu^\ssup{R}_\beta(\beta \mathcal{W}_0)  
\ees
and
\be\label{eq:Q-mu-R}
    \Q_{n+1}^\ssup{R,\beta,p}(f) = \frac{\mu^\ssup{R}_\beta( f \e^{ \beta [\mathcal{W}_0+ \mathcal{W}_n] })}{\mu^\ssup{R}_\beta(\e^{ \beta [\mathcal{W}_0+ \mathcal{W}_n]} )} 
\ee
for $f \in C([0,R]^\ell)$ whenever $n \ge m-1$ for the unrestricted quantities $Q_{n+1}^\ssup{\beta,p}, \Q_{n+1}(\beta,p)$ which directly transfer to $Q_{n+1}^\ssup{R}(\beta,p), \Q_{n+1}^\ssup{R,\beta,p}$. 

\begin{proof} [Proof of Theorem~\ref{thm:codec}] Because of our restriction to spacings in $[0,R]$, the results from~\cite{jkst19} extend to vanishing pressure $p_\beta \to 0$. This holds true in particular for~\cite[Theorem 2.11]{jkst19}, which together with Proposition~4.9 and its proof in \cite{jkst19} shows the existence of some constants $c,\gamma >0$ such that
	$$ 
	|f_\beta(k)| \leq \e^{c\beta} \e^{- \gamma k},\qquad k\in\N.
	$$
To see this, we note that \cite[Theorem 2.11]{jkst19} gives 
\bes
|\mu^\ssup{R}_\beta( \e^{ \beta [\mathcal{W}_0+ \mathcal{W}_n] }) - \mu^\ssup{R}_\beta( \e^{ \beta \mathcal{W}_0}) \mu^\ssup{R}_\beta( \e^{ \beta \mathcal{W}_k})| 
\le \e^{- \gamma k} || \e^{ \beta \mathcal{W}_0} ||_{\infty}^2  \leq \e^{c\beta} \e^{- \gamma k}.
\ees
The claim then follows with a possibly larger $c$ from 
\bes
|f_\beta(k+1)| 
= \Bigl| \frac{Q_{k+1}^\ssup{R}(\beta,p_\beta)}{\exp( - \beta [(k+1) g^\ssup{R}(\beta,p_\beta) + g_\mathrm{surf}(\beta,p_\beta)])} -1\Bigr| 
= \Bigl| \frac{\mu^\ssup{R}_\beta(\e^{( \beta [\mathcal{W}_0+ \mathcal{W}_n]})}{[\mu^\ssup{R}_\beta(\e^{\beta \mathcal{W}_0})]^2} -1 \Bigr|
\ees
and the shift-invariance of $\mu^\ssup{R}_\beta$. By Lemma~\ref{lem:qsurfbound}, we have 
	$$ 
	\sup_{k\in\N}|f_\beta(k)| =\sup_{k\in\N} \Bigl| \frac{Q_k^\ssup{R}(\beta,p_\beta)}{\exp( - \beta [k g^\ssup{R}(\beta,p_\beta) + g_\mathrm{surf}(\beta,p_\beta)])} -1\Bigr| \leq 1 + \e^{\beta [e_0 +e _\mathrm{surf}+ o(1)]} .
	$$
It follows that, for all $n\in \N$, by splitting the sum after the $n$-th summand, 
	$$ 
		\sum_{k=1}^\infty |f_\beta(k)| \leq  n (1+ \e^{\beta [e_0+e_\mathrm{surf} + o(1)]}) + 	\frac{\e^{c\beta - \gamma n}}{1- \exp(- \gamma)}. 
	$$
Choosing $n = C\beta$ for some sufficiently large constant $C>0$, \eqref{fksumupperbound} follows. 
\end{proof}

\subsection{Number of cracks and empirical distributions}

We again use the letter $M_N$ for the random variable that counts the number of clusters (number of intervals between cracks) in a finite chain, i.e., $M_N\colon\R_+^{N-1}\to \N_0$, defined by $M_N(z_1,\ldots,z_{N-1}) = \#\{j\colon z_j \ge R\} + 1$ as in Eq.~\eqref{clusternumber}. The corresponding empircal measures  $\nu_N$ and $\widehat \nu_N$ are defined in Eq.~\eqref{empdist-card-length}. Let $G_\beta$ be a geometric variable with law $\P(G_\beta = k) = q_\beta (1+q_\beta)^{-k}$, $k\in \N$. 

\begin{theorem} \label{thm:ge}
Suppose Assumptions~\ref{ass:v1}--\ref{ass:m} hold true. 
Let $p_\beta$ be as in Eq.~\eqref{eq:pchoice} and $q_\beta,\lambda_\beta,\eps_\beta$ as in~\eqref{eq:qdef}, \eqref{eq:ladef} and Corollary~\ref{cor:esmall}. Set $\delta_\beta:=\max(\sqrt{\lambda_\beta}, \eps_\beta)$. 
Then for suitable $c,C,\beta_0\geq 0$ and all $\beta\geq \beta_0$,  
	\begin{eqnarray}
		\limsup_{N\to \infty} \frac{1}{N}\log \Q_N^\ssup{\beta, p_\beta}\Bigl( \Bigl|\frac{M_N}{N} - q_\beta \Bigr|\geq  Cq_\beta \max\{q_\beta,\delta_\beta\} \Bigr)
			& \leq& - c q_\beta \delta_\beta^2, \label{eq:thm-ge-M}\\
		\limsup_{N\to \infty} \frac{1}{N}\log \Q_N^\ssup{\beta, p_\beta}\Bigl( || \nu_N - \mathrm{Geom}(\tfrac{q_\beta}{1+q_\beta})||_\mathrm{TV} \geq  C \delta_\beta \Bigr) &\leq&  - c q_\beta  \delta_\beta^2,\label{eq:thm-ge-nu}\\
		\limsup_{N\to \infty} \frac{1}{N}\log \Q_N^\ssup{\beta, p_\beta}\Bigl( ||\widehat \nu_N  - \mathrm{Exp}(\beta p_\beta) ||_\mathrm{TV} \geq  C \delta_\beta \Bigr)
		&  \leq& - c q_\beta \delta_\beta^2. \label{eq:thm-ge-nuhat}
	\end{eqnarray}
\end{theorem}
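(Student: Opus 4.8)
The plan is to prove all three bounds by transferring the problem to the effective lattice gas of Section~\ref{sec:aux} with parameters $q=q_\beta$ and $f=f_\beta$ from \eqref{eq:veff}--\eqref{eq:qdef}, and then invoking the large-deviation estimates of Proposition~\ref{prop:tilted}. First, by Lemma~\ref{lem:qbetasy}, Corollary~\ref{cor:esmall} and Lemma~\ref{lem:gapintegral} one has $q_\beta=\e^{-\beta e_\mathrm{surf}/2+o(\beta)}$, $\eps_\beta\to 0$ and $\lambda_\beta=O(\e^{-c\beta})$, so for $\beta$ large all three lie below the threshold $\delta$ of Proposition~\ref{prop:tilted}; moreover $\delta_\beta^2=\max(\lambda_\beta,\eps_\beta^2)$ is exactly the quantity $\max(\eps^2,\lambda)$ occurring there for $\lambda=\lambda_\beta$, while $\max(q_\beta,\eps_\beta,\sqrt{\lambda_\beta})=\max(q_\beta,\delta_\beta)$ and $\mathrm{Geom}(q_\beta/(1+q_\beta))=\mathcal L(G_\beta)$. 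Throughout I use that on a cell $B_N(i_1,\dots,i_n)$ the empirical measures $\nu_N$ and $\widehat\nu_N$ are determined by $(i_1,\dots,i_n)$ (for $\widehat\nu_N$ after integrating out the crack lengths) and coincide with the lattice-gas objects $\nu_N^\mathrm{lg}$, $\widehat\nu_N^\mathrm{lg}$.

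For \eqref{eq:thm-ge-M} and \eqref{eq:thm-ge-nu} I would start from \eqref{eq:rec6}, respectively \eqref{eq:rec5} summed over admissible $(i_1,\dots,i_n)$, to obtain, for any event $\mathcal E$ which is a union of cells selected by a criterion on $(M_N,\nu_N)$,
\[
	\Q_N^\ssup{\beta,p_\beta}(\mathcal E)\ \le\ \e^{-\lambda_\beta}\,\E\bigl[\e^{\lambda_\beta M_N^\mathrm{lg}}\,\1_{\mathcal E^\mathrm{lg}}\bigr],
\]
with $\mathcal E^\mathrm{lg}$ the corresponding event for $(M_N^\mathrm{lg},\nu_N^\mathrm{lg})$. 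Applying this with $\mathcal E=\{|M_N/N-q_\beta|\ge Cq_\beta\max(q_\beta,\delta_\beta)\}$ and with $\mathcal E=\{\|\nu_N-\mathrm{Geom}(q_\beta/(1+q_\beta))\|_\mathrm{TV}\ge C\delta_\beta\}$, Proposition~\ref{prop:tilted} with $\lambda=\lambda_\beta$ gives $\limsup_N\frac1N\log\E[\e^{\lambda_\beta M_N^\mathrm{lg}}\1_{\mathcal E^\mathrm{lg}}]\le -cq_\beta\delta_\beta^2$; the $N$-independent prefactor $\e^{-\lambda_\beta}$ is harmless, and \eqref{eq:thm-ge-M}, \eqref{eq:thm-ge-nu} follow. (In the compactly supported case $\lambda_\beta=0$ the tilt disappears and one uses Lemmas~\ref{lem:tilted} and~\ref{lem:tilted-empirical} directly, or Proposition~\ref{prop:tilted} with an arbitrarily small positive $\lambda$.)

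For \eqref{eq:thm-ge-nuhat} I would use Lemma~\ref{lem:rep2}. Summing it over $n$, with $D$ the set of probability measures on $\R_+$ outside a $C\delta_\beta$-neighbourhood of $\mathrm{Exp}(\beta p_\beta)$,
\[
	\Q_N^\ssup{\beta,p_\beta}(\widehat\nu_N\in D)\ \le\ \E\Bigl[\exp\Bigl(C\beta\textstyle\sum_{i=1}^{M_N^\mathrm{lg}-1}(R+Y_i)^{-(s-2)}\Bigr)\1_{\{\widehat\nu_N^\mathrm{lg}\in D\}}\Bigr].
\]
The idea is to absorb this correction by an exponential tilt: tilting each i.i.d.\ $\mathrm{Exp}(\beta p_\beta)$ variable $Y_i$ by the weight $\e^{C\beta(R+y)^{-(s-2)}}$ has normalizing constant exactly $\e^{\lambda_\beta}$ with $\lambda_\beta$ as in \eqref{eq:ladef} (substitute $r=R+y$). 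Writing $\widetilde\P$ for the tilted law --- under which the $Y_i$ are i.i.d.\ with some law $\widetilde\mu_\beta$, still independent of $M_N^\mathrm{lg}$, which itself is unchanged --- the right-hand side equals $\e^{-\lambda_\beta}\E_{\widetilde\P}[\e^{\lambda_\beta M_N^\mathrm{lg}}\1_{\{\widehat\nu_N^\mathrm{lg}\in D\}}]$. Since the weight is $\ge 1$, $\|\widetilde\mu_\beta-\mathrm{Exp}(\beta p_\beta)\|_\mathrm{TV}\le \e^{\lambda_\beta}-1=O(\e^{-c\beta})$ by Lemma~\ref{lem:gapintegral}, so for $\beta$ large $D$ is contained in the complement of a $\tfrac12 C\delta_\beta$-neighbourhood of $\widetilde\mu_\beta$. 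Now $(M_N^\mathrm{lg}/N,\widehat\nu_N^\mathrm{lg})$ obeys under $\widetilde\P$ a joint LDP with rate $J(y)+yH(\nu\,|\,\widetilde\mu_\beta)$ (Lemma~\ref{lem:mnldp}, Sanov's theorem, and independence, exactly as in \eqref{eq:ratefcn-cluster}), and the estimate then runs as in the proofs of Lemmas~\ref{lem:tilted} and~\ref{lem:tilted-empirical}: Varadhan's lemma, Pinsker's inequality $H(\nu|\widetilde\mu_\beta)\ge 2\|\nu-\widetilde\mu_\beta\|_\mathrm{TV}^2$, the expansion $\varphi_\beta^{-1}(-2\lambda_\beta)=-2q_\beta\lambda_\beta(1+o(1))$ of \eqref{eq:vfa}, and Lemma~\ref{lem:ilevel}, together with $\lambda_\beta\le\delta_\beta^2$ to absorb the residual tilt. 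This yields $\limsup_N\frac1N\log\Q_N^\ssup{\beta,p_\beta}(\widehat\nu_N\in D)\le -cq_\beta\delta_\beta^2$ for a suitable $c>0$ once $C$ is chosen large enough, which is \eqref{eq:thm-ge-nuhat}.

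The main obstacle is this last estimate. For $M_N$ and $\nu_N$ everything comes essentially for free from Section~\ref{sec:aux} once translated through Lemma~\ref{lem:rep}, but Lemma~\ref{lem:rep2} produces the factor $\exp(C\beta\sum_i(R+Y_i)^{-(s-2)})$ whose per-crack coefficient $C\beta R^{-(s-2)}$ does \emph{not} tend to $0$, so it cannot be balanced directly against the large deviations of $M_N^\mathrm{lg}$. The rescue is that a crack short enough to make $(R+Y)^{-(s-2)}$ of order $R^{-(s-2)}$ has probability only $\approx\beta p_\beta R=\e^{-\beta e_\mathrm{surf}/2+o(\beta)}$, and Assumption~\ref{ass:R} ($C/R^{s-2}<e_\mathrm{surf}/2$) is precisely the hypothesis making this decay beat $\e^{C\beta R^{-(s-2)}}$ --- which is exactly the content of the identity ``tilt normalization $=\e^{\lambda_\beta}$'' combined with the bound on $\lambda_\beta$ in Lemma~\ref{lem:gapintegral}. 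The remainder is routine bookkeeping: keeping the Sanov/Varadhan estimates uniform in $\beta$, with both the sample number $M_N^\mathrm{lg}\sim Nq_\beta\to\infty$ and the deviation scale $C\delta_\beta\to 0$ depending on $\beta$, as in the proof of Proposition~\ref{prop:tilted}.
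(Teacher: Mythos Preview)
Your treatment of \eqref{eq:thm-ge-M} and \eqref{eq:thm-ge-nu} is exactly the paper's: translate via Lemma~\ref{lem:rep} to the lattice gas and apply Proposition~\ref{prop:tilted} with $\lambda=\lambda_\beta$. For \eqref{eq:thm-ge-nuhat} you take a genuinely different route. The paper, after Lemma~\ref{lem:rep2}, applies Cauchy--Schwarz to separate the weight $\exp\bigl(C\beta\sum_i(R+Y_i)^{-(s-2)}\bigr)$ from the indicator $\1_{\{\widehat\nu_N^\mathrm{lg}\in D\}}$, then controls the squared weight term by a tilted moment of $M_N^\mathrm{lg}$ and the probability term via the joint LDP with rate $J(y)+yH(\widehat\nu;\mathrm{Exp}(\beta p_\beta))$. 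You instead recognize the per-crack weight as an exponential tilt of $\mathrm{Exp}(\beta p_\beta)$ with normalizing constant exactly $\e^{\lambda_\beta}$, pass to the tilted law $\widetilde\mu_\beta$, and run Sanov/Pinsker there; the closeness $\|\widetilde\mu_\beta-\mathrm{Exp}(\beta p_\beta)\|_\mathrm{TV}=O(\lambda_\beta)\ll\delta_\beta$ lets you recenter the deviation set. Both arguments work. Your change-of-measure approach is cleaner in that it avoids squaring the weight (the paper's Cauchy--Schwarz step implicitly needs the analogue of $\lambda_\beta$ with $2C$ in place of $C$, which requires a marginally stronger use of Assumption~\ref{ass:R}); the paper's approach has the minor advantage that the LDP is taken with respect to the untilted reference $\mathrm{Exp}(\beta p_\beta)$, so no extra TV comparison is needed.
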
 

It follows in particular that as $N\to\infty$, the probability of the event $M_N/N = q_\beta (1+ O(\delta_\beta ))$ converges to $1$. 

\begin{remark}
	The estimate on $\widehat \nu_N$ actually holds true for every $\delta_\beta \geq \sqrt{\lambda_\beta}$. Moreover, for compactly supported potentials $v$, we have $\lambda_\beta =0$ and for each $j$, the distribution of $z_j - R$ conditional on $z_j\geq R$ is exactly an exponential law with parameter $\beta p_\beta$.
\end{remark} 

\begin{proof} 
	To avoid confusion we write $M_N^{\rm lg}$ and $\nu_N^{\rm lg}$ for the auxiliary lattice gas variables defined in Eqs.~\eqref{eq:mnk} and~\eqref{eq:empdist-T}. The statements for $M_N/N$ and $\nu_N$ are consequences of Lemma~\ref{lem:rep} and Proposition~\ref{prop:tilted}. More precisely, with $A = \{n \in \N \colon |n - q_\beta N| \geq Cq_\beta N \max\{q_\beta,\delta_\beta\}\}$, one has 
	\bes 
	\Q_N^\ssup{\beta, p_\beta} ( M_N \in A ) 
	\leq \sum_{n \in A-1} \e^{\lambda_\beta n} \, \P(M_N^{\rm lg} = n) 
	= \e^{-\lambda_{\beta}} \E\bigl[\e^{\lambda_{\beta} M_N^{\rm lg}} \1_{\{M_N^{\rm lg} \in A\}}\bigr]
	\ees 
	by Lemma~\ref{lem:rep} so that \eqref{eq:thm-ge-M} follows from Proposition~\ref{prop:tilted}. Likewise, if $A$ denotes the set of probability measures $\pi$ on $\N$ for which $|| \pi - \mathrm{Geom}(\tfrac{q_\beta}{1+q_\beta})||_\mathrm{TV} \geq  C \delta_\beta$, then 
	\bes
	\begin{aligned}
	\Q_N^\ssup{\beta, p_\beta}(\nu_N \in A) 
	&= \sum_{n=0}^{N-1} \, \sum_{0 = i_1<\cdots<i_{n+1}= N} \1_{\{\frac{1}{n+1} \sum_{k=1}^{n+1} \delta_{i_k-i_{k-1}} \in A\}} \Q_N^\ssup{\beta, p_\beta}\bigl( B_N(i_1,\ldots,i_n)\bigr) \\ 
	&= \E\bigl[\e^{\lambda_{\beta} M_N^{\rm lg}} \1_{\{\nu_N^{\rm lg} \in A\}}\bigr]
	\end{aligned}
	\ees
	by Lemma~\ref{lem:rep} and \eqref{eq:thm-ge-nu} follows again from Proposition~\ref{prop:tilted}. 
	
	For the empirical distribution  of crack lengths, let  $Y_1,Y_2,\ldots$ be i.i.d. random variables with exponential law $Y_i \sim \mathrm{Exp}(\beta p_\beta)$. The variables are taken independent of $T_1,T_2,\ldots$ and $M_N^\mathrm{lg}$. By Lemma~\ref{lem:rep2} with $\widehat \nu_N^\mathrm{lg}$ as in \eqref{eq:muhatdef}, 
\[
\begin{aligned}
	\Q_N^\ssup{\beta, p_\beta}\bigl(M_N \in B,\, \widehat \nu_N \in D\bigr) 
		& \leq \sum_{n+1 \in B} \E\bigl[ \exp \Bigl({ \beta M_N^\mathrm{lg}} \int_0^\infty C(R+r)^{-(s-2)} \dd \widehat \nu_N^\mathrm{lg}(r) \Bigr) 
		\1_{\{M_N^\mathrm{lg} =n+1,\, \widehat \nu_N^\mathrm{lg}\in D\}}\bigr] \\
		& \leq  \Bigl(\E\bigl[ \e^{ 2 \beta \lambda_\beta M_N^\mathrm{lg}}\1_{\{M_N^\mathrm{lg}\in B\}}\bigr] \P\bigl(M_N^\mathrm{lg}\in B,\, \widehat \nu_N^\mathrm{lg}\in D\bigr)\Bigr)^{1/2} \\
\end{aligned}
\]
for every subset $B$ of $\N$ and every measurable set $D$ of probability measures on $\R_+$. 
Now, similarly as in \eqref{eq:ratefcn-cluster}, $(M_N^\mathrm{lg}/N, \widehat \nu_N^\mathrm{lg})$ satisfies a large deviations principle with speed $N$ and rate function 
\[
	\mathcal{J} (y,\widehat \nu) = J(y) + y H\bigl( \widehat \nu; \mathrm{Exp}(\beta p_\beta) \bigr)
\]
with $J(y)$ defined in Lemma~\ref{lem:mnldp}. The proof is completed as in Proposition~\ref{prop:tilted} (and Lemmas~\ref{lem:tilted} and~\ref{lem:tilted-empirical}). 
\end{proof}

\subsection{Gibbs free energy and stress-strain relation}

Let 
\[
	\ell(\beta,p) =\frac{\partial g}{\partial p}(\beta,p),\qquad
	\ell^\cl(\beta,p) =\frac{\partial g^\cl}{\partial p}(\beta,p),\qquad L_k^\cl(\beta,p) = -\frac{1}{\beta}\frac{\partial}{\partial p}\log Q_k^\cl(\beta,p).
\]
Then $L_k^\cl(\beta,p) = \int_{[0,R]^{k-1}} (z_1 + \ldots + z_{k-1}) \dd\Q^\ssup{R,\beta,p}_k$ is the expected length of a $k$-cluster at inverse temperature $\beta$ and pressure $p$ while $\ell(\beta,p)$ and $\ell^\cl(\beta,p)$ represent the average spacings between consecutive particles in a chain or cluster with infinitely many particles. (In Lemma~\ref{lem:cluster-lengths} below we will see that $\frac{1}{k} L_k^\cl(\beta,p) \to \ell^\cl(\beta,p)$ uniformly in $p$ as $k \to \infty$.) Recall that $q_\beta$ and $p_\beta$ are both of order $\exp(-\beta e_\mathrm{surf}/2+o(\beta))$. 

\begin{theorem} \label{thm:gibbsfe}
	Suppose Assumptions~\ref{ass:v1}--\ref{ass:m} hold true. 
	Then 
	\begin{equation*}
		g(\beta,p_\beta)   = g^\cl(\beta,p_\beta) - \frac{q_\beta}{\beta}  \bigl(1+ o(1)\bigr)= e_0^\ssup{R}(\beta) + a p_\beta - \frac{q_\beta}{\beta} +o(p_\beta) + o(q_\beta/\beta)
	\end{equation*}
	and 
	\begin{align*} 
		\ell(\beta,p_\beta)  
		& = \ell^\cl(\beta,p_\beta) + o(1)   
            + \frac{q_\beta}{\beta p_\beta} (1+o(1))
		  	 = a +o(1) + \frac{\exp( - \beta e_\mathrm{surf}^\ssup{R}(\beta))}{(\beta p_\beta)^2} (1+o(1)). 
	\end{align*} 
\end{theorem}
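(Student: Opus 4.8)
The plan is to extract both the free-energy and the length asymptotics from the renewal-theoretic representation of Section~\ref{sec:aux}, using that $\eps_\beta\to0$ (Corollary~\ref{cor:esmall}) and $\lambda_\beta=O(\e^{-c\beta})$ (Lemma~\ref{lem:gapintegral}) make the effective gas of defects essentially ideal. First I would establish the free energy. From Lemma~\ref{lem:rep}, Eq.~\eqref{eq:rep1}, we have
\[
	-\beta g^\cl(\beta,p_\beta)-\log u_\beta \le -\beta g(\beta,p_\beta)\le -\beta g^\cl(\beta,p_\beta)-\log u_\beta -\varphi_\beta^{-1}(-\lambda_\beta),
\]
so it suffices to control $-\log u_\beta$ and $\varphi_\beta^{-1}(-\lambda_\beta)$. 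By \eqref{prop:fe} and the estimate on $u$ in Lemma~\ref{lem:geom1}, $u_\beta=(1+q_\beta)^{-1}(1+O(q_\beta\eps_\beta))$, hence $-\log u_\beta=\log(1+q_\beta)+O(q_\beta\eps_\beta)=q_\beta+O(q_\beta^2)+O(q_\beta\eps_\beta)=q_\beta(1+o(1))$, using $q_\beta,\eps_\beta\to0$. For the correction term, since $\varphi_\beta$ is increasing with $\varphi_\beta(0)=0$ and $\varphi_\beta'(0)=\mu_\beta\sim 1/q_\beta$, one has $\varphi_\beta^{-1}(-\lambda_\beta)=-q_\beta\lambda_\beta(1+o(1))$ (this is exactly the kind of expansion carried out in the proof of Lemma~\ref{lem:tilted} around Eq.~\eqref{eq:vfa}); since $\lambda_\beta=O(\e^{-c\beta})$ while $q_\beta=\e^{-\beta e_\mathrm{surf}/2+o(\beta)}$, this term is $o(q_\beta)$. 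Combining, $-\beta g(\beta,p_\beta)=-\beta g^\cl(\beta,p_\beta)+q_\beta(1+o(1))$, i.e. $g(\beta,p_\beta)=g^\cl(\beta,p_\beta)-q_\beta/\beta\,(1+o(1))$. The second equality in the display then follows by inserting the expansion $g^\cl(\beta,p_\beta)=e_0^\ssup{R}(\beta)+ap_\beta+o(p_\beta)$, which comes from $g^\cl(\beta,0)=e_0^\ssup{R}(\beta)$, $\partial_p g^\cl(\beta,0)=\ell^\cl(\beta,0)=a+o(1)$ (Lemma~\ref{lem:cluster-lengths}(a) together with Proposition~\ref{prop-oldresults}), and concavity/smoothness of $g^\cl(\beta,\cdot)$; here the fact that $q_\beta$ and $p_\beta$ are of the same exponential order $\e^{-\beta e_\mathrm{surf}/2+o(\beta)}$ is what lets us write the error as $o(p_\beta)+o(q_\beta/\beta)$.

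Next I would obtain the length asymptotics by differentiating in $p$. Since $\ell(\beta,p)=\partial_p g(\beta,p)$ and $\ell^\cl(\beta,p)=\partial_p g^\cl(\beta,p)$, the natural route is to differentiate the identity $g(\beta,p)=g^\cl(\beta,p)-\tfrac1\beta(-\log u(\beta,p))+(\text{correction})$ with respect to $p$ at $p=p_\beta$. Differentiating the renewal equation \eqref{eq:renewalsolv} (with $q=q_\beta(p)$, $f=f_\beta(\cdot;p)$ both depending on $p$ through $p_\beta$) gives $\partial_p(-\log u)$ in terms of $\partial_p q_\beta$ and $\partial_p$ of the $f_\beta(k)$'s; to leading order, because the gas is nearly ideal, $-\log u\approx q_\beta$ and from \eqref{eq:qdef} $q_\beta=\e^{-\beta[g_\mathrm{surf}^\cl(\beta,p)+pR]}/(\beta p)$, so $\partial_p(-\log u)\approx \partial_p q_\beta\approx -q_\beta(\beta\,\partial_p g_\mathrm{surf}^\cl+\beta R+1/p)$. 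The dominant term here is $-q_\beta/p_\beta$ (the other pieces are down by a factor $p_\beta$ or are $o(1)$ times $q_\beta$), yielding $\partial_p\bigl(-\tfrac1\beta(-\log u)\bigr)\big|_{p_\beta}=q_\beta/(\beta p_\beta)\,(1+o(1))$. One must also check the correction term $\tfrac1\beta\varphi_\beta^{-1}(-\lambda_\beta)$ contributes only $o(1)$ after differentiation; this again uses $\lambda_\beta=O(\e^{-c\beta})$ together with the crude bounds on $\partial_p\lambda_\beta$ and $\partial_p\varphi_\beta$, or—more robustly—a convexity argument showing $\ell(\beta,\cdot)-\ell^\cl(\beta,\cdot)-q_\beta/(\beta p_\beta)$ cannot deviate by more than the error bound on the integrated quantity. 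Finally, $\ell^\cl(\beta,p_\beta)=a+o(1)$ (Lemma~\ref{lem:cluster-lengths}(a)) and $q_\beta/(\beta p_\beta)=\e^{-\beta e_\mathrm{surf}^\ssup{R}(\beta)}/(\beta p_\beta)^2\,(1+o(1))$ by \eqref{eq:qdef} give the two displayed forms of $\ell(\beta,p_\beta)$.

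The main obstacle I anticipate is the rigorous justification of differentiating-in-$p$ rather than just using the pointwise-in-$p$ asymptotics: the error terms $o(1)$, $o(p_\beta)$ in the free-energy expansion are not automatically differentiable, so one cannot naively differentiate them. The clean way around this is to run the whole argument of Lemma~\ref{lem:rep} at a generic pressure $p$ in a shrinking neighborhood of $p_\beta$, obtaining the free-energy two-sided bound \eqref{eq:rep1} uniformly in $p$ on that neighborhood, then invoke convexity (both $g(\beta,\cdot)$ and $g^\cl(\beta,\cdot)$ are concave and continuously differentiable, by \eqref{eq:pflgrel} and the analogous statement for the truncated model) to upgrade uniform closeness of the functions to closeness of their derivatives — a standard lemma: if concave $h_\beta\to h$ uniformly on an interval with $h$ differentiable, then $h_\beta'\to h'$ on the interior. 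Applying this to $h_\beta(p)=\beta\bigl(g(\beta,p)-g^\cl(\beta,p)\bigr)+(-\log u(\beta,p))$ (which the bounds show is $o(q_\beta)$, sandwiched between $0$ and $-\varphi_\beta^{-1}(-\lambda_\beta)$) and tracking the $q_\beta$-scaling carefully gives $\partial_p\bigl[\beta(\ell-\ell^\cl)\bigr]$ up to the stated relative error. The remaining steps — the renewal-equation differentiation and the asymptotics of $\ell^\cl(\beta,p_\beta)$, $L_k^\cl$, $q_\beta$ — are routine given the machinery already assembled in Sections~\ref{sec:aux}--\ref{sec:pressure}.
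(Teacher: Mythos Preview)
Your treatment of the free-energy identity is essentially the paper's argument: sandwich $-\beta g$ via Eq.~\eqref{eq:rep1}, expand $-\log u_\beta=q_\beta(1+o(1))$ through Lemma~\ref{lem:geom1}, and bound $\varphi_\beta^{-1}(-\lambda_\beta)=O(q_\beta\lambda_\beta)=o(q_\beta)$ as in the proof of Lemma~\ref{lem:tilted}. The second equality then follows from $g^\cl(\beta,p_\beta)-e_0^\ssup{R}(\beta)=\int_0^{p_\beta}\ell^\cl(\beta,p)\,\mathrm d p=(a+o(1))p_\beta$ by Lemma~\ref{lem:cluster-lengths}(a). So far, fine.

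The length asymptotics is where your route diverges from the paper, and where there is a genuine gap. The paper does \emph{not} differentiate the free-energy expansion. Instead it represents the system length as $\Lambda_n=\sum_{i\le n}X_i+\sum_{i<n}(R+Y_i)$ with $X_i$ the cluster lengths (conditionally on $T_i=k$, distributed as the length of a $k$-cluster under $\Q_k^\ssup{R,\beta,p_\beta}$) and $Y_i\sim\mathrm{Exp}(\beta p_\beta)$ the crack lengths; then it proves concentration estimates $\P(|\tfrac1n\sum Y_i-\tfrac1{\beta p_\beta}|\ge c\sqrt{\lambda_\beta}/(\beta p_\beta))\le 2\e^{-n\lambda_\beta}$ and the analogue for $X_i$ (Lemmas~\ref{lem:craleldp}--\ref{lem:cluleldp}), computes $\E[X_1]=(1+o(1))a/q_\beta$ via the size-biased comparison (Lemma~\ref{lem:clulemean}), combines these with $M_N/N\to q_\beta$ (Theorem~\ref{thm:ge}), and finally matches against the ergodic-theorem limit $\tfrac1N\sum z_j\to\ell(\beta,p_\beta)$. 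No derivative of an $o(1)$ is ever taken.

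Your proposed workaround---upgrading the pointwise bound to a derivative bound by applying the ``concave functions converging uniformly have converging derivatives'' lemma to
\[
	h_\beta(p)=\beta\bigl(g(\beta,p)-g^\cl(\beta,p)\bigr)-\log u(\beta,p)
\]
---does not go through as written: $h_\beta$ is a sum of a concave function $\beta g$, a convex function $-\beta g^\cl$, and a third term $-\log u(\beta,p)$ of no evident sign-definite curvature, so there is no reason for $h_\beta$ to be concave, and the standard convexity lemma is inapplicable. One could try instead to use concavity of $g$ alone and bound $\ell(\beta,p_\beta)$ via secant slopes $[g(p_\beta\pm\delta)-g(p_\beta)]/\delta$, but then you must check that the error $O(q_\beta\eps_\beta+q_\beta\lambda_\beta)/\beta$ in $g-g^\cl+q_\beta/\beta$ is uniform on a $p$-window of width $\delta$ satisfying $p_\beta\max(\eps_\beta,\lambda_\beta,q_\beta)\ll\delta\ll p_\beta$, \emph{and} that $\ell^\cl(\beta,\cdot)$ and $q_\beta(\cdot)/(\beta\,\cdot)$ vary by $o(1)$, respectively $o(q_\beta/(\beta p_\beta))$, across that window. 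This is doable but requires you to revisit the $p$-dependence of $\eps_\beta$ (i.e., of $\sum_k|f_\beta(k;p)|$), which in turn needs uniform-in-$p$ versions of Theorem~\ref{thm:codec} and Lemma~\ref{lem:qsurfbound}; you have not addressed this, and it is not obviously for free. The paper's probabilistic route sidesteps all of it by never differentiating.
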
 

\noindent The proof requires several lemmas. 

\begin{lemma} \label{lem:cluster-lengths} 
	Suppose Assumptions~\ref{ass:v1}(i)--(v) and~\ref{ass:m} hold true. 
	Assume that  $p_\beta \to 0$ as $\beta\to \infty$.  Then
	\begin{enumerate} 
		\item [(a)] $\lim_{\beta\to \infty} \sup_{p\in [0,p_\beta]} |\ell^\ssup{R}(\beta,p_\beta) - a| = 0$. 
		\item [(b)] For some $\beta_0,c>0$ and all $\beta\geq \beta_0$, 
	\be\label{eq:Lk-kl-est}
	 \sup_{p\in [0,p_\beta]} \sup_{k\in \N} |L_k^\ssup{R}(\beta,p) - k \ell^\ssup{R}(\beta,p)| \leq c \beta.
	\ee	
	\end{enumerate} 
\end{lemma}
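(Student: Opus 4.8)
\textbf{Proof strategy for Lemma~\ref{lem:cluster-lengths}.}

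The plan is to exploit the representation \eqref{eq:Q-mu-R} of the restricted constant-pressure ensemble in terms of the shift-invariant Gibbs measure $\mu^\ssup{R}_\beta$, combined with the uniform decay-of-correlations estimates from \cite[Theorem 2.11]{jkst19} that are available because we restrict spacings to $[0,R]$. For part (a), recall that $\ell^\ssup{R}(\beta,p)=\frac{\partial g^\ssup{R}}{\partial p}(\beta,p)$ and that $g^\ssup{R}(\beta,p_\beta)\to e_0 = W(a)$ by Lemma~\ref{lem:qbetasy}. Since $p\mapsto g^\ssup{R}(\beta,p)$ is concave (hence $\ell^\ssup{R}(\beta,\cdot)$ is monotone in $p$), it suffices to control $\ell^\ssup{R}(\beta,0)$ and $\ell^\ssup{R}(\beta,p_\beta)$ and squeeze. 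The key point is that as $\beta\to\infty$ the measure $\mu^\ssup{R}_\beta$ concentrates on the ground state configuration with all spacings equal to $a$ (this is exactly the content of the low-temperature analysis in \cite{jkst19}), so $\ell^\ssup{R}(\beta,p)=\int z_1\dd\mu^\ssup{R}_\beta\to a$; the restriction to $[0,R]$ makes the necessary exponential tightness automatic, as noted after Proposition~\ref{prop-oldresults}. One should check that this convergence is uniform over $p\in[0,p_\beta]$, which follows since $p_\beta\to 0$ and the dependence of $g^\ssup{R}$ on $p$ is Lipschitz uniformly in $\beta$ on bounded pressure ranges (the average spacing is bounded by $R$).

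For part (b), write $L_k^\ssup{R}(\beta,p) = \int_{[0,R]^{k-1}}(z_1+\cdots+z_{k-1})\dd\Q^\ssup{R,\beta,p}_k$ and, using \eqref{eq:Q-mu-R}, express each summand $\int z_j\dd\Q^\ssup{R,\beta,p}_k$ as $\mu^\ssup{R}_\beta\bigl(z_j\,\e^{\beta[\mathcal W_0+\mathcal W_{k-1}]}\bigr)/\mu^\ssup{R}_\beta\bigl(\e^{\beta[\mathcal W_0+\mathcal W_{k-1}]}\bigr)$. The quantity $\ell^\ssup{R}(\beta,p)$ is the analogous expectation $\mu^\ssup{R}_\beta(z_j)$ under the infinite-volume measure (shift-invariance makes it independent of $j$). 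The difference $\int z_j\dd\Q^\ssup{R,\beta,p}_k - \ell^\ssup{R}(\beta,p)$ is then a covariance-type expression: the conditioning by $\e^{\beta[\mathcal W_0+\mathcal W_{k-1}]}$ differs from the bulk measure only through the two boundary interaction terms $\mathcal W_0$ (near site $0$) and $\mathcal W_{k-1}$ (near site $k-1$). By the exponential decay of correlations \cite[Theorem 2.11]{jkst19}, $|\int z_j\dd\Q^\ssup{R,\beta,p}_k - \ell^\ssup{R}(\beta,p)| \le \e^{c_1\beta}(\e^{-\gamma\min(j,k-j)})$ for constants $c_1,\gamma>0$ uniform in $p\in[0,p_\beta]$ (the factor $\e^{c_1\beta}$ absorbs $\|\e^{\beta\mathcal W_0}\|_\infty$, exactly as in the proof of Theorem~\ref{thm:codec}). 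Summing the geometric tails over $j=1,\ldots,k-1$ gives $\sum_j \e^{c_1\beta}\e^{-\gamma\min(j,k-j)} \le C\e^{c_1\beta}$, a bound \emph{independent of $k$}; replacing $c_1\beta$ by a genuine multiple of $\beta$ (and $C$ by $1$ for large $\beta$) yields \eqref{eq:Lk-kl-est} with the constant $c$ absorbing $c_1$ and $\log C$. Uniformity in $k$ is automatic because the geometric sum converges to a $k$-independent limit; uniformity in $p\le p_\beta$ comes from the fact that the correlation estimate in \cite{jkst19} is uniform over compact pressure ranges and $\|\e^{\beta\mathcal W_0}\|_\infty$ does not depend on $p$.

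\textbf{Main obstacle.} The principal difficulty is verifying that the decay-of-correlations estimate of \cite[Theorem 2.11]{jkst19}, which is stated for fixed positive pressure, genuinely extends with constants uniform in $p\in[0,p_\beta]$ as $\beta\to\infty$ — in particular that the rate $\gamma$ can be taken $p$-independent and that the prefactor grows no worse than $\e^{c\beta}$. This is the same technical extension (from $p>0$ to $p_\beta\to 0$ using compactness of $[0,R]$) invoked for Proposition~\ref{prop-oldresults} and in the proof of Theorem~\ref{thm:codec}, so once that is granted the covariance bookkeeping is routine; the only mild subtlety is keeping track that the two boundary perturbations $\mathcal W_0$ and $\mathcal W_{k-1}$ are handled symmetrically, which is why the $\min(j,k-j)$ appears and why the sum stays bounded uniformly in $k$.
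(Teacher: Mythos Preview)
Your overall strategy for (b) --- writing $L_k^\ssup{R}$ via \eqref{eq:Q-mu-R} and invoking the correlation decay of \cite[Theorem~2.11]{jkst19}, together with the identification $\ell^\ssup{R}(\beta,p)=\mu_\beta^\ssup{R}(z_0)$ for part (a) --- is exactly the paper's approach. However, there is a genuine gap in your final step for (b): summing $\e^{c_1\beta}\e^{-\gamma\min(j,k-j)}$ over $j$ gives a bound of order $\e^{c_1\beta}$, which is \emph{exponential} in $\beta$, whereas \eqref{eq:Lk-kl-est} asserts a bound $c\beta$ \emph{linear} in $\beta$. Your sentence ``replacing $c_1\beta$ by a genuine multiple of $\beta$ \ldots\ yields \eqref{eq:Lk-kl-est}'' is simply false: $C\e^{c_1\beta}$ is not dominated by any constant multiple of $\beta$. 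The distinction matters downstream: in Lemma~\ref{lem:qsurfcomp} the bound is integrated in $p$ over $[0,p_\beta]$ and then multiplied by $\beta$, so one needs $\beta\cdot c\beta\, p_\beta\to 0$, which holds because $\beta p_\beta$ is exponentially small; an exponential bound $\e^{c_1\beta}$ would instead give $\beta\,\e^{c_1\beta}p_\beta$, which has no reason to vanish since $c_1$ is an uncontrolled constant coming from $\|\e^{\beta\mathcal W_0}\|_\infty$.

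The missing idea is a cutoff. Your per-term bound $\e^{c_1\beta}\e^{-\gamma\min(j,k-j)}$ is useless for $j$ close to either endpoint --- there it vastly exceeds the trivial bound $|z_j-\mu_\beta^\ssup{R}(z_0)|\le R$, which is available because all spacings lie in $[0,R]$. The paper therefore splits the sum at a parameter $\ell$: the (at most) $2\ell$ boundary terms are bounded trivially by $O(R\ell)$, while the interior terms $\ell<j<k-\ell$ are controlled by the correlation estimate and sum to $O\bigl(\e^{c_1\beta}\e^{-\gamma\ell}\bigr)$ after also handling the denominator $\mu_\beta^\ssup{R}(\e^{\beta[\mathcal W_0+\mathcal W_k]})\ge (\mu_\beta^\ssup{R}(\e^{\beta\mathcal W_0}))^2-\e^{c\beta-\gamma k}$. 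Choosing $\ell\approx c_1\beta/\gamma$ makes the interior contribution $O(1)$ and the boundary contribution $O(\beta)$, which is precisely \eqref{eq:Lk-kl-est}. This is the same balancing trick as in the proof of Theorem~\ref{thm:codec}, where an a priori exponential bound combined with geometric decay is converted into a polynomial (there, linear in $\beta$ inside a logarithm) estimate by splitting at $n\sim\beta$.
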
 

\begin{proof} 
	We first prove (b). Choose $\widetilde p_\beta \in [0,p_\beta]$ and let  $\mu_\beta^\ssup{R}$ be the restricted Gibbs measure on $[0,R]^\Z$ at pressure $\widetilde p_\beta$. By \eqref{eq:Q-mu-R} we have 
	\be\label{eq:LkR}
	L_{k+1}^\ssup{R} = \frac{\sum_{j=1}^{k} \mu_\beta^\ssup{R} (\e^{\beta \mathcal{W}_0}  z_j \e^{\beta \mathcal{W}_k})}{\mu_\beta^\ssup{R}(\e^{\beta[ \mathcal{W}_0+ \mathcal{W}_k]})}. 
	\ee
	with $\mathcal{W}_n((z_j)_{j\in\Z}) = v(z_n + z_{n+1})$.  Proceeding as in Theorem~\ref{thm:codec}, we get that for some $c>0$ and all sufficiently large $\beta$, 
	$$
	\Bigl|\mu_\beta^\ssup{R} \bigl(\e^{\beta \mathcal{W}_0} z_j \e^{\beta \mathcal{W}_k}\bigr) - \mu_\beta^\ssup{R} \bigl(\e^{\beta \mathcal{W}_0}\bigr) \mu_\beta^\ssup{R}\bigl( z_j \bigr) \mu_\beta^\ssup{R}\bigl( \e^{\beta \mathcal{W}_k}\bigr)\Bigr|
			\leq \e^{c\beta} \e^{-\gamma \min(j, k-j)}
	$$
	and  
	$$ \Bigl|\mu_\beta^\ssup{R} \bigl(\e^{\beta \mathcal{W}_0}  \e^{\beta \mathcal{W}_k}\bigr) - \mu_\beta^\ssup{R} \bigl(\e^{\beta \mathcal{W}_0}\bigr) \mu_\beta^\ssup{R}\bigl( \e^{\beta \mathcal{W}_k}\bigr) \Bigr|
			\leq \e^{c\beta} \e^{-\gamma k}. 
	$$
	Since $\mu_\beta^\ssup{R}$ is supported on $[0,R]^\Z$ and shift-invariant, for any $\ell$ we can estimate 
	\bes
	\begin{aligned}
	|L_{k+1}^\ssup{R}(\beta,p) - k \mu^\ssup{R}_\beta(z_0)| 
	&= \frac{\sum_{j=1}^k \mu_\beta^\ssup{R} \bigl(\e^{\beta \mathcal{W}_0} [ z_j - \mu^\ssup{R}_\beta(z_0) ] \e^{\beta \mathcal{W}_k}\bigr)}{\mu_\beta^\ssup{R}\bigl(\e^{\beta[ \mathcal{W}_0+ \mathcal{W}_k]}\bigr)} \\ 
	&\leq 4 R \ell + \sum_{j=\ell+1}^{k-\ell} \frac{\e^{c\beta} \e^{-\gamma \min(j, k-j)} +\mu^\ssup{R}_\beta(z_0)\e^{c\beta} \e^{-\gamma k}}{\bigl(\mu_\beta^\ssup{R}(\e^{\beta \mathcal{W}_0})\bigr)^2} \\ 
	&\leq 4 R \ell + \e^{(c + 2 ||\mathcal{W}_0||_\infty)\beta} (2 + R)\frac{\e^{-\gamma\ell}}{1 - \e^{-\gamma}}.\end{aligned}
	\ees
	With $\ell = \lceil (c + 2 ||\mathcal{W}_0||_\infty)\beta/\gamma \rceil$ we obtain $|L_{k+1}^\ssup{R}(\beta,p) - k \mu^\ssup{R}_\beta(z_0)| \leq C \beta$. The estimates are uniform in $p=\tilde p_\beta \in [0,p_\beta]$ because the constant $\gamma$ is and because $||\mathcal{W}_0||_\infty< \infty$. In particular we have $\frac{1}{k} L_{k}^\ssup{R}(\beta,p) \to \mu^\ssup{R}_\beta(z_0)$ uniformly in $p$ and, in combination with \eqref{e0Rdef}, $\mu^\ssup{R}_\beta(z_0) = \ell^\ssup{R}(\beta,p)$. Thus also \eqref{eq:Lk-kl-est} follows. 
	
	Part (a) is now a consequence of~\cite[Corollary 2.6]{jkst19} since $\ell^\ssup{R}(\beta,p) = \mu^\ssup{R}_\beta(z_0)$.  Because of the restriction to spacings $z_j\in [0,R]$, the corollary applies to $p=p_\beta\to 0$ as well. 

\end{proof} 

\begin{lemma} \label{lem:qsurfcomp}
	Under Assumption~\ref{ass:v1} and~\ref{ass:m}, we have  as $\beta\to\infty$, 
	$$ 
	g_\mathrm{surf}^\ssup{R}(\beta,p_\beta) = e_\mathrm{surf}^\ssup{R}(\beta) + O(\beta p_\beta)\qquad\mbox{and}\qquad q_\beta = (1+o(1))\frac{\exp(- \beta e_\mathrm{surf}^\ssup{R}(\beta))}{\beta p_\beta}.
	$$ 
\end{lemma}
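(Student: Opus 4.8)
The plan is to first establish the surface-energy estimate $g_\mathrm{surf}^\ssup{R}(\beta,p_\beta) = e_\mathrm{surf}^\ssup{R}(\beta) + O(\beta p_\beta)$, and then to read off the asymptotics of $q_\beta$ directly from its definition~\eqref{eq:qdef}.

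For the first estimate I would exploit that, by~\eqref{e0Rdef} together with the identity $e_\mathrm{surf}^\ssup{R}(\beta) = g_\mathrm{surf}^\ssup{R}(\beta,0)$ from Proposition~\ref{prop-oldresults}, for each $p\in\{0,p_\beta\}$ the finite-volume quantity $\Phi_N(p):=-\tfrac1\beta\log Q_N^\ssup{R}(\beta,p) - N g^\ssup{R}(\beta,p)$ converges to $g_\mathrm{surf}^\ssup{R}(\beta,p)$ as $N\to\infty$. Since $p\mapsto Q_N^\ssup{R}(\beta,p)$ is smooth with $\partial_p\bigl(-\tfrac1\beta\log Q_N^\ssup{R}(\beta,p)\bigr)=L_N^\ssup{R}(\beta,p)$ and $\partial_p g^\ssup{R}(\beta,p)=\ell^\ssup{R}(\beta,p)$, the fundamental theorem of calculus gives, for every $N$,
\[
	\Phi_N(p_\beta) - \Phi_N(0) = \int_0^{p_\beta}\bigl(L_N^\ssup{R}(\beta,p) - N\ell^\ssup{R}(\beta,p)\bigr)\dd p .
\]
Here Lemma~\ref{lem:cluster-lengths}(b) enters: it bounds the integrand by $c\beta$ uniformly in $N\in\N$ and $p\in[0,p_\beta]$, so the right-hand side is $O(\beta p_\beta)$ uniformly in $N$. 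Letting $N\to\infty$ and using the convergence of $\Phi_N$ then yields $g_\mathrm{surf}^\ssup{R}(\beta,p_\beta) - e_\mathrm{surf}^\ssup{R}(\beta) = O(\beta p_\beta)$.

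For the second statement I would insert this bound, together with $p_\beta R = O(\beta p_\beta)$, into~\eqref{eq:qdef}, obtaining
\[
	q_\beta = \frac{\exp\bigl(-\beta[g_\mathrm{surf}^\ssup{R}(\beta,p_\beta)+p_\beta R]\bigr)}{\beta p_\beta} = \frac{\exp\bigl(-\beta e_\mathrm{surf}^\ssup{R}(\beta)\bigr)}{\beta p_\beta}\,\exp\bigl(-\beta\cdot O(\beta p_\beta)\bigr),
\]
and then invoke the choice of pressure~\eqref{eq:pchoice}, i.e.\ $\beta p_\beta = \e^{-\beta e_\mathrm{surf}/2+o(\beta)}$, which gives $\beta^2 p_\beta\to 0$ (here $e_\mathrm{surf}>0$ is crucial); hence the correction exponential is $1+o(1)$, which is the claim.

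The only substantive ingredient is Lemma~\ref{lem:cluster-lengths}(b), which itself rests on the uniform decay-of-correlations bound of~\cite[Theorem 2.11]{jkst19}; the rest is bookkeeping. The one point requiring a little care is the interchange of $N\to\infty$ with the integral estimate, but this is harmless: the differences $\Phi_N(p_\beta)-\Phi_N(0)$ are bounded by $c\beta p_\beta$ uniformly in $N$ and converge, so the limit inherits the bound. (Alternatively, convexity of $p\mapsto\Phi_N(p)$ forces $L_N^\ssup{R}(\beta,p) - N\ell^\ssup{R}(\beta,p)\to\partial_p g_\mathrm{surf}^\ssup{R}(\beta,p)$ pointwise, and dominated convergence applies directly on the bounded interval $[0,p_\beta]$.)
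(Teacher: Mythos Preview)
Your argument is correct and essentially identical to the paper's: both write $-\tfrac1\beta\log Q_N^\ssup{R}(\beta,p)-Ng^\ssup{R}(\beta,p)$ at $p=p_\beta$ and $p=0$, express the difference via the fundamental theorem of calculus as $\int_0^{p_\beta}(L_N^\ssup{R}-N\ell^\ssup{R})\dd p$, bound the integrand uniformly in $N$ by $c\beta$ via Lemma~\ref{lem:cluster-lengths}(b), and let $N\to\infty$; the second claim then follows from~\eqref{eq:qdef} and $\beta^2 p_\beta\to 0$. One small caveat: your parenthetical alternative invoking ``convexity of $p\mapsto\Phi_N(p)$'' is not right---$\Phi_N$ is a difference of two concave functions (both $-\tfrac1\beta\log Q_N^\ssup{R}$ and $g^\ssup{R}$ are concave in $p$) and hence has no convexity in general---but this aside is not needed, since your primary argument (uniform bound plus convergence of each term) already suffices.
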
 

\begin{proof} 
	From the definition of $L_k^\ssup{R}(\beta,p)$ and $\ell^\ssup{R}(\beta,p)$ and Lemma~\ref{lem:cluster-lengths}(b), we get 
	\begin{multline*}
		\Bigl|\Bigl( - \frac{1}{\beta}\log Q_k^\ssup{R}(\beta,p_\beta) - k g^\ssup{R}(\beta,p)\Bigr) - \Bigl( - \frac{1}{\beta}\log Q_k^\ssup{R}(\beta,0) - k e_0^\ssup{R}(\beta)\Bigr)\Bigr| \\
	=\Bigl| \int_0^{p_\beta} \bigl( L_k^\ssup{R}(\beta,p)- k \ell^\ssup{R}(\beta,p)\bigr) \dd p \Bigr|\leq c \beta p_\beta
	\end{multline*} 
	for some $k$-independent $c$ and all sufficiently large $\beta$. Letting $k\to \infty$ we find $|g_\mathrm{surf}^\ssup{R}(\beta,p_\beta) -e_\mathrm{surf}(\beta)|\leq c \beta p_\beta$. This proves the first part of the lemma. The expression for $q_\beta$ follows from the definition~\eqref{eq:qdef} of $q_\beta$ and the fact that $\beta p_\beta = o(\beta^{-1})$ by our choice of $p_\beta$. 
\end{proof} 

\noindent In order to analyze the system length, we condition on the number $M_N$ of clusters and express the system length as a sum of (conditionally) independent random variables. Let $T_1,T_2,\dots$ be i.i.d.~random variables with law as in~\eqref{eq:iid}, representing the cluster cardinalities. Further let $X_i, Y_i$ be random variables with the following properties: 
 The $Y_i$'s are i.i.d. with law $Y_i \sim \mathrm{Exp}(\beta p_\beta)$. They are also independent of the $T_i$'s and the $X_i$'s. 
 The $X_i$'s are i.i.d. and satisfy 
	$$
		\P(X_i \in B \mid T_i =k) = \frac{1}{Q_k^\ssup{R}(\beta,p_\beta)} \int_{\R_+^{k-1}} \1_B(z_1+\cdots + z_{k-1})\e^{- \beta [U_k(\vect{z}) + p_\beta \sum_{j=1}^{k-1} z_j] } \dd \vect{z}
	$$
for all $k \in \N$ and measurable $B\subset \R_+$, and $\P(X_i =0 \mid T_i =1) = 1$. Notice $\E[X_i \mid T_i =k] = L_k^\ssup{R}(\beta,p_\beta)$. Let 
\be \label{eq:lengthrv}
	\Lambda_n = X_1 + (R+ Y_1) + X_2 +\cdots + (R+Y_{n-1}) + X_n. 
\ee
Then $\Lambda_n$ represents the system length conditional on the event $\{M_N=n\}$ that there are $n$ clusters, neglecting the effect of interactions across cracks. 

\begin{lemma} \label{lem:craleldp}
	Under the assumptions of Theorem~\ref{thm:gibbsfe},
	 there exists $c>0$ such that for all sufficiently large $\beta$,
	$$
\e^{\lambda_\beta n} \P\Bigl(\Bigl|\frac{1}{n}\sum_{i=1}^{n} Y_i-\frac{1}{\beta p_\beta}\Bigr| \geq \frac{c \sqrt{\lambda_\beta}}{\beta p_\beta} \Bigr) \leq 2 \e^{- n \lambda_\beta},\qquad n\in\N.
$$ 
\end{lemma}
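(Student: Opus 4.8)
The plan is to reduce the estimate to a standard Cram\'er--Chernoff bound for a sum of i.i.d.\ exponential random variables. First I would rescale: setting $W_i := \beta p_\beta Y_i$ turns the $Y_i$ into i.i.d.\ standard (rate one) exponential variables, and
\[
	\frac1n\sum_{i=1}^n Y_i - \frac{1}{\beta p_\beta} = \frac{1}{\beta p_\beta}\Bigl(\overline W_n - 1\Bigr),\qquad \overline W_n := \frac1n\sum_{i=1}^n W_i .
\]
Hence the event in the statement is exactly $\{|\overline W_n - 1|\geq c\sqrt{\lambda_\beta}\}$, and, dividing the desired inequality by $\e^{\lambda_\beta n}$, the claim is equivalent to
\[
	\P\bigl(|\overline W_n - 1|\geq c\sqrt{\lambda_\beta}\bigr)\leq 2\,\e^{-2n\lambda_\beta},\qquad n\in\N .
\]
When $\lambda_\beta = 0$ (i.e.\ $v$ compactly supported) this is trivial, so I may assume $\lambda_\beta > 0$.

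Next I would invoke the exponential Chebyshev inequality with the explicit cumulant generating function of a standard exponential, $\Lambda(t) := \log\E[\e^{t W_1}] = -\log(1-t)$ for $t < 1$, whose Legendre transform is $\Lambda^*(x) = x - 1 - \log x$, $x > 0$. This gives, for every $n\in\N$ and $\eps\in(0,1)$,
\[
	\P(\overline W_n \geq 1+\eps) \leq \e^{-n\Lambda^*(1+\eps)},\qquad \P(\overline W_n \leq 1-\eps)\leq \e^{-n\Lambda^*(1-\eps)} .
\]
A Taylor expansion, using $\Lambda^*(1+\eps) = \eps - \log(1+\eps)$ and $\Lambda^*(1-\eps) = -\eps - \log(1-\eps)$, shows that $\Lambda^*(1\pm\eps)\geq \tfrac14\eps^2$ for all $\eps\in(0,\tfrac34]$. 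Since $\lambda_\beta = O(\e^{-c\beta})\to 0$ by Lemma~\ref{lem:gapintegral}, for all sufficiently large $\beta$ we have $c\sqrt{\lambda_\beta}\leq\tfrac34$, and a union bound then gives $\P(|\overline W_n - 1|\geq c\sqrt{\lambda_\beta})\leq 2\,\e^{-nc^2\lambda_\beta/4}$. Choosing the constant $c$ with $c^2\geq 8$ (e.g.\ $c = 3$) makes the right-hand side at most $2\,\e^{-2n\lambda_\beta}$, as required.

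The estimate is uniform in $n\in\N$ since the Chernoff bound is, so no smallness of $n$ has to be excluded; the parameter $\beta$ enters only through $\lambda_\beta\to 0$. The sole point requiring a little care is this last one --- ensuring that the window $|\eps|\leq c\sqrt{\lambda_\beta}$ on which one applies the quadratic lower bound $\Lambda^*(1\pm\eps)\geq\eps^2/4$ indeed shrinks into the admissible range --- but this is immediate from $\lambda_\beta\to 0$, so there is no genuine obstacle here.
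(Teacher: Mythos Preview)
Your proof is correct and follows essentially the same approach as the paper: both apply the Cram\'er--Chernoff (exponential Markov) bound to the sum of i.i.d.\ exponentials and then choose the deviation level $\eps$ as a suitable multiple of $\sqrt{\lambda_\beta}$. The only cosmetic difference is that you first rescale to standard exponentials and work with the explicit rate function $\Lambda^*(x)=x-1-\log x$ together with the quadratic lower bound $\Lambda^*(1\pm\eps)\geq\eps^2/4$, whereas the paper works directly with the original variables, expands $-\log(1-s)\leq s+\tfrac{c}{2}s^2$ for $|s|\leq\tfrac12$, and then optimizes over the tilting parameter by hand.
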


\begin{proof}
	We use Markov's inequality: for $0 \leq t <\beta p_\beta/2$ and some $\beta$-independent constant $c>0$, we have 
	\begin{align*}
	\P\Bigl( \sum_{i=1}^n Y_i \geq \frac{n}{\beta p_\beta} (1 + \eps) \Bigr)
		 & \leq \e^{ - n t (1+\eps)/\beta p_\beta)}	\prod_{i=1}^n \E[\e^{t Y_i}]
		  = \e^{- n t (1+\eps) / (\beta p_\beta)} \Bigl(1- \frac{t}{\beta p_\beta}\Bigr)^{-n}\\
		  & \leq \exp\Bigl(  n \Bigl( - \eps \frac{t}{\beta p_\beta}+ \frac{1}{2} c  \bigl(\frac{t}{\beta p_\beta}\bigr)^2\Bigr)\Bigr).
	\end{align*}
	In the last line we have estimated $- \log (1-s) \leq s + \frac{c}{2} s^2$ for $|s|\leq 1/2$. Choosing $t = \beta p_\beta \eps / c$ (we may assume without loss of generality that $\eps/c \leq 1/2$), we obtain the upper bound $\exp(- n \eps^2 / (2c))$. A similar argument shows 
	$$ 
		\P\Bigl( \sum_{i=1}^n Y_i \leq \frac{n}{\beta p_\beta} (1 - \eps) \Bigr)
		\leq \e^{- n \eps^2 / (2c)}. 
	$$ 
	To conclude, we choose $\eps = \sqrt{4 \lambda c}$. 
\end{proof}

\begin{lemma} \label{lem:clulemean}
	Under the assumptions of Theorem~\ref{thm:gibbsfe}, $ \E[X_i] = (1+o(1)) \frac{a}{q_\beta}$ as $\beta \to \infty$. 
\end{lemma}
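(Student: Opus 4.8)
The plan is to condition on the cluster cardinality $T_i$ and then feed in the two parts of Lemma~\ref{lem:cluster-lengths} together with the known asymptotics of $\E[T_i]$. Since $\E[X_i\mid T_i=k]=L_k^\ssup{R}(\beta,p_\beta)$ and the law of $T_i$ is the one from \eqref{eq:iid} with $q=q_\beta$, $f=f_\beta$ (for which $\E[T_i]<\infty$ by Lemma~\ref{lem:geom1}), the tower property gives
\[
	\E[X_i] = \sum_{k=1}^\infty L_k^\ssup{R}(\beta,p_\beta)\,\P(T_i=k) = \E\bigl[L_{T_i}^\ssup{R}(\beta,p_\beta)\bigr].
\]

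First I would invoke Lemma~\ref{lem:cluster-lengths}(b): for $\beta$ large there is a constant $c>0$, \emph{independent of $k$}, with $|L_k^\ssup{R}(\beta,p_\beta)-k\,\ell^\ssup{R}(\beta,p_\beta)|\leq c\beta$. Because this bound is uniform in $k$ it may be pulled out of the expectation over $T_i$, and using $\sum_k\P(T_i=k)=1$ one obtains $\E[X_i]=\ell^\ssup{R}(\beta,p_\beta)\,\E[T_i]+O(\beta)$ as $\beta\to\infty$.

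Next I would substitute the asymptotics of the two remaining factors. Lemma~\ref{lem:cluster-lengths}(a) gives $\ell^\ssup{R}(\beta,p_\beta)=a+o(1)$, and Lemma~\ref{lem:geom1}, applied with $q=q_\beta$ and $\eps=\eps_\beta$ (the latter tending to $0$ by Corollary~\ref{cor:esmall}), gives $\E[T_i]=q_\beta^{-1}(1+q_\beta+O(\eps_\beta))=q_\beta^{-1}(1+o(1))$. Hence $\E[X_i]=(a+o(1))\,q_\beta^{-1}(1+o(1))+O(\beta)=\tfrac{a}{q_\beta}(1+o(1))+O(\beta)$. To finish, I would note that $q_\beta=\e^{-\beta e_\mathrm{surf}/2+o(\beta)}$ by Lemma~\ref{lem:qbetasy}, so that $a/q_\beta$ is exponentially large in $\beta$ whereas the additive error $O(\beta)$ is only polynomial; therefore $O(\beta)=o(a/q_\beta)$ and $\E[X_i]=(1+o(1))\,a/q_\beta$, as claimed.

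There is no serious obstacle in this argument; the points that deserve attention are only (i) that the error term in Lemma~\ref{lem:cluster-lengths}(b) is uniform in $k$, so that it survives the averaging against the law of $T_i$, (ii) that $\E[T_i]$ is finite with the stated leading order (renewal theory, cf.\ Lemma~\ref{lem:geom1}), and (iii) that the exponential smallness of $q_\beta$ swamps the polynomial correction $O(\beta)$, so that the additive error is genuinely negligible relative to $a/q_\beta$.
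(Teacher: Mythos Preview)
Your argument is correct and in fact more direct than the paper's. Both proofs rest on the same two ingredients --- Lemma~\ref{lem:cluster-lengths} for the cluster lengths and Lemma~\ref{lem:geom1} for $\E[T]$ --- but they apply them differently. The paper rewrites $\E[X_i]=\E[T]\sum_k \P(\widetilde T=k)\,L_k^\ssup{R}/k$ with the size-biased variable $\widetilde T$, then replaces $\widetilde T$ by the size-biased geometric $\widetilde G$ using the total variation estimate of Lemma~\ref{lem:geom1} (exploiting only the crude bound $L_k^\ssup{R}/k\le R$), and finally splits the remaining sum at $k_1\beta$ to show it approaches $\ell^\ssup{R}(\beta,p_\beta)$. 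You bypass the size-biased detour entirely: you use the full strength of the uniform bound in Lemma~\ref{lem:cluster-lengths}(b), namely $|L_k^\ssup{R}-k\,\ell^\ssup{R}|\le c\beta$ for all $k$, which immediately gives $\E[X_i]=\ell^\ssup{R}(\beta,p_\beta)\,\E[T]+O(\beta)$, and then observe that the additive $O(\beta)$ is negligible against the exponentially large $a/q_\beta$. Your route is shorter and needs neither the size-biased comparison nor the sum-splitting; the paper's route, on the other hand, would still go through under the weaker hypothesis $L_k^\ssup{R}/k\to\ell^\ssup{R}$ without a uniform rate.
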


\begin{proof}
	We have 
	$$ \E[X_i] = \E [T] \sum_{k=1}^\infty  \P(\widetilde T=k) \frac{L_k^\ssup{R}(\beta,p_\beta)}{k},$$ 
	where $\widetilde T$ is the size-biased variable. Lemma~\ref{lem:geom1} tells us that $\E[T]\sim 1/q_\beta$. 
	 Since $L_k^\cl/k \leq R$ for all $k\in \N$, Lemma~\ref{lem:geom1} also shows 
	\bes
		\Bigl|\sum_{k=1}^\infty \bigl( \P(\widetilde T= k) - \P(\widetilde G=k) \bigr )\frac{L_k^\cl(\beta,p_\beta)}{k}\Bigr| \leq R || \mathcal{L}(\widetilde T)- \mathcal{L}(\widetilde G) ||_\mathrm{TV} = O(\eps_\beta) \to 0. 
	\ees
		By Lemma~\ref{lem:cluster-lengths}, for every fixed $k_1\in \N$, 
		\bes
			\Bigl|\sum_{k=1}^\infty \P(\widetilde T =k) \frac{L_k^\cl(\beta,p_\beta)}{k} - \ell^\ssup{R}(\beta,p_\beta)  \Bigr| 
				\leq c \beta\, \P (\widetilde G\leq k_1 \beta) + \frac{c}{k_1}. 
		\ees
		Since $q_\beta\to 0$ exponentially fast, we have
		\bes
			\beta\, \P(\widetilde G\leq k_1\beta) = \beta \sum_{k=1}^{\lfloor k_1\beta\rfloor} \frac{kq_\beta^2}{(1+q_\beta)^{k+1}} = O(\beta^2 q_\beta) \to 0 \quad (\beta\to \infty).
		\ees	
		We let first $\beta\to \infty$, then $k_1\to \infty$, and find 
		altogether $\E[X_i] \sim \frac{1}{q_\beta} \ell^\ssup{R}(\beta,p_\beta)$, and we conclude with Lemma~\ref{lem:cluster-lengths}(a).
\end{proof}

\begin{lemma}\label{lem:cluleldp}
	Under the assumptions of Theorem~\ref{thm:gibbsfe}, there exists $c>0$ such that for all sufficiently large $\beta$,
		$$ 
		\e^{\lambda_\beta n} \P\Bigl(\Bigl|\frac{1}{n}\sum_{i=1}^{n} X_i- \E[X_1]\Bigr| \geq \frac{c \sqrt{\lambda_\beta}}{q_\beta} \Bigr) \leq 2 \e^{- n \lambda_\beta},\qquad n\in\N.
		$$ 	
\end{lemma}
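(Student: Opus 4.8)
The plan is to establish a sub-Gaussian estimate for $X_1-\E[X_1]$ with variance proxy of order $q_\beta^{-2}$ and then run a Chernoff argument, exactly as in the proof of Lemma~\ref{lem:craleldp}. The starting point is the identity $\E[\e^{t X_i}\mid T_i=k]=Q_k^\ssup{R}(\beta,p_\beta-t/\beta)/Q_k^\ssup{R}(\beta,p_\beta)$, immediate from the definition of the conditional law of $X_i$ (tilting by $\e^{t\sum z_j}$ is an exact pressure shift $p_\beta\mapsto p_\beta-t/\beta$). Writing $\Phi(t):=\log\E[\e^{tX_1}]$, this is finite and smooth for $|t|$ small relative to $q_\beta$, with $\Phi'(0)=\E[X_1]$ and $\Phi''(s)=\var_{\tilde\P_s}(X_1)$, where $\tilde\P_s$ is the exponential tilt of the law of $X_1$; conditionally on $T_1=k$ the measure $\tilde\P_s$ is the pressure‑shifted cluster law $\Q_k^\ssup{R,\beta,p_\beta-s/\beta}$, and $\tilde\P_s(T_1=k)\propto \P(T_1=k)\,Q_k^\ssup{R}(\beta,p_\beta-s/\beta)/Q_k^\ssup{R}(\beta,p_\beta)$. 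Taylor's formula with integral remainder then gives
\[
\log \E\bigl[\e^{t(X_1-\E[X_1])}\bigr]=\Phi(t)-t\Phi'(0)\le \frac{t^2}{2}\sup_{|s|\le|t|}\var_{\tilde\P_s}(X_1).
\]

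The main work is to show that there are $\delta,C''>0$, independent of $\beta$, with $\sup_{|s|\le\delta q_\beta}\var_{\tilde\P_s}(X_1)\le C''/q_\beta^2$ for all large $\beta$. I would use the law of total variance under $\tilde\P_s$, $\var_{\tilde\P_s}(X_1)=\tilde\E_s\bigl[\var_{\tilde\P_s}(X_1\mid T_1)\bigr]+\var_{\tilde\P_s}\bigl(L_{T_1}^\ssup{R}(\beta,p_\beta-s/\beta)\bigr)$. For the first term, $\var_{\tilde\P_s}(X_1\mid T_1=k)=\var_{\Q_k^\ssup{R,\beta,p_\beta-s/\beta}}(z_1+\cdots+z_{k-1})\le C\beta\,k$ uniformly in $k$ and in $|s|\le\delta q_\beta$: one expands the variance as a sum of covariances, bounds $|\cov(z_i,z_j)|\le R^2$ when $|i-j|$ is below a threshold $\ell_\beta=O(\beta)$ and by $\e^{c\beta}\e^{-\gamma|i-j|}$ otherwise via the decay‑of‑correlations estimate of \cite[Theorem~2.11]{jkst19} (uniform in $\beta$ and in the pressure near $0$, hence the role of Assumption~\ref{ass:m}), and picks $\ell_\beta$ so that $\e^{c\beta-\gamma\ell_\beta}\le 1$, exactly as in the proofs of Theorem~\ref{thm:codec} and Lemma~\ref{lem:cluster-lengths}(b); together with $\tilde\E_s[T_1]=O(1/q_\beta)$ this makes the first term $O(\beta/q_\beta)=o(1/q_\beta^2)$. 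For the second term I would use $|L_k^\ssup{R}(\beta,p)-k\ell^\ssup{R}(\beta,p)|\le C\beta$ (Lemma~\ref{lem:cluster-lengths}(b)) and the boundedness of $\ell^\ssup{R}(\beta,p)$ for $p$ near $0$ (Lemma~\ref{lem:cluster-lengths}(a)) to reduce it to $\ell^\ssup{R}(\beta,p_\beta-s/\beta)^2\var_{\tilde\P_s}(T_1)+O(\beta\sqrt{\var_{\tilde\P_s}(T_1)})+O(\beta^2)$, and then bound $\var_{\tilde\P_s}(T_1)=O(1/q_\beta^2)$: using $\P(T_1=k)=q_\beta(1+f_\beta(k))u_\beta^k$ and $Q_k^\ssup{R}(\beta,p_\beta-s/\beta)/Q_k^\ssup{R}(\beta,p_\beta)\le \e^{|s|C\beta}\e^{k(s\ell^\ssup{R}(\beta,p_\beta)+\frac12 s^2 C\beta)}$, one checks that $\tilde\P_s$ has geometric‑type tails with rate $\asymp q_\beta$ (invoking $q_\beta\beta\to0$, the bounds of Lemma~\ref{lem:ibound}, $\E[T_1]=q_\beta^{-1}(1+o(1))$ from Lemma~\ref{lem:geom1}, and that $\E[\e^{sX_1}]\ge \exp(\varphi_\beta(-|s|R))$ is bounded below by a positive constant uniformly for $|s|\le\delta q_\beta$). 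Controlling this tilted cluster‑count variance uniformly is the step I expect to be the main obstacle.

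Granted the above, $\log\E[\e^{t(X_1-\E[X_1])}]\le \tfrac{C''}{2q_\beta^2}t^2$ for $|t|\le\delta q_\beta$ and $\beta$ large, and the conclusion parallels Lemma~\ref{lem:craleldp}. By Markov's inequality applied to the i.i.d.\ centred sum, for $\eps_X:=c\sqrt{\lambda_\beta}/q_\beta$ and $0<t\le\delta q_\beta$,
\[
\P\Bigl(\tfrac1n\textstyle\sum_{i=1}^n X_i-\E[X_1]\ge \eps_X\Bigr)\le \exp\Bigl(-n\bigl(t\eps_X-\tfrac{C''}{2q_\beta^2}t^2\bigr)\Bigr),
\]
and likewise for the lower tail with $t$ replaced by $-t$. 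Since $\lambda_\beta\to0$ (Lemma~\ref{lem:gapintegral}), the maximiser $t^\ast=\eps_X q_\beta^2/C''=c\sqrt{\lambda_\beta}\,q_\beta/C''$ lies below $\delta q_\beta$ for $\beta$ large, and the exponent is at least $\eps_X^2 q_\beta^2/(2C'')=c^2\lambda_\beta/(2C'')$. Choosing the constant $c$ in the statement with $c^2\ge 4C''$ makes this $\ge 2\lambda_\beta$, so each one‑sided probability is at most $\e^{-2n\lambda_\beta}$, the two‑sided probability is at most $2\e^{-2n\lambda_\beta}$, and multiplying by $\e^{n\lambda_\beta}$ yields the claimed bound $2\e^{-n\lambda_\beta}$.
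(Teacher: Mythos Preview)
Your outline (sub-Gaussian bound on $X_1-\E[X_1]$ with variance proxy $\sim q_\beta^{-2}$, then Chernoff) matches the paper's, and your final Chernoff step is essentially the same. The difference is in how the sub-Gaussian bound is obtained, and here the paper is much shorter. The key observation you miss is the elementary pointwise domination
\[
  X_i \le R\,T_i \qquad\text{a.s.},
\]
since a cluster of cardinality $T_i$ has $T_i-1$ spacings, each in $[0,R]$. This immediately gives
\[
  \bigl|\E[\e^{tX_i}]-1-t\E[X_i]\bigr|\le \tfrac12 t^2\,\E\bigl[X_i^2\e^{|t|X_i}\bigr]\le \tfrac12 (tR)^2\,\E\bigl[T^2\e^{|t|RT}\bigr],
\]
and the right-hand side is bounded by $c/q_\beta^2$ for $|t|R\le\tau q_\beta$ by the already-established estimate~\eqref{eq:mocubo} from the proof of Lemma~\ref{lem:ilevel}. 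The sub-Gaussian bound $\log\E[\e^{tX_i}]\le t\E[X_i]+c't^2/q_\beta^2$ thus follows in two lines, with no recourse to correlation decay, tilted cluster laws, or the law of total variance.

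Your route would in principle work, but the step you yourself flag as ``the main obstacle'' --- controlling $\var_{\tilde\P_s}(T_1)$ uniformly in $|s|\le\delta q_\beta$ --- is left as a sketch, and executing it cleanly requires care (you need the normalising constant $\E[\e^{sX_1}]$ bounded below and a uniform second-moment bound on the tilted $T_1$). All of that machinery is rendered unnecessary by the single inequality $X_i\le RT_i$, which reduces the problem to moment bounds on $T$ that were already proved in Section~\ref{sec:aux}.
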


\begin{proof}
	We use Markov's exponential inequality as in the proof of Lemma~\ref{lem:craleldp}. We have,  for $|t|\ll q$, 
	\bes
		\bigl|\E[\e^{t X_i}] -1 - t\,\E[X_i] \bigr| \leq \frac{1}{2} t^2 \E\bigl[X_i^2 \e^{|t| X_i}\bigr] \leq \frac{1}{2} (t R)^2 \E[ T^2 \e^{|t| R T}]
	\ees
	Using~\eqref{eq:mocubo} with $c= 2 \sqrt c_\tau$ we find that 
	$$ 
	\log \E[\e^{tX_i}] \leq t \E[X_i]  + \frac{c^2 t^2 }{8 q_\beta^2}
	$$ 
	uniformly for small $q_\beta$ and $|t|\leq \tau q_\beta$ where $\tau \in (0,1)$ is fixed. 
	It follows that 
	\begin{align*} 
		\E[\e^{\lambda_\beta n}\1_{\{X_1+\cdots + X_n\geq n E[X_1] + n \frac{\eps}{q_\beta} \}}] 
		& \leq \e^{\lambda_\beta n } \e^{ - t n (\E[X_1] + \frac{\eps}{q_\beta })}\bigl(\E[\e^{t X_1}] \bigr)^n \\
		&\leq  \exp\Bigl( n \Bigl( \lambda_\beta +  c^2 \frac{t^2}{8 q_\beta^2}  - t \frac{\eps}{q_\beta}\Bigr)\Bigr). 
	\end{align*} 
	With $\eps = c \sqrt{\lambda_\beta}$ and $t = 4 q_{\beta}\sqrt{\lambda_\beta}/c$ the remaining part of the proof is analogous to Lemma~\ref{lem:craleldp} and is left to the reader. 
\end{proof}

\begin{proof}[Proof of Theorem~\ref{thm:gibbsfe}]
	By Lemma~\ref{lem:rep}, Eq.~\eqref{eq:gel} and Lemma~\ref{lem:tilted}, we have $g(\beta,p_\beta) = g^\cl(\beta,p_\beta) + \frac{1}{\beta}\log u_\beta + O(q_\beta\lambda_\beta /\beta)$. 
	By Lemma~\ref{lem:geom1} we have  $\log u_\beta =  - \log (1+ q_\beta + O(q_\beta \eps_\beta))$. The first identity in the asymptotic approximation of $g(\beta,p_\beta)$ follows. For the second identity, we note that 
	\bes
		g^\ssup{R}(\beta,p_\beta) - e_0^\ssup{R}(\beta) = \int_0^{p_\beta} \ell ^\ssup{R}(\beta,p)\dd p = (1+o(1)) a p_\beta
	\ees
	where we have used Lemma~\ref{lem:cluster-lengths}(a) and 
	 $e^\ssup{R}_0(\beta) = g^\cl(\beta,0)$. 
	 
	For the average spacing, we first note that a 
    reasoning analogous to Lemma~\ref{lem:rep} shows that for every $B\subset \R_+$, 
	\bes 
		\Q_N^\ssup{\beta, p_\beta}\Bigl( \Big\{\vect{z}\in \R_+^{N-1}\colon \sum_{j=1}^{N-1} z_j \in B\Big\}\,\Big|\, M_N = n + 1 \Bigr) \leq \e^{\lambda_\beta n} \P(\Lambda_{n+1} \in B)
	\ees
	with $\Lambda_n$ the random variable~\eqref{eq:lengthrv} and on the left-hand side $M_n$ stands for the number of clusters of $\vect{z}$. 	
	 Similarly to the proofs of Lemmas~\ref{lem:rep} and~\ref{lem:rep2}, we have
	\begin{align*}
		\Q_N^\ssup{\beta, p_\beta}\Bigl( \Big\{\vect{z}\in \R_+^{N-1}\colon \sum_{j=1}^{N-1} z_j\1_{\{z_j\leq R\}} \in B\Big\}\,\Big|\, M_N =  n +1 \Bigr)  & \leq \e^{\lambda_\beta n} \P\Bigl( \sum_{j=1}^{n+1}X_j \in B\Bigr),\\	
		\Q_N^\ssup{\beta, p_\beta}\Bigl( \Big\{\vect{z}\in \R_+^{N-1}\colon \sum_{j=1}^{N-1} z_j \1_{\{z_j>R\}} \in B\Big\}\,\Big|\, M_N =  n +1 \Bigr) &\leq \e^{\lambda_\beta n} \P\Bigl(\sum_{j=1}^{n} (R + Y_j) \in B\Bigr).	
	\end{align*}	
	In combination with Lemmas~\ref{lem:craleldp} and~\ref{lem:cluleldp}, this gives 
	\begin{align*}
		\Q_N^\ssup{\beta, p_\beta}\Bigl( \Big\{\vect{z}\in \R_+^{N-1}\colon \Bigl| \frac{1}{n+1}\sum_{j=1}^{N-1}  z_j\1_{\{z_j\leq R\}} - \E[X_1]\Bigr| \geq \frac{c\sqrt{\lambda_\beta}}{q_\beta} \Big\}\,\Big|\, M_N =  n +1 \Bigr)  & \leq 2 \e^{- (n+1) \lambda_\beta}, \\	
		\Q_N^\ssup{\beta, p_\beta}\Bigl( \Big\{\vect{z}\in \R_+^{N-1}\colon \Bigl|\frac 1 n\sum_{j=1}^{N-1} z_j \1_{\{z_j>R\}} - R - \frac1{\beta p_\beta}\Bigr|\geq \frac{c\sqrt{\lambda_\beta}}{\beta p_\beta}\,\Big|\, M_N =  n +1 \Bigr) &\leq 2\e^{-\lambda_\beta n}.	
	\end{align*}
	As a consequence, using the general inequality $\P(A^c\cap B^c) \geq 1 - \P(A) - \P(B)$, we get
	\begin{align*}	
	\Q_N^\ssup{\beta, p_\beta}\Bigl( \Big\{\vect{z}\in \R_+^{N-1}\colon& \Bigl| \sum_{j=1}^{N-1} z_j - (n+1) \E[X_1] - n \Bigl( R + \frac{1}{\beta p_\beta} \Bigr) \Bigr| \\ 
	&\leq (n+1) \frac{c \sqrt{\lambda_\beta}}{q_\beta} + n \frac{c\sqrt{\lambda_\beta}}{\beta p_\beta}\,\Big|\, M_N =  n +1 \Bigr) %\\
        \geq 1- 4 \e^{- \lambda_\beta n}.
	\end{align*} 
	By Theorem~\ref{thm:ge}, there exist $C>0$, $\tilde\delta_\beta = \max\{\delta_\beta,q_\beta\}>0$ with $\tilde\delta_\beta\to 0$ as $\beta\to \infty$ such that 
	$$\lim_{N\to \infty}\Q_N^\ssup{\beta, p_\beta} \bigl( |M_N - N q_\beta|\leq CN \tilde\delta_\beta q_\beta \bigr) = 1,$$
	i.e., $M_N/N\sim q_\beta (1+O(\tilde\delta_\beta))$ with a probability converging to $1$. Therefore with a probability converging to $1$ (under $\Q_N^\ssup{\beta,p_\beta}$, without conditioning on $M_N$), 
	\begin{align*} 
		\Bigl|\frac 1 N \sum_{j=1}^{N-1} z_j - \ell_0(\beta,p_\beta)\Bigr| 	&\leq \Bigl|\frac{M_N}{N} - q_\beta\Bigr| \, \ell_0(\beta,p_\beta)
			+ \frac{M_N}{N} c \sqrt{\lambda_\beta}\Bigl( \frac{1}{q_\beta}+ \frac{1}{\beta p_\beta}\Bigr)  + \frac{1}{\beta p_\beta N}\\
		&\leq O(\tilde\delta_\beta) \ell_0(\beta,p_\beta) 
			+ c\sqrt{\lambda_\beta} (1+O(\tilde\delta_\beta)) \Bigl( 1+ \frac{q_\beta}{\beta p_\beta}\Bigr)  + \frac{1}{\beta p_\beta N},  
	\end{align*} 
	where we have set $\ell_0(\beta,p_\beta) = q_\beta ( \E[X_1] + R + \frac{1}{\beta p_\beta})$. 
	On the other hand, because of the uniqueness and ergodicity with respect to shifts of the infinite volume Gibbs measure~\cite{georgii-book,jkst19}, standard results ensure that $L_N/N\to \ell(\beta,p_\beta)$ almost surely. It follows that 
	\[
		\Bigl|\ell(\beta, p_\beta) -  \ell_0(\beta,p_\beta)\Bigr|
		\leq O(\tilde\delta_\beta) q_\beta\E[X_1] + \frac{q_\beta}{\beta p_\beta} \bigl( O(\tilde\delta_\beta) + O(\sqrt{\lambda_\beta})\bigr)  + O(\sqrt{\lambda_\beta})
	\] 
	and thus 
	\[
		\ell(\beta,p_\beta) = (1+o(1)) q_\beta \E[X_1] + (1+o(1)) \frac{q_\beta}{\beta p_\beta}  + o(1).
	\] 
	To conclude, we use Lemma~\ref{lem:clulemean} for $\E[X_1]$ and Lemma~\ref{lem:qsurfcomp} for $q_\beta$ and we obtain the second inequality in Theorem~\ref{thm:gibbsfe}. 
\end{proof}

\section{Canonical ensemble} \label{sec:canonical}

Here we prove Theorems~\ref{thm:fe} and~\ref{thm:lowdens}. They are deduced from their analogues in the constant-pressure ensemble, Theorems~\ref{thm:ge} and Theorem~\ref{thm:gibbsfe}.

\subsection{Proof of Theorem~\ref{thm:fe} } \label{sec:canoproof1}

We suppress the $\ell$-dependence from the notation, abbreviate $p_\beta = p(\beta,\ell)$, and note the relations
\be \label{eq:ellp}
	\ell = \frac{\partial g}{\partial p}(\beta,p_\beta), \qquad f(\beta,\ell) = g(\beta,p_\beta)- \ell p_\beta,
\ee
which follow from Eqs.~\eqref{eq:pfrel}, \eqref{eq:gfrel}, \eqref{eq:pflgrel} and standard results on Legendre transforms. 

Before we prove Theorem~\ref{thm:fe}, we formulate a simple lemma on convex functions and their Legendre transforms whose proof is omitted. 

\begin{lemma} \label{lem:lego}
	Suppose $\varphi, \varphi_1, \varphi_2, \ldots : \R \to \R\cup\{\infty\}$ are convex functions whose restrictions to an interval $(a, b) \subset \R$ are strictly convex and continuously differentiable. If $\varphi_n\to \varphi$ pointwise on $(a,b)$, then also $\varphi'_n\to \varphi'$ pointwise on $(a,b)$ and for all $y\in \varphi'((a,b))$,
	\[
		\lim_{n\to \infty} \varphi_n^*(y) = \varphi^*(y). 
	\] 
\end{lemma}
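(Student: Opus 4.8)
The plan is to prove the two assertions separately, using only elementary properties of convex functions and of the Fenchel conjugate $\varphi^*(y)=\sup_{x\in\R}(xy-\varphi(x))$. First I would establish $\varphi_n'\to\varphi'$ pointwise on $(a,b)$ by a squeeze. Fix $x\in(a,b)$ and $h>0$ with $[x-h,x+h]\subset(a,b)$. Monotonicity of difference quotients of the convex functions $\varphi_n$ gives
\[
	\frac{\varphi_n(x)-\varphi_n(x-h)}{h}\ \le\ \varphi_n'(x)\ \le\ \frac{\varphi_n(x+h)-\varphi_n(x)}{h}.
\]
Letting $n\to\infty$ (using pointwise convergence at $x$ and $x\pm h$) and then $h\downarrow 0$ (using that $\varphi$ is differentiable at $x$), both bounds converge to $\varphi'(x)$, so $\varphi_n'(x)\to\varphi'(x)$.

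For the conjugates, let $y\in\varphi'((a,b))$ and pick $x_0\in(a,b)$ with $\varphi'(x_0)=y$, which is unique by strict convexity. Since $\varphi$ is convex on $\R$ and differentiable at $x_0$, we have $y\in\partial\varphi(x_0)$, so $x_0$ is a global maximiser of $x\mapsto xy-\varphi(x)$ and $\varphi^*(y)=x_0y-\varphi(x_0)<\infty$. The lower bound is immediate: $\varphi_n^*(y)\ge x_0y-\varphi_n(x_0)\to x_0y-\varphi(x_0)=\varphi^*(y)$, hence $\liminf_n\varphi_n^*(y)\ge\varphi^*(y)$.

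The main obstacle is the matching upper bound $\limsup_n\varphi_n^*(y)\le\varphi^*(y)$, since the hypothesis controls $\varphi_n$ only on $(a,b)$ and one must prevent the suprema defining $\varphi_n^*(y)$ from escaping to $\pm\infty$. Here I would fix $a<\gamma<\delta<x_0<\alpha<\beta<b$ and split the supremum over the three regions $x\le\gamma$, $\gamma\le x\le\beta$, $x\ge\beta$. On $x\ge\beta$, convexity of $\varphi_n$ yields the affine minorant $\varphi_n(x)\ge\varphi_n(\beta)+s_n(x-\beta)$ with $s_n=\tfrac{\varphi_n(\beta)-\varphi_n(\alpha)}{\beta-\alpha}$; by the first part of the lemma, $s_n\to\tfrac{\varphi(\beta)-\varphi(\alpha)}{\beta-\alpha}=\varphi'(\xi)$ for some $\xi\in(\alpha,\beta)$, and $\varphi'(\xi)>\varphi'(x_0)=y$ by strict convexity, so $y-s_n<0$ for $n$ large. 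Hence $\sup_{x\ge\beta}(xy-\varphi_n(x))\le\beta y-\varphi_n(\beta)$, whose $\limsup$ is $\le\beta y-\varphi(\beta)\le\varphi^*(y)$. Symmetrically, $\limsup_n\sup_{x\le\gamma}(xy-\varphi_n(x))\le\gamma y-\varphi(\gamma)\le\varphi^*(y)$. On the compact interval $[\gamma,\beta]\subset(a,b)$ the (finite) convex functions $\varphi_n$ converge to $\varphi$ uniformly, so $\sup_{[\gamma,\beta]}(xy-\varphi_n(x))\to\sup_{[\gamma,\beta]}(xy-\varphi(x))=x_0y-\varphi(x_0)=\varphi^*(y)$, using that $x_0$ is the unique maximiser there by strict concavity of $x\mapsto xy-\varphi(x)$ on $(a,b)$.

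Taking the maximum over the three regions gives $\limsup_n\varphi_n^*(y)\le\varphi^*(y)$, which together with the lower bound yields $\varphi_n^*(y)\to\varphi^*(y)$, completing the proof. I expect the only genuine difficulty to be the bookkeeping in the three-region estimate above; everything else is routine.
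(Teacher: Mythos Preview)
Your proof is correct. The paper itself omits the proof of this lemma, describing it as ``a simple lemma on convex functions and their Legendre transforms whose proof is omitted,'' so there is nothing to compare against; your argument supplies exactly the kind of elementary justification the authors presumably had in mind.

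Two minor remarks. First, when you write that $s_n\to\tfrac{\varphi(\beta)-\varphi(\alpha)}{\beta-\alpha}$ ``by the first part of the lemma,'' this is a slight overstatement: the convergence of the secant slope $s_n$ follows directly from pointwise convergence of $\varphi_n$ at $\alpha$ and $\beta$, not from the derivative convergence. Second, the auxiliary point $\delta$ in your chain $a<\gamma<\delta<x_0<\alpha<\beta<b$ is indeed needed (for the secant on the left-hand region, symmetric to the role of $\alpha$ on the right), so it is good that you included it; you may want to make its use explicit when you write out the ``symmetrically'' step. The appeal to uniform convergence of convex functions on compact subintervals of $(a,b)$ is standard (e.g.\ Rockafellar, \emph{Convex Analysis}, Theorem~10.8) and applies because $\varphi$ is finite on $(a,b)$.
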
 

Proof of (a): Recall $W(r) = \sum_{k=1}^m v(kr)$ and let $p^*:= |v(z_{\max})|/z_{\max}$. We apply Lemma~\ref{lem:lego} to $\varphi_\beta(p):= - g(\beta,p)$ and $\varphi(p):= - \inf_{r>0} (W(r) + p r) = W^*(-p)$ on the interval $(0,p^*)$, where $g(\beta,\cdot)$ and $W$ have been set to $+\infty$ for non-positive arguments. The function $\varphi_\beta$ is strictly convex and continuously differentiable because $p\mapsto g(\beta,p)$ is strictly concave and continuously differentiable, as noted in Section~\ref{sec-conspress}.  It follows from Assumption~\ref{ass:v1}(i)--(iv) that $W$ is strictly convex and smooth in $(z_{\min}, z_{\max})$, $W(z_{\max}) + p z_{\max} \leq v(z_{\max}) + p z_{\max} < 0$ for $p < p^*$ and $W(r)+pr > 0$ for $r \leq z_{\min}$ and $p \ge 0$. Thus, for $p\in [0, p^*]$, there is a unique $a(p)$ with $\varphi(p) = W(a(p)) + p a(p)$, and $a(p) \in (z_{\min},z_{\max})$ satisfies $W'(a(p)) + p =0$. Set $\ell^*:= a(p^*)$. Then $a(0) = a$ and $a(p)\in (\ell^*,a)$ for all $p\in (0,p^*)$. In particular, $\varphi$ is smooth and strictly convex on $(0, p^*)$ with $\varphi'(p) = - a(p)$. By \cite[Theorem 2.5]{jkst19}, we have
\[
	\lim_{\beta \to \infty} g(\beta,p) = \inf_{r>0} \bigl( W(r) + p r\bigr)
\] 
for all $p\in (0,p^*)$, hence $\varphi_\beta\to \varphi$ on $(0,p^*)$. Also notice 
\[
	\varphi_\beta ^*(-\ell) = \sup_{p>0} \bigl(- p \ell + g(\beta,p))\bigr) = f(\beta,\ell)
\] 
by~\eqref{eq:gfrel}. Lemma~\ref{lem:lego} thus implies $f(\beta,\ell) = \varphi^*(-\ell) = W^{**}(\ell) = W(\ell)$ for all $\ell \in (\ell^*, a)$. Another application of Lemma~\ref{lem:lego} in combination with \eqref{eq:pflgrel} then also yields 
\[ 
    p(\beta,\ell) 
    = - \frac{\partial f}{\partial \ell}(\beta,\ell) 
    \to - W'(\ell)
\]
for $\ell \in (\ell^*, a)$. 

Proof of (b): Pick $\ell>a$. Eq.~\eqref{eq:ellp}, Theorem~\ref{thm:gibbsfe} and the definition~\eqref{eq:qdef} of $q_\beta$ yield 
$$
	\beta p_\beta = \frac{\exp( -\beta e_\mathrm{surf}^\ssup{R}(\beta)/2)}{\sqrt{\ell - a}} (1+o(1))\qquad\mbox{and}\qquad q_\beta =  (\ell - a) \beta p_\beta (1+o(1))
$$
and 
$$
 	f(\beta,\ell)  = g(\beta,p_\beta) - p_\beta \ell 
 			= e_0^\ssup{R}(\beta) - p_\beta (\ell -a) - \frac{q_\beta}{\beta}(1+o(1)) +o(p_\beta) = e_0^\ssup{R}(\beta) -  2p_\beta (\ell -a)(1+o(1)). 
 			$$ 
We plug in the asymptotics of $p_\beta$ and obtain Theorem~\ref{thm:fe}(b). 

\subsection{Proof of Theorem~\ref{thm:lowdens}} \label{sec:canoproof2} 

Let $p>0$ and $n\in \N$. Recall $z_{N-1} =L-\sum_{j=1}^{N-2} z_j$.
Write $\boldsymbol{z} = (z_1,\ldots,z_{N-1})$.  Then 
$$
\begin{aligned} 
	\Q_N^\ssup{\beta}(M_N = n) 
		&  =\frac{1}{Q_N(\beta,p)}\int_{0}^\infty \e^{-\beta p L}\Bigl( \int_{\Delta_{N,L}} \1_{\{M_N=n\}}(\vect{z}) \e^{-\beta U_N(\vect{z})} \dd z_1\cdots \dd z_{N-2}\Bigr) \dd L \\
		&\geq \frac{1}{Q_N(\beta,p)}\int_{\ell N}^{\ell N +1} \e^{-\beta p L}\Bigl( \int_{\Delta_{N,\ell N}} \1_{\{M_N=n\}}(\vect{z}) \e^{-\beta U_N(\vect{z})} \dd z_1\cdots \dd z_{N-2}\Bigr) \dd L  \\
		&  \geq \frac{Z_N(\beta,\ell N)}{Q_N(\beta,p)} \e^{-\beta p (\ell N+1)}\, \P_{N,\ell N}^\ssup{\beta}(M_N = n). 
\end{aligned}
$$ 
Choosing $p= p_\beta= p(\beta,\ell)$, we have 
\[
	\lim_{N\to \infty}\frac{1}{\beta N}\log \Bigl( \frac{Z_N(\beta,\ell N)}{Q_N(\beta,p)} \e^{-\beta p (\ell N+1)} \Bigr) = - f(\beta,\ell) + g(\beta,p) - p \ell =0.
\]
Let $q_\beta$ be as in Eq.~\eqref{eq:qdef} and $\delta_\beta$, $c,C,\beta_0$ as in Theorem~\ref{thm:ge}. Then for $\beta\geq \beta_0$, 
\bes
\begin{aligned}
	&\limsup_{N\to \infty}\frac{1}{N} \log \P_{\beta,\ell N}^\ssup{\beta}\Bigl(\Bigl|\frac{M_N}{N} - q_\beta\Bigr|\geq C q_\beta \max\{q_\beta,\delta_\beta\} \Bigr) \\
        &\quad\leq \limsup_{N\to \infty}\frac{1}{N} \log \Q_{N}^\ssup{\beta}\Bigl(\Bigl|\frac{M_N}{N} - q_\beta\Bigr|\geq C q_\beta \max\{q_\beta,\delta_\beta\} \Bigr)	\leq - c q_\beta \delta_\beta^2. 
\end{aligned}
\ees
As $g_\mathrm{surf}^\cl(\beta,p_\beta) = e_\mathrm{surf}^\ssup{R}(\beta) + o(\beta^{-1})$ by Lemma~\ref{lem:qsurfcomp}, we have $q_\beta = (1+o(1)) q_{\beta,\ell}$. The first estimate in Theorem~\ref{thm:lowdens} follows. 

The statements on the empirical distributions are deduced in a similar fashion from the corresponding empirical distributions in Theorem~\ref{thm:ge}. For the geometric distributions, the proof is easily completed with the observation 
\bes
	\sum_{k=1}^\infty \Bigl| \frac{q_\beta}{(1+q_{\beta})^k} - \frac{q_{\beta,\ell}}{(1+q_{\beta,\ell})^k}\Bigr|\to 0. 
\ees

%%%%%%%% End stuff 

% % % %  bibliography
\bibliographystyle{amsalpha}
\bibliography{lenjon}

\end{document}